\documentclass[aps,pra,10pt,showpacs,notitlepage,onecolumn,superscriptaddress,nofootinbib]{revtex4-1}
\usepackage[tmargin=1in, bmargin=1in, lmargin=1.5in, rmargin=1.5in]{geometry}
\usepackage{amsmath, amssymb, amsthm}
\usepackage{graphicx}
\usepackage[svgnames]{xcolor}
\usepackage{enumitem}
\usepackage{titlesec}
\usepackage{datetime}
\usepackage{hyperref}
\usepackage{import}
\usepackage{mathtools}
\usepackage{tikz}
\usepackage{epigraph}

\usepackage{makecell} % For line breaks within table.
\setcellgapes{2pt}

\usepackage{twemojis}

\definecolor{crimson}{RGB}{186,0,44}
\definecolor{moss}{RGB}{0, 186, 111}

\hypersetup{
    colorlinks,
    linkcolor={crimson},
    citecolor={crimson},
    urlcolor={crimson}
}

\theoremstyle{definition}
\newtheorem{definition}{Definition}[section]
\newtheorem{lemma}{Lemma}[section]
\newtheorem{theorem}{Theorem}[section]
\newtheorem{corollary}{Corollary}[theorem]
\newtheorem{remark}{Remark}[section]

\newtheorem{example}{Example}[section]

\newtheorem{problem}{Problem}[section]
\newtheorem{proposition}{Proposition}[section]

\newtheorem*{definition*}{Definition}
\newtheorem*{lemma*}{Lemma}
\newtheorem*{theorem*}{Theorem}
\newtheorem*{corollary*}{Corollary}

\begin{document}

%%%%%%%%%%%%%%%%%%%%%%%%%%%%%%%%%%%%%%%%%%%%%
\title{Quantum element-wise transforms}
\author{Zane M.~Rossi}
\email[]{zmr@g.ecc.u-tokyo.ac.jp}
\affiliation{Department of Physics, Graduate School of Science, The University of Tokyo, Hongo 7-3-1, Bunkyo-ku, Tokyo 113-0033, Japan}
\author{Rahul Sarkar}
\email[]{rsarkar@berkeley.edu}
\affiliation{Department of Mathematics, University of California, Berkeley, CA 94720, USA}

% %%%%%%%%%%%%%%%%%%%%%%%%%%%%%%%%%%%%%%%%%%%%%
\begin{abstract} 
    \noindent Quantum algorithms for basic numerical linear algebraic tasks have proven essential for translating diverse problems to a unified quantum computational context. 
    Many of these tasks---e.g., applying a polynomial function to the spectrum of a matrix embedded in a unitary process (a so-called \emph{block encoding}), or taking linear combinations of block encodings---are well-addressed by techniques like quantum singular value transformation (QSVT) or linear combination of unitaries (LCU).
    However, there exist useful matrix transforms whose realization by existing quantum algorithms is unclear or inefficient.
    In this work we construct improved quantum algorithms for some of these transforms, the simplest of which is a \emph{polynomial function applied element-wise}.
    We show the space required to compute \emph{quantum element-wise transforms} can be reduced exponentially in the degree of the applied function compared to prior work, and raise and rectify errors in previous constructions. 
    We present our algorithms alongside applications to machine learning, simulation, and signal processing.
\end{abstract}

%%%%%%%%%%%%%%%%%%%%%%%%%%%%%%%%%%%%%%%%%%%%%
\maketitle

\vspace*{-1.5\baselineskip} % For more pleasant spacing.

% Inclusion for clarity of review.
\tableofcontents

%%%%%%%%%%%%%%%%%%%%%%%%%%%%%%%%%%%%%%%%%%%%%
\section{Introduction}

% \begin{epigraphs}
%     \qitem{``Numerical linear algebra is really functional analysis, but with the emphasis always on practical algorithmic ideas rather than mathematical technicalities...''}{\emph{Numerical Linear Algebra}, Trefethen \& Bau \cite{tb_num_alg_22}}
% \end{epigraphs}

\noindent The field of classical algorithms---and the art of running these algorithms on real-world devices---reveals, wherever one stoops to flip over a rock, a rich sub-structure of numerical linear algebra \cite{tref_approx_19, tb_num_alg_22}. The modern classical numerical linear algebraic toolkit indeed encourages most researchers, as a first attempt, to linearize their problem determine whether existing methods yield acceptable solutions or useful insights. The success of this scheme is self-evident; matrix methods pervade signal processing, differential equations, statistics, machine learning, and the discrete analogues of huge swaths of functional analysis.

The field of quantum algorithms has warmed to a similar scheme. One manifestation is the increasing use of the \emph{block encoding} (Def.~\ref{def:be_formal}) input model, which embeds a matrix as a sub-block of a larger unitary process run on a quantum computer. The permissible transformations of these block-encoded matrices (which represent non-linear subsystem dynamics) inform the methods and resource complexities for mimicking a diverse family of known quantum algorithms within a unified, numerically well-understood framework \cite{gslw_19, mrtc_unification_21}. Crucially, for certain common transformations, and for block-encoded matrices obeying certain constraints, these algorithms appear to be able to prepare states, sample from distributions, and solve discrete problems far more efficiently than their classical cousins (e.g., for certain $N$-dimensional problems, the required space or time for a quantum algorithm can scale as $\text{polylog}(N)$).

In this work we expand the quantum numerical linear algebraic toolkit to matrix transformations not easily captured by current quantum algorithms. Concretely, we provide improved circuits for computing non-standard matrix products (e.g., the \emph{element-wise product}), achieving exponential (in the degree of the product) space reductions compared with prior work. These non-standard products, while less studied in quantum computing, appear ubiquitously in classical signal processing, machine learning, and statistics (see Sec.~\ref{subsec:prior_work}). This provides a new tool, with resource requirements on par with other block encoding algorithms, for efficiently computing non-linear transformations of high-dimensional data.

We construct and explicate quantum algorithms for computing various matrix products under differing input assumptions, but use the rest of this section to define the simplest example (Sec.~\ref{subsec:notation} introduces our notational conventions), highlighting how our matrix transformations differ from those usually considered for algorithms like quantum singular value transformation (QSVT) \cite{gslw_19}; a brief introduction to the suite of previously known block-encoding-based quantum algorithms (many of whose methods appear in our work) appears in Sec.~\ref{subsec:be_basics}.

\begin{definition}[Singular-value transformation of a matrix] \label{def:sv_func}
    Let $M \in \mathbb{C}^{n\times n}$ a matrix with $\lVert M \rVert \le 1$; then the \emph{singular-value transformation} (SVT) of $M$ according to a function (typically continuous) $f: [0,1] \rightarrow [0,1]$ is:
    \begin{equation}
        f^{\text{(SV)}}(M)
        \equiv
        Uf(\Sigma) V^{\dagger}
        =
        U
        \begin{bmatrix}
            f(\xi_{0}) & & \\
            & \ddots & \\
            & & f(\xi_{n-1})\\
        \end{bmatrix}
        V^\dagger, \quad \Sigma = \text{diag}(\xi_0, \xi_1, \dots, \xi_{n-1}),
    \end{equation}
    where $M = U\Sigma V^\dagger$ is the singular value decomposition (SVD), and where we have restricted to the square case (otherwise one can pad $\Sigma$ with additional rows or columns of zeros, and increase the dimensions of the unitary $U$ and $V$ to keep the product conformable).
\end{definition}

QSVT, whose main results are introduced and employed in Sec.~\ref{sec:main_const}, provides a method to compute the SVT of exponentially large matrices using simple circuits whose depth scales as the degree of a polynomial approximating $f$ to a desired precision. In our work we will instead focus on a well-defined, commonly applied, but comparatively less well-studied matrix transformation wherein a function is applied \emph{element-wise} to $M$. Before defining this transformation we quickly disambiguate between two types of matrix product.

\begin{definition}[The matrix and element-wise products] \label{def:matrix_prod}
    Throughout this work we refer to a number of non-standard products between matrices (a table for our notation is given in Sec.~\ref{subsec:notation}). The most well-known of these is the standard \emph{matrix product}, which takes $A \in \mathbb{C}^{n\times m}$ and $B \in \mathbb{C}^{m\times p}$ and returns $C = AB \in \mathbb{C}^{n\times p}$ following $C_{ij} = A_{ik} B_{kj}$, where repeated indices are summed over. Less commonly, and here referred to as the \emph{element-wise product}, one can take two matrices $A$ and $B$ of identical shape and compute $C_{ij} = A_{ij} B_{ij}$, denoted\footnote{Here, and confusingly across literature, the repeated indices $i, j$ are \emph{not} summed over, but merely denote component-wise access. Where possible we try to avoid Einstein summation notation.} $C = A \circ B$.
\end{definition}

The element-wise product, like its matrix product cousin, is associative and distributes over matrix addition; moreover, it is also commutative and an inverse for $\circ$ exists for a matrix $A$ if and only if all of its elements are non-zero. These properties, among numerous related ones, are given treatment in Ch.~5 of the standard reference \cite{hj_topics_91}. Definitions or results we make heavy use of related to element-wise products are glossed in Appx.~\ref{appx:ewt_prop}. This product can be immediately extended, given its pleasant properties, to the notion of an \emph{element-wise function} applied to a matrix.

\begin{definition}[Element-wise function of a matrix] \label{def:elem_func}
    Let $M \in \mathbb{C}^{n\times m}$; then the element-wise application of a function $f: \mathbb{C} \rightarrow \mathbb{C}$ to $M$, often denoted $f^{\circ}(M)$, is simply:
        \begin{equation}
            \begin{bmatrix}
                a_{00} & a_{01} & \cdots & a_{0,m-1}\\
                a_{10} & a_{11} & & \vdots \\
                \vdots & & \ddots &\\
                a_{n-1,0} & \cdots & & a_{n-1,m-1}
            \end{bmatrix}
            \mapsto
            \begin{bmatrix}
                f(a_{00}) & f(a_{01}) & \cdots & f(a_{0,m-1})\\
                f(a_{10}) & f(a_{11}) & & \vdots \\
                \vdots & & \ddots &\\
                f(a_{n-1,0}) & \cdots & & f(a_{n-1,m-1})
            \end{bmatrix}.
        \end{equation}
    Equivalently, we could express this element-wise function using index notation: $a_{ij} \mapsto f(a_{ij})$.
\end{definition}

It is this transformation (and its constituent parts and extensions) that will be the main subject of this work. We will generically refer to quantum algorithms achieving this and related transforms of block-encoded matrices as \emph{quantum element-wise transforms} (QEWTs),\footnote{Pronounced like \emph{newt} \texttwemoji{lizard}.} joining an already-large family of acronyms and initialisms for block encoding algorithms (the most relevant of these appear in Sec.~\ref{subsec:prior_work}).

Unlike singular value transformations (and related eigenvalue or Jordan-block transformations), which have been the focus of previous quantum algorithms research, these element-wise transformations exhibit numerous unusual properties, including basis-dependence and non-obvious effect on the matrix norm (and rank, positive-definiteness, condition number, etc.). Correspondingly the space-, query-, and gate-complexities, as well as success probability upper- and lower-bounds for quantum algorithms computing element-wise functions of matrices remain incompletely characterized. In the following section we provide an overview of our main results (Sec.~\ref{subsec:main_results}), discussion of prior work on adjacent topics and related quantum algorithms (Sec.~\ref{subsec:prior_work}), and finally a quick-start guide to our notation, which for this topic especially is inhomogeneous across literature (Sec.~\ref{subsec:notation}).

%%%%%%%%%%%%%%%%%%%%%%%%%%%%%%%%%%%%%%%%%%%%%
\subsection{Main results} \label{subsec:main_results}

\noindent We construct a quantum circuit that uses $d$ copies of a unitary block encoding of a matrix and produces a block encoding of a desired degree-$d$ polynomial applied element-wise (in a chosen basis) to that matrix. The cost of this transformation (as in previous work) can be measured along multiple axes, including query complexity, gate complexity, auxiliary space use, block encoding subnormalization, and success probability. 

For brevity we present a table (Tab.~\ref{tab:resource_comp}) comparing our construction to the prior work in terms of query complexity and auxiliary space use, and leave detailed discussion of our construction's full attributes to the main text. We also note that while our work and prior work give constructions of a `sequential' and `parallel' flavor respectively, both have both linear depth in $d$, and success probability that can decrease (at worst) exponentially in $d$. This is a generic feature of quantum algorithms that multiply $d$ (possibly ill-conditioned) block encodings, and we enumerate various sufficient assumptions (common in prior work considering similar subroutines \cite{lw_comp_gadget_19, flt_time_marching_23, vg_reducing_space_25}) to avoid worst-case behavior.

\begin{theorem}[Quantum element-wise transform; \emph{condensed}] 
\label{thm:qewt_informal}
    Let $f$ be a polynomial of degree $d$ with no constant term satisfying $|f| \leq 1$ on $[-1,1]$, with coefficients $c_{k}, k \in \{1, \dots, d\}$. Then, given access to controlled $U$, where $U$ is an $(\alpha, a, \varepsilon)$ block encoding of an $n$-qubit operator $A$, there exists a quantum circuit preparing a $(\beta, b, \delta)$ block encoding of $f^\circ (A/\alpha)$---\emph{the element-wise application of $f$ to $A/\alpha$}---which requires $\mathcal{O}(d)$ copies of controlled $U$ and single\slash two-qubit gates, where $(\beta, b, \delta)$ satisfy
        \begin{equation}
            \beta = \sum_{k = 1}^{d} |c_k|,
            \quad\;\;
            b = \mathcal{O}(a + n\log{d}),
            \quad\;\;
            \delta = \sum_{k = 1}^{d} |c_k|\,\Big[(1 + \varepsilon/\alpha)^k - 1\Big].
        \end{equation}
\end{theorem}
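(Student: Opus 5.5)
The construction has three layers: a block encoding of each monomial $(A/\alpha)^{\circ k}$, a linear combination of unitaries (LCU) over $k$ to obtain $f^\circ$, and a final error count. The starting observation is that the element-wise product is a compression of the tensor product. Let $S_k = \sum_{i=0}^{2^n-1} |i\rangle\langle i|^{\otimes k}$, written as $\sum_i |i\rangle\langle i\cdots i|$, be the isometry onto the diagonal subspace. Then
\begin{equation}
A_1 \circ \cdots \circ A_k = S_k^\dagger (A_1\otimes\cdots\otimes A_k) S_k .
\end{equation}

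The naive route places $k$ copies of $U$ in parallel and costs $kn$ auxiliary qubits. To avoid this, I would emulate the diagonal compression \emph{sequentially}, reusing a single $n$-qubit work register, so that only $\mathcal{O}(n)$ extra qubits are needed per layer rather than per copy.

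\textbf{Monomial block encoding.} The plan for $(A/\alpha)^{\circ k}$ is as follows.
\begin{itemize}
\item Keep the system register $|j\rangle$ and an $n$-qubit copy register $|0^n\rangle$.
\item CNOT-fan out the system into the copy register to record the column index $j$.
\item Apply $U$ to the system (with its own ancillas). Projecting the ancillas onto $|0^a\rangle$ and the system onto row $i$ yields amplitude $A_{ij}/\alpha$.
\item Uncompute the CNOTs, XOR-ing the row index into the copy register. Then re-copy and repeat.
\end{itemize}
The intended invariant is that after each layer the copy register holds $i\oplus j$. Postselecting it onto $|0\rangle$ forces every layer to act on the same $(i,j)$ pair, so the amplitudes multiply to $\prod_m (A_{ij}/\alpha)$.

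Concretely, I would set up the circuit so that each use of $U$ is sandwiched between CNOT ladders from the system to a per-layer $n$-qubit flag, and these flags are then compressed. The main point to verify is that a single flag register suffices: one resets by XOR-ing, keeping the flag at $0$ exactly when all intermediate indices agree. I expect this bookkeeping of index-consistency projections to be the main obstacle. It is also where the earlier constructions erred, through cross-terms with mismatched intermediate indices. I would verify it by writing the projected operator as $\sum_{i,j}|i\rangle\langle j|\prod_m \langle i|\langle 0^a|U|0^a\rangle|j\rangle$ and checking that no other index strings survive the postselection.

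\textbf{Linear combination over $k$.} Next I would combine the monomials by LCU. I prepare a $\lceil\log d\rceil$-qubit index register in the state $\sum_k \sqrt{|c_k|/\beta}\,|k\rangle$ and apply the $k$-th monomial circuit controlled on that index. To keep the query count at $\mathcal{O}(d)$ rather than $\mathcal{O}(d^2)$, the $d$ layers are run sequentially. Layer $m$ is made active only when $m\le k$; for $m>k$ it applies the identity, which is arranged by controlling $U$ on a comparator bit computed from the index register. Phases of the $c_k$ are absorbed by a diagonal phase gate on the index register, and unpreparation yields a $(\beta,\cdot,\cdot)$ encoding with $\beta=\sum_k|c_k|$.

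\textbf{Auxiliary count.} The ancillas comprise $a$ (reused across layers only if they are returned to $|0^a\rangle$ by the postselection structure, otherwise $\mathcal{O}(a)$ per layer is still fine), $\mathcal{O}(n)$ flag qubits, and $\mathcal{O}(\log d)$ index qubits. The comparators may need $\mathcal{O}(n\log d)$-scale workspace, which matches the claimed bound $b$.

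\textbf{Error.} Write $\tilde A$ for the exactly encoded matrix, so that $\|\tilde A - A\|\le\varepsilon$. Each monomial is an element-wise product of $k$ matrices of the form $(A+E)/\alpha$. I would use the bound $\|X\circ Y\|\le\|X\|\|Y\|$ (Schur product, from the appendix) together with telescoping: expanding $(A/\alpha + E/\alpha)^{\circ k}$ binomially, the difference is bounded by $\sum_{r\ge1}\binom{k}{r}(\varepsilon/\alpha)^r = (1+\varepsilon/\alpha)^k-1$. Weighting by $|c_k|$ and summing over $k$ gives $\delta$.
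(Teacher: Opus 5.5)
The central step, the monomial block encoding, fails as written, and the check you postpone is exactly where it breaks. You apply $U$ to the system register in every layer. After the first layer that register holds the row index $i_1$, so the second application produces amplitudes $\tilde A_{i_2 i_1}\tilde A_{i_1 j}/\alpha^2$, which is a path of a \emph{matrix} product. Forcing the indices to agree gives $\tilde A_{i_1 i_1}\tilde A_{i_1 j}$, never $\tilde A_{ij}^2$. Your stated invariant (copy register holding $i\oplus j$, postselected onto $|0^n\rangle$) retains only $i=j$. Writing the target as $\sum_{i,j}|i\rangle\langle j|\prod_m\langle i|\langle 0^a|U|0^a\rangle|j\rangle$ restates the goal rather than verifying a circuit.

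The missing idea is that every copy of $U$ must act on a register holding the \emph{column} index $j$. So $j$ must be stored persistently and erased only by the last query. Concretely:
\begin{enumerate}
\item Fan $j$ out into a register $J$ and apply $U$ to $S$.
\item For each middle copy, XOR $J$ into a work register $W$, apply $U$ to $W$, then XOR $S$ into $W$. Route both ``$W\neq 0$'' and ``$U$-ancilla $\neq 0$'' into one compression-gadget counter. Then $W$ is back in $|0^n\rangle$ on the success branch and can be reused.
\item Apply the last copy of $U$ to $J$ itself and XOR $S$ into $J$, which erases $j$.
\end{enumerate}
Each middle layer is then a $(1,a+n,0)$ block encoding of the diagonal matrix $\mathrm{diag}(\mathrm{vec}(\tilde A/\alpha))$ on $SJ$. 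Its matrix powers are element-wise powers, so Lem.~\ref{lem:comp_gadget} applies directly. The whole sequence block-encodes $(\tilde A/\alpha)^{\circ k}$ with $a+2n+\mathcal{O}(\log d)$ ancillas.

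Repaired this way, your route genuinely differs from the paper's. The paper reaches the block-diagonal form $I\otimes(A\circ B)$ through masks and swap-copy, then recurses over $\log d$ levels. Each level needs its own catalytic $n$-qubit register, which is where the $n\log d$ comes from.

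Two smaller points need the same care.
\begin{itemize}
\item \textbf{LCU control.} Controlling only $U$ on a comparator bit does not make a working layer act as the identity. In the layer above, with $U$ off, the XORs leave $W=i\oplus j$ and the check forces $i=j$. The whole layer, XORs and counter increments included, must be controlled. This is the analogue of the paper's insistence on controlling $P,P^\dagger$ (Rem.~\ref{rem:subtle_nonstandard}). Also, the final $j$-erasing query must run for every $k\ge 2$, so only middle layers may be switched off (not ``layers with $m>k$''), and $k=1$ must be treated separately.
\item \textbf{Error bound.} Expand around the exactly encoded $\tilde A$, writing $A=\tilde A-E$. Dropping the factor $\lVert A/\alpha\rVert^{k-r}$, as you do, needs $\lVert A\rVert\le\alpha$, which is not guaranteed; only $\lVert A\rVert\le\alpha+\varepsilon$ holds. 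By contrast, $\lVert\tilde A/\alpha\rVert\le 1$ always holds, and then $\lVert X\circ Y\rVert\le\lVert X\rVert\,\lVert Y\rVert$ yields $(1+\varepsilon/\alpha)^k-1$ as claimed.
\end{itemize}
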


While the top-line result of this work is an improved method for quantum element-wise transformation of block encodings, our construction comprises numerous subroutines that may be of independent interest. Principally these are (1) a novel technique for introducing a small amount of additional space when processing block encodings with block structure to cheaply copy sub-blocks in fixed bases (which we call the `weaving lemma' for its multi-time use of known, catalytic quantum states that can be `woven' through a quantum circuit), (2) the modification of quantum circuits for element-wise products such that we can apply previously-known `compression gadgets' for reducing space requirements when computing matrix products of block encodings, and (3) the use of simple, LCU-based subroutines and variants of amplitude amplification to improve success probability and gate complexity simultaneously.
These techniques prove remarkably flexible, and we devote a section (Sec.~\ref{sec:alt_const}) to their modification and improvement under specific assumptions.

{
% For providing some breathing room in tables; applying only locally
\renewcommand{\arraystretch}{1.5}
\begin{table}
    \centering
    \begin{tabular}{l | l l l}
         Method & $\;\;$Query comp.\; & Aux.~space & Notes \\\hline
         Prior work \cite{gychanar_had_prod_25} & $\;\;\mathcal{O}(d)$ & $\tilde{\mathcal{O}}(ad + nd)$ & $\mathcal{O}(d)$ depth\\
         Main result (Thm.~\ref{thm:qewt}) & $\;\;\mathcal{O}(d)$ & $\mathcal{O}(a + n(\log{d}))$ & $\mathcal{O}(d)$ depth\\
         LCU variant (Thm.~\ref{thm:lcu_alt_ewt_func})$\;\;$ & $\;\;\mathcal{O}(d)$ & $\mathcal{O}(a + n(\log{d}))$ & Reduced space\\
         Approx. variant (Cor.~\ref{cor:approx_ewt_func})\;\; & $\;\;\mathcal{O}(d^{\log{(\log{(d/\varepsilon)})}})\;\;$ & $\mathcal{O}(a + n\log{(d^{\log{(\log{(d/\varepsilon)}})}))}\;\;$ & Input assump.
         \\
         Half-compressed (Thm.~\ref{thm:half_qewt})\;\; & $\;\;\mathcal{O}(d)$ & $\tilde{\mathcal{O}}(a + nd)\;\;$ & Simple circuit
    \end{tabular}
    \caption{High-level overview of the resource complexities of our main result, its variants, and prior work. The problem considered is the element-wise application of a polynomial of degree $d$ given an $n$-qubit matrix block-encoded using $a$-qubits of auxiliary space. The simplest construction we provide (Thm.~\ref{thm:qewt}) can be further space-compressed by an LCU-based variant, as well as amplified and approximated under certain assumptions on the block-encoded input. The referenced theorems and corollaries provide non-asymptotic resource complexities, associated circuits, and error analysis.}
    \label{tab:resource_comp}
\end{table}
}

None of the individual techniques in this work is particularly complex, though their assemblage introduces a bit of index bookkeeping. To build intuition for our construction we provide an overview of salient block encoding manipulation techniques (Sec.~\ref{subsec:be_basics}), then cover a small-scale example introducing the intended action of each subroutine (without their realizing circuit) (Sec.~\ref{subsec:minimal_ex}), followed by a sequential construction of element-wise products followed by element-wise functions (Secs.~\ref{subsec:swap_copy_weaving} and \ref{subsec:full_const}). Finally we discuss the quantum element-wise transform in generality, introducing more involved variant constructions with improved properties in special cases (Sec.~\ref{sec:alt_const}).
Along the way we clarify some subtleties under-emphasized in previous work, toward a thorough treatment of the theory of these \emph{quantum element-wise transforms}, as well as their application to existing classical and quantum problems (Sec.~\ref{sec:applications}).

%%%%%%%%%%%%%%%%%%%%%%%%%%%%%%%%%%%%%%%%%%%%%
\subsection{Prior work} \label{subsec:prior_work}

    \noindent Below we introduce some work related to our construction along the following axes: (1) the theory and common uses of element-wise products and functions of matrices\slash tensors, (2) existing techniques for realizing element-wise transformations with quantum algorithms (and their application), and finally (3) parallel work in quantum algorithms for computing other non-standard matrix functions (or functions restricted to sub-classes of matrices allowing for significant resource improvements).
    
    An accessible treatment of the theory of element-wise products can be found in the well-known textbook of Horn and Johnson \cite{hj_matrix_analysis_85} (which uses the term \emph{Hadamard product}); this product is also given its own chapter (Ch.~5) in their topics textbook \cite{hj_topics_91}, which sketches its common uses, e.g., for computing convolutions among periodic functions, when studying integral equation kernels (with applications to Mercer's theorem), as related to the weak minimum principle in PDEs, and in the study of characteristic functions (related to Bochner's theorem). Horn and Johnson also gloss applications to combinatorics (e.g., association schemes), computing covariance matrices, control theory (e.g., the study of the relative gain array), and analyzing solutions to various generalized matrix equations (e.g., the Lyapunov equation). The study of fundamental analytic properties of the element-wise product (and its block-matrix variants \cite{johnson_had_mat_74, kr_prod_68, tk_tr_prod_app_89, liu_ts_kr_prod_99, hc_block_kronecker_89}) applied to structured families of matrices is active and ongoing, though these technical treatments are beyond the scope of this work. 

    The element-wise product appears commonly throughout classical signal processing, statistics \cite{mn_had_prod_textbook_99, styan_had_prod_73, kn_block_kron_vecb_91}, the analysis of certain matrix and integral equations \cite{kr_prod_68, ts_prod_72}, as well as ubiquitously in machine learning and tensor arithmetic \cite{tensor_decomp_09, learning_had_prod_25}. Much of these latter applications center on this product's ability to cheaply introduce nonlinearities into tensor manipulations, as well as compute useful approximations to certain tensors (e.g., the canonical polyadic decomposition: CPD). From this standpoint the element-wise product could be seen as an essential sub-routine for tensorial algorithmic analogues to singular value decomposition\slash transformation (noting, to temper this analogy, that many problems in the computation of useful tensorial properties are known to be extremely difficult \cite{hl_tensor_np_13}). In any event, at a practical level, the cheap computation element-wise transforms in a dimension-efficient way by quantum computers offers numerous doors into the study of wide sub-fields of applied numerical linear algebra.

    The problem of computing element-wise products between representations of linear operators encoded in quantum processes has been considered previously \cite{zzrf_basic_q_lin_alg_21, gychanar_had_prod_25, dlx_prod_be_25, ltlf_matrix_prod_quantum_26} in varying access models, and has some connection to the related problem of applying non-linear functions to quantum state amplitudes \cite{gmkf_nonlinear_24, rr_nonlinear_23, hcss_nonlinear_23}. In general the work of applying element-wise functions to block encodings, as noted in \cite{rr_nonlinear_23} is incomparable to the state case, which relies on a crucial coincidence between various matrix products under diagonality assumptions.
    Simultaneous work has also considered numerous applications downstream of these transforms, e.g., the computation of convolutions \cite{rhgpr_acc_inf_25}. Nevertheless, the constructions given in these works (by-and-large un-optimized and treated only in passing) leave the study of element-wise transformations quite open and, as we show, subject to significant improvement and generalization. 

    More generally the theory of (and accompanying numerical tools for) block-encoding quantum algorithms has expanded rapidly over the past ten years. This includes quantum signal processing (QSP) \cite{lyc_optimal_pulses_14, lyc_equiangular_16, lc_ham_sim_17, lc_qubitization_19} and its lifted version quantum singular value transformation (QSVT) \cite{gslw_19}, which have been the main focuses of pedagogical introductions\footnote{For the basic results we use in this work, see the treatment of Sec.~\ref{subsec:be_basics}.} to the subfield \cite{mrtc_unification_21, cs_qsvt_tang_tian_23}, and whose numerical properties (e.g., results on the convergence of classical algorithms for computing the \emph{QSP phases}) have received extensive treatment\footnote{Including recent work tethering the theory of QSP\slash QSVT and their variants to beautiful methods in the comparatively niche subfield of \emph{non-linear Fourier analysis} \cite{amt_23, szego_nlfa_qsp_24}.} with accompanying public software \cite{haah_decomposition_19, cdghs_finding_qsp_angles_20, wdl_sym_qsp_energy_22, dmwl_efficient_phases_21, dlnw_robust_iter_24}.
    QSP and QSVT have themselves spawned a complex landscape of variants and extensions, including the multivariable setting (M-QSP)  \cite{rc_m_qsp_22}, the indefinite parity setting (G-QSP) \cite{mw_gqsp_24}, methods for eigenvalue transformations \cite{ls_qep_26, gls_qet_arb_26}, modular versions \cite{rcc_modular_qsp_23}, extensions to continuous variables \cite{rbmc_su_11_qsp, bw_analytical_phase_26}, and the practical use of randomization \cite{martyn_rall_halving_24} and parallelization \cite{mrclc_parallel_qsp_24}.
    Moreover, beyond QSP and QSVT, which transform block encoding spectra, other algorithms for block encoding arithmetic have found diverse success, including linear combination of unitaries (LCU) for Hamiltonian simulation \cite{cw_lcu_12, bccks_ham_lcu_14, bck_ham_lcu_15}, linear combination of Hamiltonian simulation (LCHS) for simulating general PDEs \cite{dll_lchs_state_cost_23, acl_lchs_near_opt_23, ls_opt_lchs_25}, and the combined use of QSVT and LCU for algorithms approximating contour integrals as applied to simulation and linear systems solving \cite{lw_mat_eq_25, sbs_lin_mat_diffeq_26}.

    The constructions above have also yielded independently interesting and widely applicable methods for reducing space use in block encoding algorithms \cite{lw_comp_gadget_19, vg_reducing_space_25}, increasing success probability, and leveraging known problem structure to improve resource requirements both above the block-encoding layer \cite{zs_low_energy_24, fast_spec_amp_25, yuan_cobble_25} and for compiling desired block encodings down to the gate-level \cite{nkl_dense_rank_be_22, cv_fable_be_22, scc_be_struct_24, clvy_q_circ_be_24}.
    Together these improvements compose a growing toolkit of flexible, fungible quantum algorithms for the manipulation of linear operators, with great potential (as we begin to implement these subroutines on real devices with specific architectural constraints) in hybridization, heuristic-led optimization, and approximate or `fuzzy' extensions.

\subsection{Notation and terminology} 
\label{subsec:notation}

\noindent We denote the set of positive integers as $\mathbb{N}$, the set of positive real numbers as $\mathbb{R}_{+}$, and $\log$ will always denote logarithm base $2$, unless stated otherwise. The complex Hilbert space associated with $n$ qubits has dimension $2^n$; any linear operator acting on this Hilbert space is called an \textit{$n$-qubit operator}, following the convention of \cite{gslw_19}, and is represented by a $2^n \times 2^n$ matrix over $\mathbb{C}$. All matrices and tensors, when their shape or dimension is given, are taken over $\mathbb{C}$ unless otherwise noted; we sometimes use the notation $M_n$ or $M_{n, m}$ for matrices of size $n\times n$ or $n\times m$ over $\mathbb{C}$. For a vector $v \in \mathbb{C}^m$ we index its components $v_0,\dots,v_{m-1}$, and denote the vector and matrix $p$-norms by $\lVert v \rVert_p$ and $\lVert A \rVert_p$ respectively for all $p \ge 1$. We often use the operator norm induced by the vector $2$-norm (the \textit{spectral norm}) which we denote $\lVert A \rVert$. Where specific notation is required, it will be introduced \emph{in situ}.

A focus of this work is the efficient computation of various matrix and vector products by quantum algorithms. The names and notation of these products is hopelessly inconsistent across prior work, both across and within disciplines. Below we provide a legend for the notational convention we follow in this work (all standard with the \LaTeX{} \texttt{amssymb} package):
\begin{alignat}{3}
    &\otimes &\quad &\text{\texttt{\textbackslash otimes}} &\quad &\text{(Kronecker/tensor product)},\\
    &\circ &\quad &\text{\texttt{\textbackslash circ}} &\quad &\text{(Element-wise product)},\\
    &\boxtimes &\quad &\text{\texttt{\textbackslash boxtimes}} &\quad &\text{(Tracy--Singh product; Def.~\ref{def:ts_prod})},\\
    &\odot &\quad &\text{\texttt{\textbackslash odot}} &\quad &\text{(Khatri-Rao product; Def.~\ref{def:kr_prod})},\\
    &\ast &\quad &\text{\texttt{\textbackslash ast}} &\quad &\text{(Convolution)},\\
    &\cdot/\times &\quad &\text{\texttt{\textbackslash cdot/\textbackslash times}} &\quad &\text{(Matrix product/block-encoded product)}.
\end{alignat}
For the usual matrix product we will usually use no symbol at all, and the non-standard use of $\times$ for the matrix product of \emph{block encoded} matrices is repeated explicitly where relevant to avoid confusion.

We exclusively employ the term \emph{element-wise product} to refer to the operation in Def.~\ref{def:matrix_prod}, which is elsewhere variously referred to as the \emph{Hadamard}, \emph{Schur}, \emph{entry-wise}, or \emph{point-wise} product, which we do not use given their collision with common constructions in quantum information. For the products mentioned above we will provide further in-line notes when defining related variants.

\begin{remark}[On matrix subscript conventions] \label{rem:mat_conv}
    We often make reference to unitary matrices that encode other matrices as sub-blocks, where these sub-blocks themselves have special block structure. This introduces a necessity for non-standard subscript uses, e.g., we may refer to a block encoding unitary (Def.~\ref{def:be_formal}) as $U_{A, ij, r_1 r_3}$.

    In this case we will always use upper-case Latin letters to denote the block encoded operator, doubled lower-case Latin indices to denote matrix elements, and $r_k$ subscripts to denote the qubit registers on which a linear operator acts. Sometimes we will also use single Latin indices to denote lists of operators, e.g., $U_{k}$ for $k \in \{0, 1, \dots, n\}$.

    As expected, the tensor product of two unitaries on two registers acts on their concatenation: $W_{r_1 r_2} \equiv U_{r_1} \otimes V_{r_2}$; for non-unitary matrices (e.g., a linear combination of unitaries all acting on a set of registers) we use the same subscript conventions. In all cases we follow the standard practice in quantum information where tensor products are considered equal up to the canonical isomorphism induced by re-ordering registers.
\end{remark}

We take liberties when denoting operations where the registers on which they act are obvious, e.g., following common conventions we sometimes denote the all-zeros state on $n$ qubits as $|0^n\rangle$, or even simply $|0\rangle$, and the $n$-qubit identity matrix as $I_n$. Additionally, when only a subset of the indices given in Rem.~\ref{rem:mat_conv} is provided, we will sometimes insert additional commas for clarity.

%%%%%%%%%%%%%%%%%%%%%%%%%%%%%%%%%%%%%%%%%%%%%
\section{Main construction} \label{sec:main_const}

\noindent The core contribution of our work involves the interaction of straightforward but carefully defined subroutines for transforming matrices that have been \emph{block-encoded} as sub-blocks of unitary processes. To smoothly introduce these subroutines (and the subtleties of their composition) we briefly review basic facts and methods related to block encodings (Sec.~\ref{subsec:be_basics}), followed by the aforementioned small-scale example of the \emph{action} of our construction (Sec.~\ref{subsec:minimal_ex}). We then motivate and explicate, in isolation, constructions enabling these \emph{actions} in the general case (Sec.~\ref{subsec:swap_copy_weaving}). These constructions comprise most of the novelty of this work, and are bootstrapped to our full \emph{quantum element-wise transform} (Sec.~\ref{subsec:full_const}) mainly through the application of well-known block encoding techniques. Throughout this section we discuss quirks of element-wise products (and the quantum algorithms realizing them), raising and rectifying gaps in previous work on the subject.

%%%%%%%%%%%%%%%%%%%%%%%%%%%%%%%%%%%%%%%%%%%%%
\subsection{Block encoding basics} \label{subsec:be_basics}

\noindent Before continuing to our main construction we briefly outline some common definitions and constructions for quantum algorithms manipulating block encodings; we focus on concision, covering only those techniques that appear in this work. For general, accessible references to this class of quantum algorithms we refer to the reader to standard entry-points \cite{gslw_19, mrtc_unification_21, cs_qsvt_tang_tian_23}.

\begin{definition}[Block encoding; cf. \cite{gslw_19}] \label{def:be_formal}
    Let $A$ be an $n$-qubit operator, $\alpha,  \varepsilon \in \mathbb{R}_{+}$, and $a \in \mathbb{N}$. We say an $(n + a)$-qubit unitary $U$ is an $(\alpha, a, \varepsilon)$ \emph{block encoding} of $A$ if
        \begin{equation}
        \label{eq:be_formal}
            \lVert A - \alpha (\langle 0^a| \otimes I_{n}) U (| 0^a\rangle\otimes I_{n}) \rVert \leq \varepsilon.
        \end{equation}
    Note that as $\lVert U \rVert \leq 1$, we have $\lVert A \rVert \leq \alpha + \varepsilon$, by the triangle inequality: equivalently, that the top-left block of $U$ contains an ($\varepsilon$-approximate) $\alpha$-subnormalized copy $\tilde{A}$ of $A$:
        \begin{equation}
            U =
            \begin{bmatrix}
                \tilde{A}/\alpha & \;\ast\;\\
                \ast & \ast
            \end{bmatrix}
            \quad\implies\quad
            \tilde{A} = \alpha (\langle 0^a| \otimes I_{n}) U (| 0^a\rangle\otimes I_{n}),
        \end{equation}
    with $\lVert \tilde{A} - A \rVert \le \varepsilon$. The quantity $\alpha$ in \eqref{eq:be_formal} is known as the \textit{subnormalization factor}. In the case that $A$ is non-square (or has dimensions not a power-of-two), this definition can be recovered by padding the rows and columns of $A$ with zeros.
\end{definition}

A natural question is whether the class of efficiently constructable block encodings is interesting, as generically specifying exponentially-many matrix elements is prohibitively expensive. The originating reference \cite{gslw_19} enumerates multiple settings (e.g., sparse oracles, state preparation unitaries, free Hamiltonian evolution) where block encoding access is natural and efficient.

Moreover, block encodings, by merit of encoding non-unitary behavior within a larger unitary process, turn out to be incredibly useful for subsuming diverse techniques in numerical linear algebra and quantum algorithms generally. In what follows we enumerate common operations and transformations of block encodings.

\begin{definition}[Block encoding multiplication]  \label{def:be_mult}
    Let $U$ and $V$ be $(\alpha, a, \varepsilon)$ and $(\beta, b, \delta)$ block encodings of $n$-qubit operators $A$ and $B$, respectively. Then $(I_{r_1}\otimes U_{r_2 r_3})(I_{r_2}\otimes V_{r_1 r_3})$ is a $(\alpha\beta, a + b, \alpha\delta + \beta\varepsilon + \varepsilon\delta)$ block encoding of the \emph{matrix product} $AB$. Here we have defined qubit registers $r_1, r_2, r_3$, which are respectively of size $b, a, n$. The additional space (register) used by one block encoding unitary (either $a$ or $b$ qubits, contained in $r_2$ or $r_1$) is left alone by the other.
\end{definition}

\begin{definition}[State preparation pair] \label{def:state_prep_pair}
Let $y \in \mathbb{C}^m$ and $\lVert y\rVert_1 \leq \beta$. For $b \ge \log{m}$, a pair of $b$-qubit unitaries $P_L, P_R$ is called a $(\beta, b, \varepsilon)$ \emph{state preparation pair} for $y$ if
        \begin{equation}
            P_L\, |0^b\rangle = \sum_{k = 0}^{2^b - 1} c_k\, |k\rangle, \quad
            P_R\, |0^b\rangle = \sum_{k = 0}^{2^b - 1} d_k\, |k\rangle, \quad\text{and}\quad
            \sum_{k = 0}^{m-1} \lvert \beta(c_k^{\ast} d_k) - y_k \rvert \leq \varepsilon,
        \end{equation}
    where for all $k \in \{m, \dots 2^b -1\}$, $c_k^{\ast} d_k = 0$. 
\end{definition}

\begin{theorem}[Linear combination of unitaries; from \cite{gslw_19}]  \label{thm:be_lcu}
    Let $A = \sum_{k=0}^{m-1} y_k A_k$ be a weighted sum of $n$-qubit operators $A_k$, for some $y \in \mathbb{C}^m$ and $\lVert y\rVert_1 \leq \beta$.
    Then given a $(\beta, b, \varepsilon)$ state preparation pair (Def.~\ref{def:state_prep_pair}) $(P_L, P_R)$ for $y$, and access to $U_k$ for $k \in \{0, \dots, m-1\}$ respectively $(\alpha, a, \varepsilon')$ block encodings of the $A_k$, we can implement a $(\alpha\beta, a + b, \alpha\varepsilon + \beta\varepsilon')$ block encoding of $A$ with a single use of $P_L, P_R$, and a \textsc{select} unitary $W$ of the form
        \begin{equation} \label{eq:lcu_select_def}
            W =
            \sum_{k = 0}^{m - 1} |k\rangle\langle k|_{r_1}\otimes (U_k)_{r_2 r_3}
            +
            \left[\left(I - \sum_{k = 0}^{m - 1}|k\rangle\langle k|\right)_{r_1} \hspace*{-0.2cm}\otimes I_{r_2} \otimes I_{r_3}\right].
        \end{equation}
    Note here that we have notated by subscripts the registers on which our operators act, where $r_1, r_2, r_3$ have size $b = \lceil\log{m}\rceil, a, n$ respectively. The unitary product achieving this block encoding has the simple form $W' = ([P_L^\dagger]_{r_1}\otimes I_{r_2 r_3})W_{r_1 r_2 r_3}([P_R]_{r_1}\otimes I_{r_2 r_3})$.
\end{theorem}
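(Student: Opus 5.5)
The statement to prove is the LCU theorem (Thm.~\ref{thm:be_lcu}). The plan is to compute the top-left block of $W' = (P_L^\dagger \otimes I)W(P_R\otimes I)$ directly. Afterwards, the error is bounded by the triangle inequality, splitting it into the contribution from the state preparation pair and the contribution from the imperfect block encodings $U_k$.

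\textbf{Exact block computation.} Project $W'$ onto $\langle 0^b|_{r_1}\langle 0^a|_{r_2}$ on the left and $|0^b\rangle_{r_1}|0^a\rangle_{r_2}$ on the right. Expand $P_R|0^b\rangle = \sum_k d_k|k\rangle$ and $\langle 0^b|P_L^\dagger = \sum_k c_k^\ast\langle k|$. Since $W$ is block diagonal in the computational basis of $r_1$, the projection equals
\begin{equation}
\sum_{k=0}^{m-1} c_k^\ast d_k\, \langle 0^a|U_k|0^a\rangle + \sum_{k\ge m} c_k^\ast d_k\, I_n .
\end{equation}
The second sum vanishes because the definition of a state preparation pair requires $c_k^\ast d_k=0$ for $k\ge m$. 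Writing $\tilde A_k = \alpha\langle 0^a|U_k|0^a\rangle$, this shows that $\alpha\beta$ times the top-left block of $W'$ is
\begin{equation}
\tilde A' = \sum_{k=0}^{m-1}\beta c_k^\ast d_k\,\tilde A_k .
\end{equation}
We also check that the \textsc{select} operator $W$ is unitary: it is a direct sum of unitaries together with the identity on the unused labels $k\ge m$. The register sizes then give $a+b$ auxiliary qubits.

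\textbf{Error bound.} Write $\tilde y_k=\beta c_k^\ast d_k$ and decompose
\begin{equation}
\tilde A' - A = \sum_k (\tilde y_k - y_k)\tilde A_k + \sum_k y_k(\tilde A_k - A_k).
\end{equation}
For the first sum, $\lVert \tilde A_k\rVert\le \alpha$ because $\tilde A_k$ is $\alpha$ times a sub-block of a unitary. Hence this sum has norm at most $\alpha\sum_k|\tilde y_k-y_k|\le \alpha\varepsilon$. For the second sum, each term is bounded by $|y_k|\varepsilon'$, and $\sum_k|y_k|\le\beta$, so the sum contributes at most $\beta\varepsilon'$. Adding the two gives the claimed total error $\alpha\varepsilon+\beta\varepsilon'$.

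\textbf{Main obstacle.} The only delicate point is the first term of the error decomposition. One must use the bound $\lVert\tilde A_k\rVert\le\alpha$ rather than $\lVert A_k\rVert\le\alpha+\varepsilon'$. Using the latter would produce an extra cross term $\varepsilon\varepsilon'$, and the stated bound $\alpha\varepsilon+\beta\varepsilon'$ would no longer follow.
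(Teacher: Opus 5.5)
Your proposal is correct. You compute the top-left block of $W'$ using the block-diagonality of $W$ in $r_1$, with the $k\ge m$ terms removed by $c_k^\ast d_k=0$. You then split $\tilde A'-A=\sum_k(\tilde y_k-y_k)\tilde A_k+\sum_k y_k(\tilde A_k-A_k)$. This is the standard argument behind the result, which the paper imports from \cite{gslw_19} without giving its own proof. Pairing the coefficient error with $\tilde A_k$ (so $\lVert\tilde A_k\rVert\le\alpha$) and the block-encoding error with $y_k$ (so $\lVert y\rVert_1\le\beta$) is exactly what gives the stated $\alpha\varepsilon+\beta\varepsilon'$ with no $\varepsilon\varepsilon'$ cross term.
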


It turns out that we can do even better than LCU for a certain class of matrix transformations described by spectral maps. The most prominent of these algorithms, quantum singular value transformation (QSVT), has proven remarkably useful in unifying, simplifying, and improving the performance of previously disparate quantum information processing tasks \cite{mrtc_unification_21}. We reproduce the main statement of QSVT below, towards presenting a few common applications that will appear as subroutines in our later constructions. This statement is adapted and simplified from various presentations (e.g., Thms.~17 and 18 in \cite{gslw_19}, Ch.~13 of \cite{lw_lecture_notes_26}), and we refer interested readers to pedagogically-oriented and streamlined introductions to this useful class of quantum algorithms \cite{mrtc_unification_21, cs_qsvt_tang_tian_23, lw_lecture_notes_26}.

\begin{definition}[Quantum singular value transformation]  \label{def:be_qsvt}
    Let $U$ an $(1, a, 0)$ block encoding of an $n$-qubit operator $A$ according to orthogonal projectors $\Pi, \tilde{\Pi}$:
    \begin{equation}
        \tilde{\Pi} U \Pi = A.
    \end{equation}
    Moreover, let $P \in \mathbb{R}[x]$ (a polynomial with real coefficients) and $\Phi \in \mathbb{R}^d$, where $P$ has parity $(n \bmod 2)$ and satisfies $|P(x)| \leq 1$ for $x \in [-1, 1]$. Then we can build a circuit that block encodes (parameterized by $\Phi$) the \emph{singular value transformation} of $A$ according to $P$ (taking $d$ odd for the moment):
    \begin{equation}
        P^{(SV)}(A) \equiv \sum_{k = 1}^{\mathclap{\text{rank}(A)}} P(\xi_k)\,|\ell_k\rangle\langle r_k| = \tilde{\Pi} U_{\Phi} \Pi,
    \end{equation}
    which uses $\mathcal{O}(d)$ copies of $U$, $U^\dagger$, $\Pi, \tilde{\Pi}$-controlled-NOT, and single- and two-qubit gates. Here $|\ell_k\rangle, |r_k\rangle$ are the left and right singular vectors of $A$, and their respective spans are the images of $\tilde{\Pi}$ and $\Pi$. This block encoding, referred to as $U_{\Phi}$, has the following form:
    \begin{equation}
        U_{\Phi} \equiv 
        e^{i\phi_1 (2\tilde{\Pi} - I)} U 
        \prod_{j = 1}^{n/2}
        \left[e^{i\phi_{2j} (2\Pi - I)} U^\dagger e^{i\phi_{2j + 1} (2\tilde{\Pi} - I)} U \right].
    \end{equation}
    Here the \emph{phases} $\Phi = \{\phi_1, \phi_2, \dots, \phi_d\}$ are precisely the same as those used by the quantum signal processing (QSP) protocol achieving the polynomial $P$. In the even-parity case the final application of $U$ and the $\tilde{\Pi}$-controlled single-qubit rotation is removed.
\end{definition}

The purpose of the exposition of the main QSVT theorem (Thm.~\ref{def:be_qsvt}) is not to provide deep insight into the mechanism and circuits of these transformations, but instead to communicate that applying spectral maps to block encoded linear operators is well-understood, associated with simple quantum circuits, and comes with tight resource complexity bounds. As above, we refer the interested reader to the numerous recent pedagogically-slanted treatments of the subject, which cleanly lay out competing notational conventions, definitional variants, and circuit shortcuts in special cases.

The workflow of applying QSVT is remarkably straightforward: once a block encoding has been built, the query and space complexity of the desired transformation relates simply to functional analytic properties of polynomial approximations, which are well understood. Moreover, through substantial functional analytic and numerical linear algebraic work, the classical algorithms used to compute the QSVT phases for a given transformation are known to be both efficient and stable, with associated public computational packages \cite{haah_decomposition_19, cdghs_finding_qsp_angles_20, dmwl_efficient_phases_21, wdl_sym_qsp_energy_22, dlnw_robust_iter_24}. In our main construction we will introduce some explicit functions (and their corresponding QSVT subroutines) useful for our purposes, including variants of amplitude amplification and certain thresholding protocols. Wherever possible we will indicate the reason for using a particular block encoding technique over another, highlighting the benefits, limitations, and various trade-offs among the most common block encoding methods.

%%%%%%%%%%%%%%%%%%%%%%%%%%%%%%%%%%%%%%%%%%%%%
\subsection{A minimal example} \label{subsec:minimal_ex}

\noindent We quickly run through an explicit, small example of our eventual construction in the computational basis. The circuits for realizing each subroutine, and other unique aspects of the general case, are the subject of later sections. It is intended that this acts as an extrapolatable template plainly laying out the logic of the full algorithm.

A distinguishing aspect of the QEWT is that it cannot be simply described as a matrix product, linear combination, or spectral map applied among input block encodings. This necessitates a few additional steps and notational amendments. As in prior work, our construction is rooted in the well-known observation that the element-wise product of two matrices is a principal submatrix of their Kronecker product.

\begin{proposition}[Relation between element-wise and Kronecker products] \label{prop:simple_had_prod}
    The element-wise (variously referred to as the \emph{Hadamard} or \emph{Schur} or \emph{pointwise}) product of two matrices $A$, $B$ is a principal submatrix\footnote{I.e., a submatrix obtained by deleting any $k$ rows and the correspondingly indexed $k$ columns.} of their Kronecker product $(A\otimes B)$. For $A$ and $B$ both $2\times 2$ we can see this concretely.\footnote{For later consistency assuming we depict these matrices as block encoded inside larger unitaries, but the matrix products\slash Kronecker products are taking place among the \emph{block encoded} matrices.} Given block encodings of $(A\otimes I)$ and $(I\otimes B)$
    \begin{equation}
        U_{A\otimes I} =
        \left[
        \begin{array}{cccc|c}
            a_0 & & a_1 & & \\
            & a_0 & & a_1 &    \smash{\raisebox{-0.5\normalbaselineskip}{\;\;$\ast$\;}}\\
            a_2 & & a_3 & & \\
            & a_2 & & a_3 & \\\hline
            \multicolumn{4}{c|}{\vphantom{\rule{0pt}{\normalbaselineskip}}\ast} & \vphantom{\rule{0pt}{\normalbaselineskip}}\ast
        \end{array}
        \right],
        \quad\quad
        U_{I\otimes B} =
        \left[
        \begin{array}{cccc|c}
            b_0 & b_1 & & & \\
            b_2 & b_3 & & & \smash{\raisebox{-0.5\normalbaselineskip}{\;\;$\ast$\;}}\\
            & & b_0 & b_1 & \\
            & & b_2 & b_3 & \\\hline
            \multicolumn{4}{c|}{\vphantom{\rule{0pt}{\normalbaselineskip}}\ast} & \vphantom{\rule{0pt}{\normalbaselineskip}}\ast
        \end{array}
        \right],
    \end{equation}
    we can permute rows and columns of their matrix product\footnote{I.e., $(A\otimes I)(I\otimes B) = (A \otimes B)$.} to recover $A\circ B$ in the upper left corner:
        \begin{equation} \label{eq:tensor_had_iden}
        U_{A \otimes B} =
        \left[
        \begin{array}{cccc|c}
            a_{0}b_{0} & \ast & \ast & a_{1}b_{1} & \\
            \ast & \ast & \ast & \ast & \smash{\raisebox{-0.5\normalbaselineskip}{\;\;$\ast$\;}}\\
            \ast & \ast & \ast & \ast & \\
            a_{2}b_{2} & \ast & \ast & a_{3}b_{3} & \\\hline
            \multicolumn{4}{c|}{\vphantom{\rule{0pt}{\normalbaselineskip}}\ast} & \vphantom{\rule{0pt}{\normalbaselineskip}}\ast
        \end{array}
        \right]
        \quad\longmapsto\quad
        U_{P(A \otimes B)P^\dagger}
        =
        \left[
        \begin{array}{cccc|c}
            a_{0}b_{0} & a_{1}b_{1} & \ast & \ast & \\
            a_{2}b_{2} & a_{3}b_{3} & \ast & \ast & \smash{\raisebox{-0.5\normalbaselineskip}{\;\;$\ast$\;}}\\
            \ast & \ast & \ast & \ast & \\
            \ast & \ast & \ast & \ast & \\\hline
            \multicolumn{4}{c|}{\vphantom{\rule{0pt}{\normalbaselineskip}}\ast} & \vphantom{\rule{0pt}{\normalbaselineskip}}\ast
        \end{array}
        \right].
    \end{equation}
    This identity is well-known \cite{hj_matrix_analysis_85, gv_mat_computation_13} and generalizes simply to larger $A$ and $B$.
\end{proposition}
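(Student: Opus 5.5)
The plan is a direct index computation, done first for general $n\times n$ matrices $A$ and $B$; the $2\times 2$ case in the statement is then an instance. Let $\{|i\rangle\}_{i=0}^{n-1}$ be the computational basis of $\mathbb{C}^n$. Rows and columns of $A\otimes B$ are indexed by pairs $(i,k)$, and the defining property of the Kronecker product gives
\begin{equation}
    \langle i|\langle k|\,(A\otimes B)\,|j\rangle|l\rangle = A_{ij}B_{kl}.
\end{equation}
Now restrict to the "diagonal" index set $D=\{(i,i): 0\le i\le n-1\}$ for both rows and columns. The resulting entry is $\langle i|\langle i|(A\otimes B)|j\rangle|j\rangle = A_{ij}B_{ij} = (A\circ B)_{ij}$.

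Since the same index set $D$ is used for rows and columns, the submatrix is principal: we delete the $n^2-n$ rows and the identically indexed columns with $i\neq k$. This proves the claim. For the identity $(A\otimes I)(I\otimes B)=A\otimes B$ used in the statement, I would invoke the mixed-product property $(A\otimes I)(I\otimes B)=(AI)\otimes(IB)$.

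For the concrete permutation, I would define the isometry $V=\sum_i |i\rangle|i\rangle\langle i|$. Then $V^\dagger (A\otimes B) V = A\circ B$. Let $P$ be any permutation matrix that sends the basis vectors $|i\rangle|i\rangle$ to the first $n$ positions while preserving their order. In the $2\times2$ case this means sending index $3$ (that is, $|1\rangle|1\rangle$) to position $1$, which is a single transposition of indices $1$ and $3$. With this choice, the top-left $n\times n$ block of $P(A\otimes B)P^\dagger$ is exactly $A\circ B$.

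Conjugating the block-encoding unitary by $P\otimes I_a$, which acts only on the system register, keeps it unitary. It also moves the block accordingly, which gives the displayed $U_{P(A\otimes B)P^\dagger}$.

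The only mildly delicate step is bookkeeping. I need to check that the ordering convention $|i\rangle|k\rangle \leftrightarrow in+k$ places the entries $a_\ell b_\ell$ exactly as drawn. For non-square matrices of identical shape, the same argument applies with separate diagonal index sets for rows and columns, so the submatrix is no longer principal in the strict sense. The stated principal-submatrix claim therefore concerns the square case, which is the one considered here.
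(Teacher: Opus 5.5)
Your proof is correct and follows the paper's reasoning: the paper rests on the same standard fact that the rows and columns of $A\otimes B$ indexed by the diagonal pairs $(i,i)$ form $A\circ B$. Your transposition of indices $1$ and $3$ (a single CNOT controlled on the second qubit and targeting the first) is exactly the permutation drawn in the $2\times 2$ picture, and the entry placement you flagged does check out under the ordering $|i\rangle|k\rangle\leftrightarrow 2i+k$.

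For general $n$, the paper fixes $P$ to be the layer of $n$ qubit-wise CNOTs of Def.~\ref{def:submat_permut}. That is one concrete, gate-cheap member of your class of order-preserving permutations, up to which register serves as the control.
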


The problem we encounter after this product is that we cannot simply repeat this process, as we have not achieved a block encoding of $(A\circ B)\otimes I$, but rather a \emph{block encoding whose sub-block} is $(A\circ B)$. One way of remedying this (for which there will be multiple methods of varying performance discussed later) is to \emph{mask out} the undesired elements $\ast$ in the top-left block of (\ref{eq:tensor_had_iden}), followed by \emph{copying} the remaining desired sub-block $A\circ B$. The simplest way one could think to do this is again by matrix multiplication on the left and right:
\begin{equation}
    V \equiv
    \begin{bmatrix}
        \;1 & & & \\
        & 1 & & \\
        & & 0 &\\
        & & & 0\;\;
    \end{bmatrix}
    \implies
    V [P(A\otimes B)P^\dagger] V
    =
    \begin{bmatrix}
        \;a_0 b_0 & a_1 b_1 &  \hphantom{\ast}& \hphantom{\ast} \\
        \;a_2 b_2 & a_3 b_3 &  \hphantom{\ast}& \hphantom{\ast} \\
        & & \hphantom{\ast} & \hphantom{\ast}\;\;\\
        & & \hphantom{\ast} & \hphantom{\ast}\;\;
    \end{bmatrix},
\end{equation}
where we denote zeros in these matrices where otherwise obvious by blank space, and where we've switched to focusing only on the top-left block of the overall block encoding unitaries. It turns out, as we show in the next section, that \emph{copying blocks of block-diagonal block encodings}\footnote{It is important to distinguish that we will be constantly working with unitaries whose sub-blocks are matrices that themselves have block structure.} is comparatively simple, i.e., computing the following:
\begin{equation}
    \begin{bmatrix}
        \;a_0 b_0 & a_1 b_1 &  \hphantom{\ast}& \hphantom{\ast} \\
        \;a_2 b_2 & a_3 b_3 &  \hphantom{\ast}& \hphantom{\ast} \\
        & & \ast & \ast\;\;\\
        & & \ast & \ast\;\;
    \end{bmatrix}
    \;\rightarrow\;
    \begin{bmatrix}
        \;a_0 b_0 & a_1 b_1 &  \hphantom{\ast}& \hphantom{\ast} \\
        \;a_2 b_2 & a_3 b_3 &  \hphantom{\ast}& \hphantom{\ast} \\
        & &  \;a_0 b_0 & a_1 b_1\;\;\\
        & & \;a_2 b_2 & a_3 b_3\;\;
    \end{bmatrix}
    =
    I\otimes (A\circ B).
\end{equation}
If we can perform each of the above steps efficiently, then we have returned to the case that our block encoded matrix is indeed the desired\footnote{Up to a swap on the two registers required to receive the action $A\otimes B$.} $(A\circ B)\otimes I$. The main business of this work is showing that each of these steps, for general $A$ and $B$, is both possible, efficient under certain restrictions on $A, B$, and uses less space than existing methods (i.e., logarithmic versus linear in the degree of the desired element-wise product).

Finally, we briefly distinguish between element-wise products between matrices and a derived transform: element-wise functions of a single matrix. The constructions in the following section discuss costs primarily for computing block encodings of $A^{\circ k}$ given a block encoding of $A$, after which linear combination of unitaries (LCU) methods\footnote{We note here that it is non-trivial to properly prepare the \textsc{select} unitary in our setting; see Lem.~\ref{lem:ew_select}.} can be used to compute arbitrary linear combinations of these element-wise monomials. It turns out that our methods extend to a much wider class of generalized matrix products and functions, though their core mechanism follows from techniques in Sec.~\ref{subsec:swap_copy_weaving}.

%%%%%%%%%%%%%%%%%%%%%%%%%%%%%%%%%%%%%%%%%%%%%
\subsection{Swap-copy and the weaving lemma} \label{subsec:swap_copy_weaving}

\noindent We now motivate and construct a few of the core mechanisms of this work, which will allow us to reduce the auxiliary space usage for element-wise functions of block encoding from linear to logarithmic in the degree of the applied function. This ability will stem from two compatible constructions: the first operation, named \emph{swap-copy} (Lem.~\ref{lem:swap_copy}), allows us to cheaply copy blocks of block-diagonal block encodings at the cost of a small amount of additional space, while the second, a technique we call the \emph{weaving lemma} (Lem.~\ref{lem:rec_swap_copy}) allows us to recursively apply the swap-copy operation in a space-efficient way. Together these techniques allow us to build quantum circuits for element-wise products that much more closely resemble circuits for matrix products, which in turn will enable us, in the following Sec.~\ref{subsec:full_const}, to apply existing (in a compatible way) space-compression techniques (i.e., the compression gadget presented again in Appx.~\ref{appx:comp_gadget}) for high-degree matrix products of block encodings.

\begin{definition}[Registers for block encodings with block structure] \label{def:be_registers}
    For most of this work we will consider block-encoded matrices which \emph{themselves} have block structure. For block encodings acting on qubits, we can ascribe this block structure (in a given basis) to sub-registers of qubits. A common definition of a block encoding explicitly indicates a `flag qubit' whose state identifies the location of the block-encoded matrix:
        \begin{equation}
            (\langle 0 |_{r_1}\otimes I_{r_2}) U (|0\rangle_{r_1}\otimes I_{r_2}) = A_{r_2}.
        \end{equation}
    Here the register $r_1$ labels the four blocks of $U$, the top-left of which is $A$, while $r_2$ is the register on which $A$ acts. Where obvious register subscripts may be omitted for brevity.

    We will commonly work with three registers $r_1, r_2, r_3$ (not necessarily single qubits), which define respectively (1) the block-structure of the block-encoded matrix, (2) the block structure of the overall unitary, and (3) the register on which the block-encoded matrix acts; i.e., (1) defines the \emph{inner blocks}, while (2) defines the \emph{outer blocks}. Moreover, registers of the same size as $r_1, r_2,$ or $r_3$ may be instantiated, and joint registers are notated simply (e.g., $r_1 r_2$). The input state for our transform will \emph{almost always be} loaded into the $r_3$ register.
\end{definition}

\begin{lemma}[Swap-copy] \label{lem:swap_copy}
    Let $U$ block encode $(A \oplus B)$, where $\dim{B} = (2^{n}-1) \dim{A}$. Then using one query to $U$, a register of $n$ qubits initialized to $|0^n\rangle$, and $2n$ swap gates, one can deterministically construct a unitary $V$ block encoding $A^{\oplus 2^n}$. Moreover, after applying this block encoding, the additional $n$-qubit register remains in the $|0^n\rangle$ state \emph{under the assumption that the block encoding is successfully applied}.
\end{lemma}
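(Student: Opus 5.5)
The block-diagonal structure of $A \oplus B$ will be tied to a register decomposition in the sense of Def.~\ref{def:be_registers}. Write the block-encoded operator as acting on $r_1 r_3$, where $r_1$ has $n$ qubits and $r_3$ has $\log\dim A$ qubits. Then
\begin{equation}
    A\oplus B = |0^n\rangle\langle 0^n|_{r_1}\otimes A_{r_3} + (\text{terms supported on } r_1\text{-states orthogonal to } |0^n\rangle),
\end{equation}
and $U$ carries an auxiliary flag register $r_2$. The target is
\begin{equation}
    A^{\oplus 2^n} = I_{r_1}\otimes A_{r_3}.
\end{equation}
The idea is to make every value of $r_1$ behave like the $|0^n\rangle$ value. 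We do this by swapping the contents of $r_1$ into a fresh $n$-qubit register $r_1'$ prepared in $|0^n\rangle$, so that $U$ only ever sees $r_1$ in $|0^n\rangle$.

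Concretely, let $S$ denote the $n$ pairwise swaps between $r_1$ and $r_1'$, and define
\begin{equation}
    V = S\,(U_{r_1 r_2 r_3}\otimes I_{r_1'})\,S,
\end{equation}
which uses $2n$ swap gates and one query to $U$. We then treat $r_1'$ as the register carrying the inner block index, and treat $r_1 r_2$ (with $r_1$ having started in $|0^n\rangle$) as the enlarged auxiliary flag register.

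The verification is a direct matrix-element computation. Take the input $|0^n\rangle_{r_1}|0^a\rangle_{r_2}|j\rangle_{r_1'}|\psi\rangle_{r_3}$. The first $S$ maps it to $|j\rangle_{r_1}|0^a\rangle_{r_2}|0^n\rangle_{r_1'}|\psi\rangle_{r_3}$. Next we apply $U$ and project, since the final $S$ followed by projection of $r_1 r_2$ onto $|0^n\rangle|0^a\rangle$ is the same as projecting $r_1'$ and $r_2$ before the swap. The surviving component with $r_1$ equal to $|k\rangle$ is then
\begin{equation}
    \big(\langle k|_{r_1}\langle 0^a|_{r_2}\big)\,U\,\big(|j\rangle_{r_1}|0^a\rangle_{r_2}\big)\,|\psi\rangle = \langle k|(A\oplus B)|j\rangle .
\end{equation}
This equality holds for the whole $r_3$-block and uses only the block-encoding property of $U$. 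Swapping back places $|k\rangle$ in $r_1'$, so the projected operator is $\sum_{j,k} |k\rangle\langle j|_{r_1'}\otimes (A\oplus B)_{kj}$. That would simply reproduce $A\oplus B$, which is wrong.

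This is the step I expect to be the real subtlety: the plain conjugation does not copy anything. The fix I will pursue is to project instead onto $r_1'$ in $|0^n\rangle$, that is, to take $r_1'$ as part of the flag register, while $r_1$ keeps its index $j$. Then we must have $k=0$ and $j=0$, and that is again wrong.

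The correct approach is to split the register rather than fully swap. We swap so that $U$'s $r_1$ input is always $|0^n\rangle$ and the block index lives in $r_1'$, which $U$ never touches, and we do \emph{not} swap back onto the index. Concretely, take
\begin{equation}
    V = S\,(U\otimes I_{r_1'})\,S
\end{equation}
with the input index placed in $r_1'$ and $r_1$ initialized to $|0^n\rangle$ as an auxiliary register. The first $S$ is then trivial on the $r_1$-equals-$0$ subspace up to relabeling, $U$ acts on $|0^n\rangle_{r_1}$ giving $\langle 0|(A\oplus B)|0\rangle = A$ in the flagged component, and the index $j$ in $r_1'$ passes through untouched. The net result is $I_{r_1'}\otimes A$.

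The key check is the ordering of the swaps relative to which register is declared the index and which is declared auxiliary. Once we fix the convention that the index enters $r_1'$ and the auxiliary $|0^n\rangle$ sits in $r_1$, we verify that the projected block equals $\langle 0^n|_{r_1}\langle 0^a|\,U\,|0^n\rangle_{r_1}|0^a\rangle\otimes I_{r_1'} = I\otimes A$. The $2n$-swap sandwich is what makes this the native register layout of the original $U$.

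Finally, the statement that the auxiliary register returns to $|0^n\rangle$ is read off directly. Conditioned on projecting the flag onto $|0^a\rangle$, the $r_1$ component we postselect on is $|0^n\rangle$, which is exactly what "successfully applied" means. No error terms are introduced because every step is an exact permutation, and the subnormalization and error of $U$ carry over unchanged.
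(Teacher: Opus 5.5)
Your final construction is wrong as written, and the configuration you discarded just before it is the correct one. With the index in $r_1'$ and $|0^n\rangle$ in $r_1$, the operator $V = S(U_{r_1r_2r_3}\otimes I_{r_1'})S$ is literally $U_{r_1'r_2r_3}\otimes I_{r_1}$. The first swap moves $|j\rangle$ from $r_1'$ into $r_1$, so $U$ acts on $|j\rangle$, not on $|0^n\rangle$, and you recover $A\oplus B$. This is the same register assignment as your first attempt, and it gives the same result you already computed there. Your claim that the first $S$ is `trivial on the $r_1$-equals-$0$ subspace up to relabeling' is false. If you really want $U$ to see $|0^n\rangle$ and never touch the index, that is $U\otimes I_{r_1'}$ with no swaps at all, which contradicts the sandwich you wrote.

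Your middle attempt is the right one: index in $r_1$, fresh $|0^n\rangle$ in $r_1'$, flag $r_1'r_2$. It is the paper's construction, with $r_1'$ playing the role of $r_0$, and it does not force $j=k=0$:
\[
(\langle 0^n|_{r_1'}\langle k|_{r_1}\langle 0^a|_{r_2})\,V\,(|0^n\rangle_{r_1'}|j\rangle_{r_1}|0^a\rangle_{r_2}) = \delta_{jk}\,\langle 0^n|_{r_1}\langle 0^a|_{r_2}\,U\,|0^n\rangle_{r_1}|0^a\rangle_{r_2} = \delta_{jk}\,A .
\]
So $V$ block encodes $I_{r_1}\otimes A$. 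Your error there was to project $r_1'$ onto $|0^n\rangle$ before the second swap. Projecting $r_1'$ after the swap corresponds to projecting $U$'s output register $r_1$ before it.

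The second gap is the ``Moreover'' clause, which you make tautological by including the $n$-qubit register in what you postselect on. The lemma's point, and what the weaving lemma needs in order to reuse $r_0$, is different. One postselects only on the original flag $r_2$, and the fresh register is then \emph{automatically} back in $|0^n\rangle$. This is exactly where block-diagonality of $A\oplus B$ enters. For $p\neq 0$,
\[
(\langle p|_{r_1'}\langle k|_{r_1}\langle 0^a|_{r_2})\,V\,(|0^n\rangle_{r_1'}|j\rangle_{r_1}|0^a\rangle_{r_2}) = \delta_{jk}\,(A\oplus B)_{p0} = 0 ,
\]
where $(A\oplus B)_{p0}$ is the $(p,0)$ block with respect to $r_1$. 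Hence within the $r_2$-flagged block there are no terms $|p\rangle\langle 0|_{r_1'}$. This is the paper's observation that only the $|0\rangle\langle 0|_{r_0}$ term and the $p,q\geq 1$ sum survive.

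Your argument never invokes the block-diagonal hypothesis anywhere, which shows this part is missing. The block-encoding claim alone holds for any $U$ whose top-left $r_1$-block is $A$; the hypothesis is needed only for catalysis.
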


\begin{proof}
    The generic form of the block encoding unitary can be written (suppressing labels giving the sizes of the registers):
        \begin{equation}
            U_{A \oplus B}
            =
            |0\rangle\langle0|_{r_2} \otimes
            \left[
            |0\rangle\langle 0|_{r_1} \otimes A_{r_3}
            +
            \left(\sum_{j,k = 1}^{2^n - 1} |j\rangle\langle k|_{r_1} \otimes [B_{j,k}]_{r_3}\right)
            \right]
            +
            \cdots.
        \end{equation}
    Introducing a new register $r_0$ of $n$ qubits (the same size as $r_1$), the action of $U_{A\oplus B}$ under conjugation by the register-wise swap gate $S_{r_0, r_1}$ (under the assumption that the register $r_0$ is initialized to $|0\rangle_{r_0}$) is
        \begin{align}
            &S_{r_0, r_1}\,
            U_{A \oplus B}\,
            S_{r_0, r_1}\nonumber\\
            &=
            \left[\sum_{i = 0}^{2^n - 1} |0\rangle\langle i|_{r_0} \otimes |i\rangle\langle 0|_{r_1} + \dots \right]
            U_{A \oplus B}\left[\sum_{i = 0}^{2^n - 1} |i\rangle\langle0|_{r_0} \otimes |0\rangle\langle i|_{r_1} + \cdots \right]
            \\
            &=
            |0\rangle\langle0|_{r_2}
            \otimes
            \left[
            |0\rangle\langle 0|_{r_0}
            \otimes
            I_{r_1} \otimes A_{r_3}
            +
            \sum_{\substack{p, q = 1\\j, k = 0}}^{2^n-1} |p\rangle\langle q|_{r_0}\otimes |j\rangle\langle k|_{r_1} \otimes [\;\ast\;]_{r_3}\;
            \right]
            +
            \cdots.
        \end{align}
    We have dropped those terms that (1) appear outside the top left block of the block encoding unitary (labeled by $r_2$), or (2) appear if the register $r_0$ were initialized to some other state than $|0\rangle$, or (3) vanish given the block diagonal nature of the block encoded matrix. This is a block encoding of $A^{\oplus 2^n}$ on $r_1 r_3$ when the register $r_0$ is initialized to the $|0\rangle$ state, and the state in $r_0$ is used catalytically given the second sum over $p, q$. Recall that $r_2$ indexes the \emph{outer block} structure of our \emph{block encoding unitary}, while $r_1$ index the \emph{inner block} structure of our \emph{block encoded} matrix. A depiction of this circuit is given in Fig.~\ref{fig:swap_copy}.
\end{proof}

As an example of the above lemma, in the $n = 1$ case\footnote{Note the general case results in $2^n$ copies of $A$ along the diagonal, denoted $A^{\oplus 2^n}$.}, when $A$ and $B$ are both $2\times2$ matrices, this conversion can be written
\begin{equation} \label{eq:block_be}
    U =
    \left[
    \begin{array}{cc|c}
         A & & \dots \\
         & B & \\\hline
         \vdots & & \ddots
    \end{array}
    \right]
    \;\longmapsto\;
    V =
    \left[
    \begin{array}{cc|c}
         A & & \dots \\
         & A & \\\hline
         \vdots & & \ddots
    \end{array}
    \right],
\end{equation}
and this process catalytically uses a single qubit initialized to $|0\rangle$, one query to $U$, and two swap gates.

\begin{figure}
    \centering
    \includegraphics[width=\linewidth]{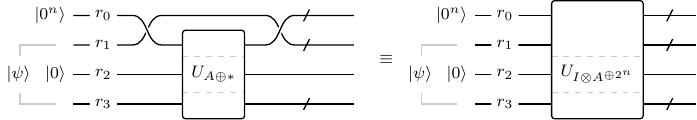}
    \caption{Circuit for swap-copy (Lem.~\ref{lem:swap_copy}). The catalytically used state $|0\rangle$ in $r_0$ (the \emph{weaving state}; see Lem.~\ref{lem:rec_swap_copy}) converts a block encoding of $(A \oplus\ast)_{r_1 r_3}$ to a block encoding of $I_{r_1}\otimes A_{r_3} = A^{\oplus 2^n}_{r_1 r_3}$, consuming only a single call to the block encoding, requiring single-qubit swap gates, and catalytically using a small amount of additional space \emph{if the block encoding succeeds}. On the left is the applied circuit; on the right is an equivalent form, i.e., a block encoding of $I_{r_0}\otimes I_{r_1}\otimes A_{r_3}$, \emph{under the assumption} $r_0$ begins in the all zeros state.}
    \label{fig:swap_copy}
\end{figure}

\begin{definition}[Submatrix permutation unitary] \label{def:submat_permut}
    We define the \emph{submatrix permutation} to be the unitary on $2n$ qubits applying CNOTs between respective pairs of qubits in each $n$-qubit half. This has form
        \begin{equation}
            P = \sum_{j, k = 0}^{2^n - 1} |j\rangle\langle j|\otimes |j \oplus k\rangle\langle k|,
        \end{equation}
    where if $j,k$ have binary representations $j_0 \dots j_{n-1}$ and $k_0 \dots k_{n-1}$ respectively, then the notation $j \oplus k$ defines the integer whose binary representation is $(j_0 \oplus k_0) \dots (j_{n-1} \oplus k_{n-1})$. As was observed in \cite{gychanar_had_prod_25}, this is precisely the permutation required to relocate the sub-matrix of $(A\otimes B)$ equal to the element-wise product $(A\circ B)$ to the top left corner, i.e.,
    \begin{equation}
        (I_n\otimes \langle 0^n|)P (A\otimes B) P^\dagger (I_n\otimes |0^n\rangle) = A \circ B.
    \end{equation}
    Here the CNOTs span the two $n$-qubit registers implicit in the tensor product $(A\otimes B)$.
\end{definition}

\begin{definition}[Mask unitary] \label{def:mask}
    We define the \emph{mask} as the unitary on registers $r_1$, $r_2$, and $r_3$ that block encodes the following operation on $r_1 r_3$:
    \begin{equation}
        M = \frac{1}{2}
        \left(I_{r_1} \otimes I_{r_3} + \Big[ 2|0\rangle\langle 0|_{r_1} - I_{r_1}\Big] \otimes I_{r_3}\right) = |0\rangle\langle0|_{r_1} \otimes I_{r_3}.
    \end{equation}
    Such a block encoding can be simply constructed through an LCU \cite{cw_lcu_12, bccks_ham_lcu_14, bck_ham_lcu_15} of state-dependent reflections. Here $|0\rangle$ indicates the all-zeros state on the labeled register.
\end{definition}

It is clear from the above discussion of the swap-copy operation (Lem.~\ref{lem:swap_copy}) that block diagonality of block encodings is a useful assumption. In the following definition and lemma we present a lifted and generalized version of this construction, concerning the resource requirements for the composition of operator-valued functions with block structure in certain bases, given certain arguments, over certain spaces of states. While the statement we give is a little abstract, in practice these operator-valued function will be transformations of block encodings, and the ultimate purpose of the results of this section will be to achieve space efficient compositions of block encoding transformations which themselves have additional, special block structure.

\begin{definition}[Blocked operator] \label{def:blocked_op}
    Let $W(\ast_1, \ast_2)$ be a function that takes as input two operators\footnote{Bounded, finite dimensional, linear operators over Hilbert spaces.} and returns an operator. Moreover, let $S$ be a set of pairs of operators and $R, T$ sets of quantum states such that for all $(V_1, V_2) \in S$, all $|r\rangle \in R$, and all $|t\rangle \in T$, the following holds:
        \begin{align}
            % W(V_1, V_2)\,|r\rangle_{r_0} |t\rangle_{r_1} &= |r\rangle_{r_0} |\ast\rangle_{r_1},\quad\text{equivalently,}\label{eq:block_op_1}\\
            W(V_1, V_2) &= |r\rangle\langle r|_{r_{0}} \otimes U(V_1, V_2)_{r_1} + \sum_{r', r'' \neq r} |r'\rangle\langle r''|_{r_0} \otimes [\;\ast\;]. \label{eq:block_op_2}
        \end{align}
    Then we call $W$ an $(S, R, T)$-blocked operator function, or just a \emph{blocked operator} when context is clear. We say $W$ is \emph{sequential} if it can be written as the composition\footnote{Here we mean \emph{composition as linear operators}, and not the functional composition of linear operator-valued functions. Confusingly, in literature operator composition is sometimes notated $A \circ B$.} of possibly many linear operators, some of which may be copies of $V_1$ and $V_2$.
        
    In (\ref{eq:block_op_2}) $U(V_1, V_2)$ is the operator on $r_1$ induced by initializing the $r_0$ register to $|r\rangle$. Thus, $W$ returns operators that catalytically consume $|r\rangle$ (in the $r_0$ register), when fed operator pairs in $S$, assuming $r_1$ begins in $T$. Equivalently $W$, when given elements in $S$, is block-diagonal according to orthonormal bases containing elements $|r\rangle|t\rangle \in R\otimes T$. Here $r', r''$ each run over labels to basis elements orthogonal to $|r\rangle$.
\end{definition}

We can consider the following question about \emph{blocked operators}: given a composition of many $U_k(\ast_1, \ast_2)$ acting on $r_1$, each of which were induced by some $W_k$ sharing $(S, R, T)$, how many copies of $|r\rangle$ total, and across an auxiliary register of what size, are needed to implement this composition \emph{using only the $W_k$}? In other words we want to simulate the action of the composition of $U_k$ through the use of $W_k$ on some possibly larger Hilbert space. For simplicity, all $W_k$ share $(S, R, T)$ and we assume $V_1 = U_{k'}$, $V_2 = U_{k''}$ from any $W_k'$, $W_k''$ is a sufficient condition for $(V_1, V_2) \in S$.

We claim that the number of copies of $|r\rangle$ is equal to the \emph{depth of the tree representing the desired composition} of the $U_k$, and provide a brief statement and proof of this fact in the dedicated Lem.~\ref{lem:rec_swap_copy}. We will refer to this result as the \emph{weaving lemma}, as copies of $|r\rangle$ are re-used by `weaving' them through applications of the $W_k$ to induce $U_k$.

\begin{lemma}[Recursive composition of blocked operator functions; \emph{weaving lemma}] \label{lem:rec_swap_copy}
    Let $W_k$ for $k \in K$ all $(S, R, T)$-blocked operator functions (Def.~\ref{def:blocked_op}) acting on collections of qubits, and $F$ a tree associated to a composition of the $U_k$. Then if the $W_k$ are all \emph{sequential}, the space required to implement $F$ using the $W_k$ grows at most linearly in the depth of $F$.

    \begin{proof}
        Proof is by casework. If $U_k(\ast_1, \ast_2)$ is a leaf (i.e., applied only to pairs $(V_{1,k}, V_{2,k})$, then it can be implemented using $W_k$ consuming a single copy of $|r\rangle$ by definition. Alternatively, $U_k(\ast_1, \ast_2)$ might be a node:
            \begin{alignat}{2}
                U_{k}
                &=
                U_{k}(U_{k'}(\ast_1, \ast_2), U_{k''}(\ast_3, \ast_4)), \quad\quad&&\text{or}\\
                &= U_{k}(V_1, U_{k'}(\ast_1, \ast_2)), &&\text{or}\\
                &= U_{k}(U_{k'}(\ast_1, \ast_2), V_2).&&
            \end{alignat}
        In any case, as each of the $U_{k'}, U_{k''}$ are used sequentially by assumption, $U_k$ can be implemented using two copies total of $|r\rangle$, each in their own register, together with calls to $W_{k}$, $W_{k'}$, and possibly $W_{k''}$. This will be added to any copies required by the \emph{arguments} of $W_{k'}$ or $W_{k''}$ further down in the tree. I.e., an additional layer of recursion necessitates only a single fresh copy of the state $|r\rangle$ on its own register to convert each $W_k$ to its respective $U_k$ on $r_1$. The sequential assumption permits the catalytic use (allocation and freeing) of $|r\rangle$ across arguments of any $W_k$. This immediately yields space complexity that is linear in the depth of the tree. A depiction of this argument is given in Fig.~\ref{fig:weaving_lemma}.
    \end{proof}
\end{lemma}

\begin{figure}
    \centering
    \includegraphics[width=0.8\linewidth]{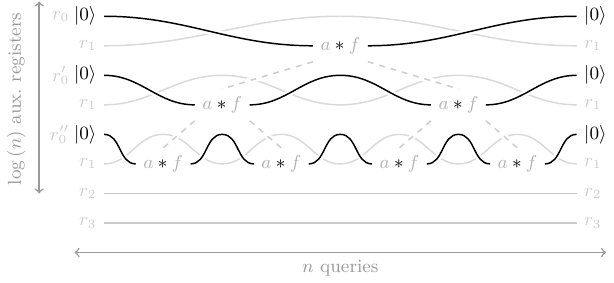}
    \caption{Simplified depiction of a specific instance of the weaving lemma (Lem.~\ref{lem:rec_swap_copy}). Here $(\ast)$ is a quantum process that catalytically uses copies of $|0\rangle$ (possibly on many qubits) in a register $r_0$ by swapping the contents of $r_0$ with this process' $r_1$ register (i.e., a blocked operator, Def.~\ref{def:blocked_op}). The use by $(\ast)$ of the contents of $r_0$ is indicated by an allocation $(a)$ followed by a freeing $(f)$. If this process itself can sequentially call copies of $(\ast)$, then the total required copies of register $r_0$, here denoted by $r_0', r_0'', \dots,$ grows with the depth of the tree induced by recursive function calls (seen by the re-use of $r_0$ across protocols at the same depth).}
    \label{fig:weaving_lemma}
\end{figure}

\begin{proposition}[Block encoding the element-wise product] \label{prop:ewt_simple_form}
    Given oracle access to unitaries $U_A$ and $U_B$, respectively $(1, a, 0)$ and $(1, b, 0)$ block encodings of $A$ and $B$ (square matrices of equal dimension), we can block encode their element-wise product
        \begin{align}
            (\langle 0|_{r_2} \otimes I_{r_1 r_3})(P')_{r_1 r_3} (I_{r_1}\otimes U_{A, r_3}) &(I_{r_3} \otimes U_{B, r_1})(P')_{r_1 r_3}^\dagger
            (|0\rangle_{r_2} \otimes I_{r_1 r_3})\\
            &=
            (\langle 0|_{r_2} \otimes I_{r_1 r_3})U_{P(A\otimes B)P^\dagger, r_1 r_3}(|0\rangle_{r_2} \otimes I_{r_1 r_3})&\\
            &=
            |0\rangle\langle 0|_{r_1} \otimes (A\circ B)_{r_3} +\, \cdots
        \end{align}
    using a single query to each of $U_A$ and $U_B$. Here $(P')$ (Def.~\ref{def:submat_permut}) is the unitary comprising $n$-CNOT gates permuting rows and columns of $(A\otimes B)$ to move $(A \circ B)$ to the top-left block. This unitary, which acts only on $r_1$ and $r_3$, can thus be taken into the block encoding unitary's subscript, denoted here $P(A\otimes B)P^\dagger$. For brevity from here on we will simply use $P$ to denote \emph{both} this qubit-wise CNOT gate and the induced permutation matrix.

    \begin{proof}
        This result is known and its proof reproduced in multiple prior works \cite{zzrf_basic_q_lin_alg_21, gychanar_had_prod_25}, essentially following from the known identity that the element-wise product is a sub-matrix of the tensor product \cite{hj_matrix_analysis_85}.
    \end{proof}
\end{proposition}

Important to note in Prop.~\ref{prop:ewt_simple_form} is that $r_1$ now labels \emph{blocks} of the \emph{block encoded matrix} $P(A\otimes B)P^\dagger$, and that the other blocks contain related (but undesirable) products of elements $A, B$. Consequently there is no easy way to feed this block encoding back into another element-wise product protocol without the instantiation of additional space---in the following Prop.~\ref{prop:clean_ewt_prod_form} we discuss one method of \emph{cleaning} this block encoding (i.e., removing these undesirable blocks outside of $(A\circ B)$ and performing a swap-copy operation; see Lem.~\ref{lem:swap_copy}) to make alternative methods for auxiliary space reduction applicable. This results in a block encoding which not only has $(A \circ B)$ as its top-left block, but which has the form $I_{r_1}\otimes(A\circ B)_{r_3}$, which is the same form as the input block encodings for our tensor-product to element-wise product gadget in Prop.~\ref{prop:ewt_simple_form}.

\begin{proposition}[Block encoding the clean element-wise product] \label{prop:clean_ewt_prod_form}
    Let $U_A$ and $U_B$ be block encoding unitaries for square matrices $A$, $B$ of equal dimension. Then the clean block encoding of their element-wise product $(A \circ B)$ can be written as a product\footnote{That is, this product, denoted where relevant by $\times$, is not between the block encoding unitaries but \emph{between the matrices they block encode}. Achieving this requires a block-encoding multiplication subroutine (e.g., Def.~\ref{def:be_mult} or Lem.~\ref{lem:comp_gadget}) and the use of additional space, discussed later and here denoted by the register label `mult.'} of three \emph{block-encoded matrices} ($U_M$, $U_{P (A\otimes B) P^\dagger}$, and $U_M^\dagger$) interspersed with unitaries. Here we denote this as an operator-valued function $W$ on $r_0, r_1, r_3$ of the block encodings of $A$ and $B$:
    \begin{align}
        W_{r_0 r_1 r_3}(U_A, U_B) \equiv& \label{eq:operator_func_def}\\
        (\langle 0|_{r_2, \text{mult}} \otimes I_{r_1 r_3}) S_{r_0 r_1} \big[ U_M \times U_{P (A\otimes B) P^\dagger} \times U_M^\dagger \big] S_{r_0 r_1} (| 0\rangle_{r_2, \text{mult}} \otimes I_{r_1 r_3})=&\label{eq:unitary_func_1}\\[0.2em]
        (\langle 0|_{r_2, \text{mult}} \otimes I_{r_1 r_3}) S_{r_0 r_1} \big[U_{M P (A\otimes B) P^\dagger M^\dagger}\big] S_{r_0 r_1} (| 0\rangle_{r_2, \text{mult}} \otimes I_{r_1 r_3})=&\label{eq:unitary_func_2}\\[0.2em]
        |0\rangle\langle 0|_{r_0} \otimes |0\rangle\langle 0|_{r_2} \otimes (I_{r_1}\otimes (A\circ B)_{r_3}) + \cdots.& \label{eq:unitary_func_3}
    \end{align}
    This product uses one copy each of the block encoding unitaries for $A$ and $B$, a constant amount of additional space, and ultimately frees the register $r_0$ initialized to $|0\rangle_{r_0}$, \emph{if the block encoding is successfully applied}. Explicitly, we give the circuit form of Fig.~\ref{fig:full_prod} in (\ref{eq:unitary_func_1}), apply the compression gadget multiplications in (\ref{eq:unitary_func_2}), and apply our swap-copy identity in (\ref{eq:unitary_func_3}). Note finally the register labeled \emph{mult} (later referred to as $r_4$ and explicated along with compression techniques in Sec.~\ref{subsec:full_const}), which is used to multiply the block encoded input matrices and mask matrices, and which returns to the state $|0\rangle$ precisely when this multiplication is successful.
\end{proposition}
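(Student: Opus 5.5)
The plan is to verify the three displayed equalities in turn, tracking only the top-left (successful) block. Throughout, $U_A, U_B$ are $(1,a,0)$ and $(1,b,0)$ block encodings, and $r_2$ collects their flag qubits.

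\textbf{Step 1: the middle factor.} By Prop.~\ref{prop:ewt_simple_form}, the circuit $P_{r_1 r_3}(I_{r_1}\otimes U_{A,r_3})(I_{r_3}\otimes U_{B,r_1})P^\dagger_{r_1 r_3}$ is a block encoding of $P(A\otimes B)P^\dagger$ on $r_1 r_3$. Using Def.~\ref{def:submat_permut}, I write this encoded matrix explicitly in the $r_1$ basis:
\begin{equation}
P(A\otimes B)P^\dagger=\sum_{j,k}|j\rangle\langle k|_{r_1}\otimes C^{(jk)}_{r_3},
\end{equation}
where $C^{(00)}=A\circ B$. The matrices $C^{(jk)}$ with $(j,k)\neq(0,0)$ are the undesired blocks $\ast$.

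\textbf{Step 2: apply the mask.} The mask $M$ of Def.~\ref{def:mask} is block encoded, with subnormalization $1$ and a single ancilla, by an LCU of $I$ and the reflection $2|0\rangle\langle0|_{r_1}-I$. Since $M$ is Hermitian, $U_M^\dagger$ encodes $M^\dagger=M$. Multiplying the three encoded matrices, either via Def.~\ref{def:be_mult} or via the compression gadget of Lem.~\ref{lem:comp_gadget} on the \emph{mult} register, gives a block encoding of
\begin{equation}
M P(A\otimes B)P^\dagger M=|0\rangle\langle0|_{r_1}\otimes(A\circ B)_{r_3}.
\end{equation}
This establishes (\ref{eq:unitary_func_1})$=$(\ref{eq:unitary_func_2}). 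The product has subnormalization $1$ because every factor does. Its top-left block is now block diagonal in $r_1$, of the form $(A\circ B)\oplus 0$, so the hypothesis of Lem.~\ref{lem:swap_copy} holds with $B$ replaced by the zero matrix of dimension $(2^n-1)\dim(A\circ B)$.

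\textbf{Step 3: swap-copy.} Conjugating by $S_{r_0 r_1}$ with $r_0$ prepared in $|0\rangle$, Lem.~\ref{lem:swap_copy} yields the top-left block $|0\rangle\langle0|_{r_0}\otimes I_{r_1}\otimes(A\circ B)_{r_3}$. All remaining contributions are off-diagonal in $r_0$, in the form $|p\rangle\langle q|_{r_0}$ with $p,q\neq0$, or lie outside the flagged $r_2$/\emph{mult} block; these are the ``$\cdots$'' in (\ref{eq:unitary_func_3}).

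\textbf{Step 4: projection and bookkeeping.} Projecting onto $|0\rangle_{r_0}$ shows that $r_0$ is returned to $|0\rangle$ on the successful branch, so it is freed, and that the resource claims hold. The circuit uses one query each to $U_A$ and $U_B$. The additional space is constant in the degree: one LCU qubit for each mask, plus the multiplication and compression flags.

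The main obstacle is the commutation of the swap with the projections. I must check that $S_{r_0 r_1}$ acts only on registers disjoint from $r_2$ and \emph{mult}, so it commutes with $\langle0|_{r_2,\text{mult}}$. I must also check that the cross terms produced in the proof of Lem.~\ref{lem:swap_copy} genuinely vanish here, which requires the encoded matrix to be \emph{exactly} block diagonal in $r_1$. The mask guarantees this only because each block encoding has zero error. For approximate encodings I would carry a residual error term through Def.~\ref{def:be_mult}, bounded by the products of the subnormalizations and errors.
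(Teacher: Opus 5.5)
Your proposal is correct and follows the same route as the paper: multiply the mask encodings on either side of the encoding of $P(A\otimes B)P^\dagger$ to isolate the $|0\rangle\langle0|_{r_1}\otimes(A\circ B)$ block, then apply swap-copy with $r_0$ in $|0\rangle$ while $r_2$ and the \emph{mult} register are post-selected on $|0\rangle$. You simply make explicit the steps the paper's short proof only sketches. One small correction: the exact block-diagonality after masking does not actually depend on $U_A, U_B$ being error-free, since $M\tilde{X}M = |0\rangle\langle0|_{r_1}\otimes \tilde{X}^{(00)}$ holds exactly for whatever $\tilde{X}$ the middle circuit encodes, so input errors only perturb the surviving block.
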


\begin{proof}
    The proof of this statement follows directly from the known action of the mask unitary multiplication, which selects out only the $|0\rangle\langle 0|_{r_1}$ sub-block of the block encoded $P (A\otimes B) P^\dagger$, followed by the swap-copy operation that ensures that all computational basis states in $r_3$ are acted on \emph{as if} $r_1$ had been in the state $|0\rangle$ (as this state is indeed swapped in from $r_0$), resulting in the overall action of $I_{r_1}\otimes (A\circ B)_{r_3}$, \emph{under the assumption} that $r_2$ begins and ends in the state $|0\rangle$.
\end{proof}

In the above proposition, note that there can be two different swap operations (denoted $S$, and depicted in Fig.~\ref{fig:full_prod}), one between the two registers originally holding $A$ and $B$, i.e., $(A_{r_1}\otimes B_{r_3})$, as well as one between $r_0$ and $r_1$ (used for the weaving lemma; Lem.~\ref{lem:rec_swap_copy}).

\begin{remark}[On error in block encoding inputs] \label{rem:be_error}
    The main theorems in this section are presented in the case that the input block encodings have no error, i.e., are of the form $(\alpha_k, a_k, 0)$ for some set of permissible $k$. This has been done both for brevity, and because the calculations involved in the generic case of $(\alpha_k, a_k, \varepsilon_k)$ are well known in the context of common arithmetic operations among block encodings. Given the close relationships we rely on between the element-wise and more common matrix products, and the linearity of this error with respect to the outer loop of LCU (see the main Thm.~\ref{thm:qewt}),\footnote{For the curious reader we note that when applying QSVT this error propagation becomes slightly more complex, as the relation between the operator norm of the difference of two linear operators, and the difference between a chosen function applied to those operators, depends on the function's modulus of continuity (see Sec. 13.7 of \cite{lw_lecture_notes_26}).} error propagation in our results will go through almost identically (and will always have a simple upper bound) in terms of these previous results. We include detailed discussion of these error-propagation techniques\slash known results in the dedicated Appx.~\ref{appx:error_prop}.
\end{remark}

The statement of Prop.~\ref{prop:clean_ewt_prod_form} immediately suggests a definition for the unitary-valued function (\ref{eq:unitary_func_3}) used in the weaving lemma (Lem.~\ref{lem:rec_swap_copy}). In the following proposition we give this function explicitly, before discussing how it interacts with other space-compression techniques in block encoding manipulation towards the formal statement of our main result in the following Sec.~\ref{subsec:full_const}.

\begin{proposition}[Operator-valued function for element-wise products] \label{prop:swap_copy_unitary_func}
    As hinted in the statement of Prop.~\ref{prop:clean_ewt_prod_form}, the operator-valued function to which we'll apply Lem.~\ref{lem:rec_swap_copy} is the same as that given by (\ref{eq:operator_func_def}), which has the form required by the weaving lemma when considered as a function of its block encoding inputs, i.e.:
    \begin{equation} \label{eq:weaving_lem_func}
        W(\ast_1, \ast_2)_{r_0 r_1 r_3} \equiv |0\rangle\langle 0|_{r_0} \otimes U(\ast_1, \ast_2)_{r_1 r_3} + \sum_{r', r'' \neq r} |r'\rangle\langle r''|_{r_0} \otimes [\;\ast\;]_{r_1 r_3}.
    \end{equation}
    In this case the sets $S, R, T$ comprise respectively, (1) block encodings of square operators of the same dimension, (2) the quantum state $|0\rangle$ on a register of $n$ qubits, and (3) any quantum state on a register of $2n$ qubits.

    Note for now that the registers $r_2, r_4$, which handle our outer block selection and block encoded matrix multiplication respectively, are assumed to begin in and return to the state $|0\rangle$ to ensure the success of our protocol (i.e., respectively ensuring we correctly end up in the top-left outer block of our block encoding, and that the masking operation succeeded). In other words, the full $W_{r_0 r_1 r_2 r_3 r_4}$ unitary-valued function itself has block structure, the $|0\rangle\langle0|_{r_2 r_4}$ block of which is the expression in \eqref{eq:weaving_lem_func}. In the following sections we further explicate how these registers function, their required sizes, the success probability of the resulting protocol, and how this action interacts with the swap-copy operation, the weaving lemma, and recursively defined higher-degree protocols.
\end{proposition}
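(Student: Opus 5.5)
The plan is to verify directly that the operator-valued function $W$ of \eqref{eq:operator_func_def} meets Def.~\ref{def:blocked_op} with the stated triple $(S,R,T)$, and that it is \emph{sequential}. Throughout I restrict to the block of the full unitary $W_{r_0 r_1 r_2 r_3 r_4}$ where $r_2$ and $r_4$ are projected onto $|0\rangle$. Since this restricted operator is exactly what Prop.~\ref{prop:clean_ewt_prod_form} computes, that is the block I will analyze.

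Step one is sequentiality, which I take directly from the circuit form \eqref{eq:unitary_func_1}. $W(U_A,U_B)$ is a composition of linear operators: the swaps $S_{r_0 r_1}$, the permutations $P,P^\dagger$ (CNOTs), the mask block encodings $U_M, U_M^\dagger$, the compression-gadget multiplication unitaries, and a single copy each of $U_A$ (on $r_3$ plus its auxiliary register) and $U_B$ (on $r_1$ plus its auxiliary register), together with the register swap between the $A$ and $B$ slots. Because each input appears as a factor in a product and is never placed under a tensor power or in a superposition of uses, $W$ is sequential. It remains well defined whenever $U_A$ and $U_B$ are block encodings of square operators of equal dimension, and this is exactly the set $S$.

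Step two is the block form. Take $(U_A,U_B)\in S$. By Prop.~\ref{prop:ewt_simple_form}, the inner product $U_M\times U_{P(A\otimes B)P^\dagger}\times U_M^\dagger$ block encodes
\[
|0\rangle\langle0|_{r_1}\otimes (A\circ B)_{r_3}
\]
on $r_1 r_3$. The reason is that the mask $M=|0\rangle\langle 0|_{r_1}\otimes I_{r_3}$ from Def.~\ref{def:mask} annihilates every block $|j\rangle\langle k|_{r_1}$ with $(j,k)\neq(0,0)$. The resulting matrix is block diagonal in $r_1$, with $A\circ B$ in the $|0\rangle$ block and $0$ elsewhere, so Lem.~\ref{lem:swap_copy} applies to conjugation by $S_{r_0 r_1}$. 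Expanding the swap as in that proof gives the term
\[
|0\rangle\langle0|_{r_0}\otimes I_{r_1}\otimes(A\circ B)_{r_3},
\]
and every remaining term is supported on $|p\rangle\langle q|_{r_0}$ with $p,q\neq 0$. No cross terms $|0\rangle\langle q|_{r_0}$ or $|p\rangle\langle 0|_{r_0}$ with $p,q\neq 0$ survive. To see this, note that such a term would require the $r_1$-blocks $(0,k)$ or $(j,0)$ of the masked matrix with $k\neq0$ or $j\neq0$, and those blocks vanish. This is precisely the form \eqref{eq:weaving_lem_func}, with $R=\{|0^n\rangle\}$ and $U(U_A,U_B)=I_{r_1}\otimes(A\circ B)_{r_3}$. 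Because the $|0\rangle\langle0|_{r_0}$ block is an operator on all of $r_1r_3$, the statement holds for any input state on the $2n$-qubit register $r_1 r_3$, which gives $T$.

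Step three makes sure the decomposition is compatible with the weaving lemma. The output $I_{r_1}\otimes(A\circ B)_{r_3}$ has the same form as the inputs to Prop.~\ref{prop:ewt_simple_form}, and it is again a block encoding of a square operator of the same dimension, so it lies back in $S$. This is the closure condition stated before Lem.~\ref{lem:rec_swap_copy}.

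I expect the main obstacle to be the vanishing of the off-diagonal $r_0$ blocks. That vanishing relies on the mask being exact, and on the multiplication register $r_4$ and outer register $r_2$ being postselected to $|0\rangle$. Outside that block the mask does not act as a projector, so I would state everything explicitly for the $|0\rangle\langle0|_{r_2r_4}$ block, as the proposition does. I would also check that the compression-gadget multiplication in \eqref{eq:unitary_func_2} does not act on $r_0$, so that it commutes with the swaps.
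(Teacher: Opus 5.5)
Your proposal is correct and follows essentially the paper's own route. The paper supports this proposition through Prop.~\ref{prop:clean_ewt_prod_form}: the mask isolates the $|0\rangle\langle 0|_{r_1}$ block $A\circ B$ of $P(A\otimes B)P^\dagger$, and the swap-copy of Lem.~\ref{lem:swap_copy} then produces $|0\rangle\langle 0|_{r_0}\otimes I_{r_1}\otimes (A\circ B)_{r_3}$ inside the $|0\rangle\langle 0|_{r_2 r_4}$ block; this is exactly your Step two, and your checks of sequentiality, vanishing $r_0$ cross terms, and closure of $S$ make explicit what the paper leaves implicit. One small sharpening: because the masked operator is exactly $|0\rangle\langle 0|_{r_1}\otimes (A\circ B)_{r_3}$, conjugating by $S_{r_0 r_1}$ gives exactly $|0\rangle\langle 0|_{r_0}\otimes I_{r_1}\otimes (A\circ B)_{r_3}$, so the $|p\rangle\langle q|_{r_0}$ blocks with $p,q\neq 0$ are in fact zero rather than merely unconstrained.
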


To summarize our progress before the full construction, we have created a gadget in Lem.~\ref{lem:rec_swap_copy} that recursively applies the swap-copy operation of Lem.~\ref{lem:swap_copy}; in turn this allows quantum circuits whose unitaries have nested block-structure to be composed in a space-efficient way \emph{up to assumptions that we can post-select on a subset of auxiliary qubits and end up in critical blocks}.\footnote{I.e., the same assumptions that are usually made in block-encoding algorithms, though careful analysis of the success probability of these post-selections is required, and the subject of both Secs.~\ref{subsec:full_const} and \ref{sec:alt_const}.} In the next section we build on top of these lemmas to provide explicit circuits for both element-wise powers and linear combinations of these element-wise powers (the latter of which is not trivial, and exposes oversights in prior work that we rectify). An important aspect of these extensions is that we maintain logarithmic space auxiliary usage in the degree of the applied polynomial element-wise function, which requires us to keep track of both (1) the usual compression gadget accumulation register as well as (2) the recursively-applied swap-copy register. While we leave thus discussion to the later section on alternative constructions (Sec.~\ref{sec:alt_const}) and the final Sec.~\ref{sec:discussion}, it is an interesting research direction to explore improved block encoding manipulation subroutines under assumptions of block structure or other side information.

%%%%%%%%%%%%%%%%%%%%%%%%%%%%%%%%%%%%%%%%%%%%%
\subsection{Full construction} \label{subsec:full_const}

\noindent We build on the construction of the previous section (which used the weaving lemma, Lem.~\ref{lem:rec_swap_copy}, to convert a circuit computing element-wise products into one computing matrix products) and apply a series of known techniques for space reduction (so-called \emph{compression gadgets}, see Appx.~\ref{appx:comp_gadget}) in block encoding algorithms for computing matrix products. Most of the remaining work thus lies in showing these existing methods are compatible with our previous constructions, as well as examining their behavior under small errors in the input block encoding, ultimately showing we can achieve exponential space reduction in the degree of the applied element-wise function.

We first discuss the basic properties of these \emph{compression gadgets}, followed by an explicit circuit for the recursively-defined self-element-wise product (Thm.~\ref{thm:iterated_prod}), which can be bootstrapped through LCU techniques (with an important detour to correct an omission in prior work, covered in Appx.~\ref{appx:lcu_nonstandard}) to yield our main result (Thm.~\ref{thm:qewt}) characterizing the space and query complexity of applying element-wise functions to block encodings. Along the way we introduce a variety of block encoding manipulation techniques, indicating where they require modification.

The compression gadget, introduced in \cite{lw_comp_gadget_19} in the context of Hamiltonian simulation, applied widely in dynamical systems simulation \cite{flt_time_marching_23}, and recently extended in \cite{vg_reducing_space_25}, allows for dramatic space-savings when block encoding high-degree \emph{matrix products} of block encodings. The core idea is to encode the success or failure of intermediate block encoding applications in binary on an auxiliary register, rather than unary; this exchanges unimportant information on which intermediate block encoding application failed in event of an out-of-block measurement for exponential space savings in the degree of the product. We restate this result more casually below, and refer interested readers to its formal statement and associated circuit in Appx.~\ref{appx:comp_gadget} (which also points toward tangential but interesting developments of related space-saving gadgets).

\begin{lemma}[Compression gadget; \emph{informal}] \label{lem:comp_gadget_informal}
    Let $U_k$ for $k \in [K]$ a series of unitaries respectively block encoding (square) $A_k$. Then we can construct a unitary using one copy each of the $U_k$ and $\log{(K)}$ auxiliary qubits that block encodes the product
        \begin{equation}
            B = A_1 A_2 \cdots A_K,
        \end{equation}
    whereas naïve repetition of the usual product gadget would require $\mathcal{O}(K)$ auxiliary qubits. The success probability for applying $B$ to the input quantum state depends on the norm-squared of the resulting state.
\end{lemma}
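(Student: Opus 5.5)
The plan is to encode the success or failure of each of the $K$ intermediate block-encoding applications in a binary counter rather than in $K$ separate flag registers. First I will reduce to a common flag size. Let $a = \max_k a_k$, padding the $U_k$ with idle qubits if necessary. Take one shared $a$-qubit register $r_{\mathrm{flag}}$ that every $U_k$ acts on, together with a counter register $r_{\mathrm{cnt}}$ of $\lceil \log(K+1)\rceil$ qubits, initialized to $|0\rangle$. The circuit, applied in order $k = K, K-1, \dots, 1$ (so that the product comes out as $A_1\cdots A_K$), is
\begin{equation}
    \cdots\; \big(\textsc{inc}^\dagger \cdot C_{\bar 0}\textsc{inc}\big)\, U_k \;\cdots .
\end{equation}
That is, after each $U_k$ I apply an increment of $r_{\mathrm{cnt}}$ (mod $2^{\lceil\log(K+1)\rceil}$), controlled on $r_{\mathrm{flag}}$ \emph{not} being $|0^a\rangle$. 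Just before each $U_k$ I apply an unconditional decrement-type bookkeeping step, arranged so that the counter tracks the number of failed steps. A cleaner choice is the Low--Wiebe form: before $U_k$, do $\textsc{inc}$ controlled on $r_{\mathrm{flag}}\ne 0$ (to record leftover garbage); after $U_k$, do $\textsc{dec}$ controlled on $r_{\mathrm{flag}}\ne 0$. I will fix one variant and check it.

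The key invariant to verify is the following. Project the full circuit onto $\langle 0|_{\mathrm{cnt}}\langle 0|_{\mathrm{flag}}\cdot|0\rangle_{\mathrm{cnt}}|0\rangle_{\mathrm{flag}}$ and expand each $U_k$ as $|0\rangle\langle0|\otimes A_k + (\text{terms leaving or entering }|0\rangle_{\mathrm{flag}})$. Any path through this expansion in which some intermediate flag value is nonzero leaves the counter at a nonzero value. The reason is that the counter changes by $+1$ exactly once per step whose output flag is nonzero, and it is never decremented. It is therefore bounded by $K < 2^{\lceil\log(K+1)\rceil}$, so it cannot wrap back to $0$. Only the all-success path survives, and it contributes exactly $A_1 A_2\cdots A_K$ with subnormalization $\prod_k \alpha_k$. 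To make this airtight with controlled increments only after each $U_k$, I will need the next $U_{k-1}$ to act on the garbage flag state as well. The counter being nonzero already certifies failure, and nothing later decrements it, so the contaminated paths stay orthogonal to $\langle 0|_{\mathrm{cnt}}$.

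Resources then read off directly. There is one query to each $U_k$. Auxiliary space is $a + \lceil\log(K+1)\rceil$, i.e.\ $\log K$ beyond the shared flag, versus $\sum_k a_k = \mathcal{O}(K)$ for iterating Def.~\ref{def:be_mult}. The gate cost is $K$ multi-controlled increments on $\mathcal{O}(\log K)$ qubits. For the success probability: on input $|\psi\rangle$ the post-selected state is $(\prod_k\alpha_k)^{-1} A_1\cdots A_K|\psi\rangle$, so the probability is $\lVert A_1\cdots A_K|\psi\rangle\rVert^2/\prod_k\alpha_k^2$, as stated.

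The main obstacle is the counter bookkeeping. I must show that no failure path re-enters the $|0\rangle_{\mathrm{cnt}}|0\rangle_{\mathrm{flag}}$ subspace, either by modular wraparound or by a later $U_j$ mapping garbage back to $|0\rangle_{\mathrm{flag}}$. The monotone-counter argument, with register size $\lceil\log(K+1)\rceil$, handles both, and I would write it by induction on the number of applied steps.
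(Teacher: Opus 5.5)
Your committed construction is correct and is the simplified, addition-only gadget the paper depicts in Fig.~\ref{fig:comp_gadget}: a shared flag register, one query to each $U_k$, and after each $U_k$ an increment of an $\mathcal{O}(\log K)$-qubit counter controlled on the flag being nonzero, with your monotone no-wraparound argument showing that only the all-success path survives post-selection on both registers. Do discard the ``increment before, decrement after'' variant you floated: along any flag path $f_0=0,f_1,\dots,f_K=0$ its net counter change telescopes to $[f_0\neq 0]-[f_K\neq 0]=0$, so it would filter out nothing.
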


Our aim is to apply compression gadget techniques to higher-degree variants of the construction of Prop.~\ref{prop:clean_ewt_prod_form}. In that section we provided a `clean' (i.e., ready for repeated self-composition) circuit computing element-wise products in terms of matrix products. Successfully combining this construction with the compression gadget requires addressing the following questions:
\begin{quote}
    \emph{\textbf{Q1:} We have multiple sources of auxiliary space including (a) the registers used by the swap-copy operations\slash weaving lemma, (b) the accumulation register used by the compression gadget, and (c) the auxiliary space used by each input block encoding. Does the resulting unitary have sensible block structure with respect to these registers?}\footnote{We need to ensure, under the assumption that the auxiliary qubits all begin and end in the $|0\rangle$ state, that the desired element-wise function in applied to the input block encoding.}

    \emph{\textbf{Q2:} The log-space use of our protocol relies critically on the re-use of the weaving state from Lem.~\ref{lem:rec_swap_copy} across each recursive level (i.e., that this state is \upshape{catalytic}); is this compatible with the compression gadget, and what is the effect on the protocol's overall success probability?}
\end{quote}
We address the first question by careful bookkeeping of the registers involved, which are depicted across multiple stages of our construction in Figs.~\ref{fig:full_prod} (our `cleaned' matrix product to element-wise product construction), \ref{fig:reduced_form_prod} (a reduced form of this construction as a operator-valued function), and \ref{fig:recur_prod} (the second level of our recursive construction for the iterated self element-wise product). The second question, which we address after the statement of our main result, follows from basic calculations on the success of each constituent block encoding.

\begin{theorem}[Iterated element-wise products; \emph{formal}] \label{thm:iterated_prod}
    Let $U_A$ an $(\alpha, a, \varepsilon)$ block encoding of an $n$-qubit operator $A$. Then an $(\alpha^d, \mathcal{O}(a + n\log{d}), (\alpha + \varepsilon)^d - \alpha^d)$ block encoding of the $d$-th element-wise power of $A$, denoted $A^{\circ d}$, can be constructed using $d$ copies of $U_A$.
    \begin{proof}
        Proof follows directly from an application of the \emph{compression gadget} of Low and Wiebe \cite{lw_comp_gadget_19} (see Lem.~\ref{lem:comp_gadget}) to the recursively-defined product of (\ref{eq:weaving_lem_func}) in Lem.~\ref{lem:rec_swap_copy}, along with the following two observations. The first is that unrolling our $\log{d}$-level recursion gives a matrix product of block encodings of length $\mathcal{O}(d)$ interspersed with known unitaries (which have no bearing on the operation of the compression gadget). The second is that, under the condition that the compression gadget succeeds (i.e., the auxiliary register is measured in the $|0\rangle_{\text{comp}}$ state), the weaving lemma (Lem.~\ref{lem:rec_swap_copy}) ensures that all $\mathcal{O}(\log{d})$ initial $|0\rangle_{r_0^{(m\prime)}}$ states, one for each recursion level, are used catalytically, as $r_1 r_2 r_3$, assuming the mask unitaries' successful application (kept track of by the compression gadget accumulation register), encounter only \emph{block-diagonal block encodings}, for which the $|0\rangle_{r_0^{(m\prime)}}$ is a $(+1)$-eigenstate. The overall space use is thus simply that which is required for the input block encoding, $a$, plus the $\log{d}$ copies of $|0\rangle_{r_0}$, plus the $\mathcal{O}(\log{d})$ qubits required for the compression gadget. More precisely we end up with a $(1, \mathcal{O}(n\log{d}), 0)$ block encoding of $A^{\circ d}$ in the case that $\alpha = 1$ and $\varepsilon = 0$, with the general case following from standard error propagation techniques in block encoding (Appx.~\ref{appx:error_prop}).
    \end{proof}
\end{theorem}

\begin{figure}
    \centering
    \includegraphics[width=\linewidth]{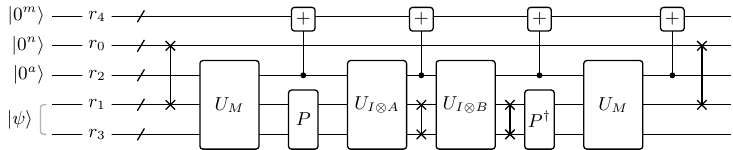}
    \caption{Depiction of the circuit for \emph{clean element-wise products} of block encodings, i.e., block encodings of $I_{r_1}\otimes(A\circ B)_{r_3}$ using auxiliary space $r_4 r_2$. Note if $r_4 r_2$ is in the all-zeros state, then the state $|0^n\rangle$ in $r_0$ is necessarily used catalytically. Here the $(+)$ notation indicates projection-controlled addition with target register $r_4$ used for the compression gadget (see Lem.~\ref{lem:comp_gadget}); when $r_2$ is not a single qubit, this is a $\Pi$-controlled addition on $r_4$ for $\Pi$ a projector with form $(I - |0\rangle\langle 0|_{r_2})$. Note also the non-standard ordering of registers to make depiction easier. In our recursive construction, each $U_{(I\otimes \ast)}$ is replaced by an analogous circuit which acquires (1) a new copy of $r_0$ for each recursive level, and (2) contributes to the size of $r_4$ to multiply all block encodings, but otherwise acts identically on $r_1$, $r_3$, and $r_2$ (depictions of this composition is given across Figs.~\ref{fig:reduced_form_prod} and \ref{fig:recur_prod}). Finally, recall if the registers $r_2$ and $r_4$ are measured in the all-zeros state, then the $r_0$ and $r_1$ registers experience only the identity operation, and $(A\circ B)$ is applied to $r_3$.
    }
    \label{fig:full_prod}
\end{figure}

\begin{figure}
    \centering
    \includegraphics[width=\linewidth]{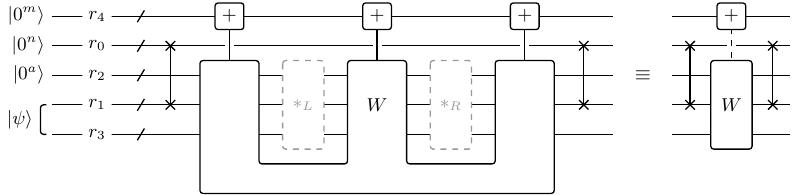}
    \caption{Two equivalent, simplified depictions of the circuit in Fig.~\ref{fig:full_prod}. On the left the unitary-valued function $W$ takes two unitary-valued inputs, here denoted $(\ast_L)$ and $(\ast_R)$, which in our setting will both be block encoding unitaries of matrices that have the form $(I\otimes M)$ for some square matrix $M$. This unitary-valued function (i.e., a circuit with holes) is further condensed on the right, where the free arguments (or holes) are suppressed. Note that wires with slashes represent multi-qubit registers, between which swap gates denote qubit-wise swaps. Additionally, the projection controlled addition operators (see Lem.~\ref{lem:comp_gadget}), here with their targets denoted with $(+)$, have their controls absorbed into $W$; on the left-hand side, the sequence of such addition operators are represented all together by a dotted line.}
    \label{fig:reduced_form_prod}
\end{figure}

\begin{figure}
    \centering
    \includegraphics[width=0.83\linewidth]{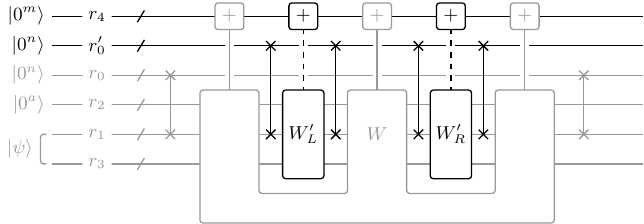}
    \caption{A depiction of the first level of the recursive use of the circuit form of Fig.~\ref{fig:full_prod} by itself, making use of both forms of the condensed notation introduced in Fig.~\ref{fig:reduced_form_prod}. Note here that the register used for the compression gadget, $r_4$, is not duplicated, and remains the target for the addition gadgets of protocols deeper in the recursive construction. A new register of size $n$, namely $r_0'$, is however initialized to $|0\rangle$, with a new copy introduced at each level of recursion, though \emph{re-used between subroutines at the same level}, given the catalytic behavior of the original circuit. This catalysis succeeds under the assumption that all block encodings are successfully applied (equivalently, that the registers $r_2$ and $r_4$ both end up in the all zeros state). Note that all of the swaps have the same target register, $r_1$, and thus for each level of the recursion subroutines receive `fresh' copies of $|0^n\rangle$ to correctly carry out the swap-copy operation (Lem.~\ref{lem:swap_copy}); this allows both $W_L$ and $W_R$ to be `seen' by the outer protocol $W$ as block encodings of the form that $W$ expects, as discussed in the weaving lemma (Lem.~\ref{lem:rec_swap_copy}).}
    \label{fig:recur_prod}
\end{figure}

Having constructed block encodings of the $d$-th element-wise product of a block-encoded matrix with itself in Thm.~\ref{thm:iterated_prod}, we can now work towards applying a desired \emph{element-wise polynomial function} by taking linear combinations of these unitaries. As indicated in the basic statement of LCU (Thm.~\ref{thm:be_lcu}), this requires us to construct a so-called \textsc{select} unitary, which applies an indexed unitary $U_k$ coherently when an auxiliary register is in the state labeled $|k\rangle$ (i.e., a quantum state labeled by a binary encoding of the integer $k$). Here we merely give our lemma characterizing the cost of building \textsc{select} given access to \emph{controlled block encodings} of a desired matrix $A$. While this construction `sits above' our technique for taking element-wise powers of matrices, we emphasize that the technique used the prior work \cite{gychanar_had_prod_25} to build \textsc{select} is under-specified and (as implied) insufficient\slash incorrect. We take time to raise and rectify this gap in the dedicated Appx.~\ref{appx:lcu_nonstandard}, which also includes the constructive proof of Lem.~\ref{lem:ew_select}.

\begin{lemma}[Building \textsc{select} for the element-wise product] \label{lem:ew_select}
    Let $U_A$ a $(1, a, 0)$ block encoding of an $n$-qubit linear operator $A$ and $d \in \mathbb{N}$. Then the \textsc{select} unitary for element-wise powers of $A$ up to degree $d$, i.e.,
    \begin{equation} \label{eq:qewt_select}
        U_\text{\textsc{select}} \equiv \sum_{k = 0}^{d} 
        |k\rangle\langle k |\otimes U_{A^{\circ k}} + \cdots,
    \end{equation}
    can be built using $d$ copies of \emph{controlled} $U_A$, $\log{d}$ qubits of additional space, and $\mathcal{O}(d\log{d})$ single and two qubit gates. Here we have suppressed the extra terms which appear when $d$ is not a power of two, as shown in (\ref{eq:lcu_select_def}). Note here that the notation $A^{\circ 0}$ is taken to be the \emph{usual identity matrix} on $n$ qubits. For a proof of this statement see Lem.~\ref{lem:ew_select_proof} in Appx.~\ref{appx:lcu_nonstandard}.
\end{lemma}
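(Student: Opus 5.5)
The plan is to build \textsc{select} not as $d+1$ independent controlled circuits but as a single pass through the unrolled iterated-product circuit of Thm.~\ref{thm:iterated_prod}, where each of the $d$ copies of $U_A$ is replaced by a controlled $U_A$ whose control is a simple function of the index register $|k\rangle$. The key observation is the convention $A^{\circ 0} = I$, together with the fact that $A^{\circ k} = A^{\circ k}\circ J$ for the all-ones matrix $J$. Thus a ``switched off'' slot must contribute the \emph{element-wise identity} $J$, not the matrix identity $I$.

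Here is the point where I expect the gap in prior work to lie. Simply replacing an unused $U_A$ by the identity unitary block encodes $I$. Then $A\circ I$ is the diagonal of $A$, not $A$. So step one is to construct, for each of the $d$ slots, a unitary which block encodes $J$ (in the same $I_{r_1}\otimes(\cdot)_{r_3}$ clean form) when the control is off, and $A$ when it is on.

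A clean way to do this is to block encode $J/2^n$ as $H^{\otimes n}|0\rangle\langle 0|H^{\otimes n}$, or equivalently to note that $J = 2^n|+^n\rangle\langle +^n|$. However, this introduces a subnormalization $2^n$, which is unacceptable. I would therefore instead realize the off-branch inside the element-wise product gadget of Prop.~\ref{prop:ewt_simple_form} directly.

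When the slot is off, I skip the $U_B$ factor and the $r_1$-register part of the product. Since $P(A\otimes I)P^\dagger$ restricted to $|0\rangle\langle0|_{r_1}$ gives $\mathrm{diag}$ rather than $A$, I will instead control the submatrix permutation $P$, the mask $M$, and the swap-copy swaps $S_{r_0 r_1}$ on the same bit. When the slot is off, the whole gadget reduces to passing the other argument through unchanged, which is exactly $W(X,\cdot)\mapsto X$. Controlling swaps, CNOT layers, and the mask LCU on one qubit costs $\mathcal{O}(n)$ gates per node, and it requires no new queries.

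Step two is the control logic. Order the $d$ slots $j=1,\dots,d$ along the unrolled product, and let slot $j$ be active iff $j\le k$. Then a single pass yields $A^{\circ k}$ for every $k$, using exactly $d$ controlled $U_A$. Each comparison flag $[j\le k]$ is computed into one ancilla from the $\lceil\log(d+1)\rceil$-qubit index register by a reversible comparator, used as control, and uncomputed. This is where the $\mathcal{O}(d\log d)$ gate count and the reuse of $\mathcal{O}(\log d)$ extra qubits come from.

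Internal nodes of the recursion tree are also controlled, but on the disjunction of their leaves' flags. This disjunction is again a single threshold comparison, since the leaves of a subtree are contiguous.

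Step three is to verify correctness branch by branch. For a fixed $|k\rangle$, the inactive subtrees act as the identity on $r_0,r_1,r_3$ and leave $r_2,r_4$ at $|0\rangle$. They must therefore neither increment the compression-gadget counter nor disturb the weaving states. The active subtrees are exactly the circuit of Thm.~\ref{thm:iterated_prod} for degree $k$, so the $|0\rangle\langle0|$ block on $r_2r_4$ is $I_{r_1}\otimes A^{\circ k}$, and in particular the identity for $k=0$.

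The main obstacle is ensuring the compression gadget of Lem.~\ref{lem:comp_gadget} tolerates skipped multiplications, i.e.\ that its counter arithmetic is unaffected by controlled-off blocks. I would handle this by also controlling the projection-controlled additions on the same flag, making each off-slot a genuine $(1,\cdot,0)$ block encoding of the identity.
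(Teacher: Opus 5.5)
Your proposal is correct, and its central mechanism is the paper's. Rather than block encoding the element-wise identity $J$, which costs a $2^n$ subnormalization, you control the submatrix permutation at each node. A switched-off argument then contributes the \emph{matrix} identity, and the gadget passes its other argument through unchanged. This is exactly how the proof of Lem.~\ref{lem:ew_select_proof} repairs the naive bit-wise construction.

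The difference is in the control logic. The paper assigns slots to index bits in binary-weighted blocks: the first $d/2$ slots are controlled directly by the most significant qubit of $r_5$, the next $d/4$ by the next qubit, and so on. Every controlled-$U_A$ therefore has a single-qubit control and needs no arithmetic. The cost moves into the permutations instead. They become (multi-)controlled gates that fire only when both inputs are nontrivial; along the right spine of the tree this is an AND of one bit with an OR of all lower bits. Your threshold activation ($j$ on iff $j\le k$) spends a comparator compute and uncompute on every control. In exchange it produces a prefix pattern, in which ``both inputs nontrivial'' collapses to the single test $[j_{\min}\le k]$ on the first leaf of the later child. That makes the correctness bookkeeping very transparent.

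Both routes give $\mathcal{O}(d\log d)$ control overhead with $\mathcal{O}(\log d)$ reusable ancillas. In both, the controlled $n$-qubit layers add $\mathcal{O}(nd)$ gates that the stated bound suppresses. You also control the mask, the swap-copy swaps and the counter additions. This is not needed, though it is harmless. With one argument equal to $I$ and $P$ off, the encoded operator is already $I_{r_1}\otimes X$, so the mask and swap-copy act trivially, and an off slot never excites its ancilla.

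One point you should state explicitly, because your wording admits a wrong reading. The disjunction flag of a subtree must control the permutation of its \emph{parent}, not the node's own permutation. It must also control the $r_1\leftrightarrow r_3$ routing swap, unless the switchable argument is the one routed into $r_1$. Suppose instead a node's own $P$ were switched on by the disjunction of its own leaves. Then a mixed node, whose left child is active and whose right child is entirely off, would output $X\circ I=\mathrm{diag}(X)$. For $d=4$, $k=2$ the root would return $\mathrm{diag}(A^{\circ 2})$. Your step one, which controls on the bit of the switchable argument, yields the right rule. However, step three only checks fully active and fully inactive subtrees, and it should also verify that these mixed nodes act as pass-throughs.
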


\begin{theorem}[Quantum element-wise transform; \emph{formal}] \label{thm:qewt}
    Let $f$ be a polynomial (with a trivial constant term) of degree $d$ satisfying $|f| \leq 1$ on $[-1,1]$, with coefficients $c_{k}, k \in \{0, \dots, d\}$, $c_0 = 0$. Then, given access to controlled-$U$, where $U$ is an $(\alpha, a, \varepsilon)$ block encoding of an $n$-qubit operator $A$, there exists a quantum circuit preparing a $(\beta, b, \delta)$ block encoding of $f^\circ (A/\alpha)$---\emph{the element-wise application of $f$ to $A/\alpha$}---that requires $\mathcal{O}(d)$ copies of controlled-$U$, as well as $\mathcal{O}(d)$ single\slash two-qubit gates, where $(\beta, b, \delta)$ satisfy
        \begin{equation}
            \beta = \sum_{k = 1}^{d} |c_k|,
            \quad
            b = \mathcal{O}(a + n\log{d}),
            \quad
            \delta = \sum_{k = 1}^{d} |c_k|\,\Big[(1 + \varepsilon/\alpha)^k - 1\Big].
        \end{equation}
    \begin{proof}
        Applying LCU to our element-wise product gadget (Thm.~\ref{thm:iterated_prod}) requires us to construct the \textsc{select} unitary as mentioned previously and discussed variously in foundational work on LCU \cite{cw_lcu_12, bccks_ham_lcu_14, bck_ham_lcu_15, gslw_19}. The ultimate result, presented in condensed form in the previous Thm.~\ref{thm:be_lcu}, is the following:
        \begin{equation}
            V = \sum_{k = 1}^{d} |k\rangle\langle k|_{r_5} \otimes \big[U_{A^{\circ k}}\big]_{r_0 r_1 r_2 r_3 r_4} + \left[I_{r_5} - \sum_{k = 1}^{d}|k\rangle\langle k|_{r_5}\right]\otimes I_{r_0 r_1 r_2 r_3 r_4}.
        \end{equation}
        Notably, we cannot, as incorrectly implied by previous work, simply matrix multiply together element-wise products $A^{\circ 2^j}$ per Lem.~\ref{lem:gate_cost_standard} from \cite{cks_linear_17}); instead we construct projection-controlled versions of controlled $U$ per Lem.~\ref{lem:ew_select}, before applying our element-wise power construction (Thm.~\ref{thm:iterated_prod}. This requires an additional register $r_5$ of size $\mathcal{O}(\log{d})$, as is expected, and gives the block encoding parameters quoted in the theorem statement. Note that the form of $\delta$ follows from the linear relation between input and output error in both (1) each element-wise power $k$ and (2) each coefficient $c_k$; a detailed calculation of this error is provided in the dedicated Appx.~\ref{appx:error_prop} (specifically Prop.~\ref{prop:error_lcu_ewt}), where we identify a few errors in extant literature.
    \end{proof}
\end{theorem}

\begin{figure}
    \centering
    \includegraphics[width=0.68\linewidth]{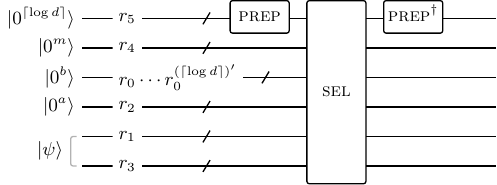}
    \caption{The application of the standard LCU circuit to generate the full quantum element-wise transform of Thm.~\ref{thm:qewt}; unpacking this figure through previous definitions and circuits gives the full transform. I.e., here \textsc{sel} denotes the \textsc{select} unitary (\ref{eq:qewt_select}) constructed in Lem.~\ref{lem:ew_select} (whose proof and discussion is contained in Appx.~\ref{appx:lcu_nonstandard}) for the element-wise product circuit of Thm.~\ref{thm:iterated_prod} (depicted in Fig.~\ref{fig:recur_prod}). Here $b = \mathcal{O}(n\log{d})$ denotes the number of qubits (copies of the register $r_0$) required to recursively apply the weaving lemma (Lem.~\ref{lem:rec_swap_copy}), and the new register $r_5$ on $\mathcal{O}(\log{d})$ qubits has been instantiated large enough to encode $d$ terms using standard LCU techniques (see Thm.~\ref{thm:be_lcu}).}
    \label{fig:ewt_full}
\end{figure}

The astute reader might notice that the depth of our log-space protocol is larger than that of the previous linear-space one, here due to our inclusion of mask unitaries. Consequently, the overall success probability of our protocol depends in some complex way on the successful application of all of these $\mathcal{O}(d)$ masks. This success probability is dependent on both the initial quantum state and the input block encodings, and thus just as in the application of the standard compression gadget (as well as prior constructions for element-wise products, which also multiply $\mathcal{O}(d)$ block encodings), there may exist initial states and input block encodings which can suppress this success probability arbitrarily (e.g., when block encoding the all zeros matrix). In the following Sec.~\ref{sec:alt_const} we address this problem in specific contexts where this success probability can be bounded usefully from below, and show that for a variety of non-trivial block encoding inputs (e.g., the near-identity setting, as also investigated in \cite{vg_reducing_space_25}) our log-space construction succeeds with good probability. In general the study of the success probability of these basis-dependent matrix transforms is quite rich, as unlike in QSP\slash QSVT, or other spectral mapping techniques, the relation between the chosen function and the resultant effect on the input's spectrum is only cursorily understood in theory \cite{hj_topics_91} (e.g., among the most celebrated results in the study of element-wise products is that they preserve the cone of PSD matrices). Consequently we expect the utility of the QEWT will be quite dependent on known properties of the input.

As a small addendum on this topic, we can show a simple corollary regarding space-time trade-offs for algorithms computing element-wise functions. I.e., as the inputs to our compression scheme are arbitrary block encodings we can easily hybridize the standard and compressed approach to iterated element-wise products. Such hybrid methods may be of independent interest under experimental\slash architectural resource constraints, or for known instance sizes, and allow for flexibility in compilation.

\begin{corollary}[QEWT time-space tradeoff] \label{cor:time_space}
    Let $f$ of degree $d$ defined as in the main Thm.~\ref{thm:qewt}. Then given some $q \in \mathbb{N}$ we can compute the QEWT according to any division of $\mathcal{O}(d)$ element-wise product gadgets into at most $\lceil d^{1/q}\rceil$ blocks of size at most $\lceil d^{(q - 1)/q}\rceil$, where the blocks are computed using our logarithmic-space method and element-wise multiplied using the standard linear-space method. Then the overall protocol for computing the same element-wise transform uses an auxiliary register of size $b'$ (using the same conventions as the main theorem):
        \begin{equation}
            b = \mathcal{O}\left(a + n\left[1 - \frac{1}{q}\right]\log{d}\right),
            \quad\quad
            b' = \mathcal{O}\left(b\, d^{1/q} + n d^{1/q}\right),
        \end{equation}
    where we have computed the intermediate space complexity $b$ for the inner logarithmic-space protocol, which gives a total auxiliary space complexity $b'$ of the composition of the linear- and logarithmic-space protocols:
        \begin{equation}
            b' = 
            \mathcal{O}\left(ad^{1/q} + n\left[2 - \frac{1}{q}\right] d^{1/q}\log{d}\right).
        \end{equation}
    Moreover, the depth of the resulting circuit is based on the depth of the largest sequentially-computed block, which is $\mathcal{O}(d^{(q - 1)/q})$, which approaches $\mathcal{O}(d)$ as expected for large $q$.

    \begin{proof}
        The proof, which we do not provide in detail, follows straightforwardly from the generic input assumptions of both the logarithmic-space and linear-space element-wise product gadgets; i.e., the inner protocol (logarithmic-space) is treated as a black box by the outer protocol (linear-space), meaning we can compute the overall asymptotic auxiliary space complexity purely from the block size and sum the results.
    \end{proof}
\end{corollary}

As an example of Cor.~\ref{cor:time_space}, for $q = 2$ we have, for each element-wise product, at most $\sqrt{d}$ blocks of size at most $\sqrt{d}$, yielding an intermediate auxiliary space register of size $\mathcal{O}([a + n]\sqrt{d}\log{d})$ and circuit depth $\mathcal{O}(\sqrt{d})$. For $q \gg 1$ (i.e., where each block is of size at most one) this hybrid approach reduces to our logarithmic-space construction. While beyond the scope of this work, this division of a single element-wise transform between these linear- and logarithmic-space methods is quite generic, does not require uniform block size, and offers wide opportunity for optimizations given side-information about the input block encoding. At first glance (beyond circuit depth reduction), the merit of this hybridization is not obvious; however, as is discussed in the alternative constructions of the next Sec.~\ref{sec:alt_const}, as well as in the appendix devoted to properties of the element-wise product (Appx.~\ref{appx:ewt_prop}), the behavior of basic properties of linear operators under element-wise products (e.g., rank, condition number, spectral norm), and thus the success probability of the result block encoding circuits, is much more erratic than in the usual matrix multiplication setting---complicating questions of optimal resource complexity.

%%%%%%%%%%%%%%%%%%%%%%%%%%%%%%%%%%%%%%%%%%%%%
\section{Alternative constructions and special cases} \label{sec:alt_const}

\noindent The main construction of the previous section was presented piece-wise to clarify the interaction between its components---here the compression gadget, weaving lemma, and relation between the Kronecker, matrix, and element-wise products. Here we provide an alternative construction for computing the same element-wise transform, as well as identify common settings\slash assumptions under which these alternative constructions allow further performance improvements The high-level action is that we dispense with the `masking unitaries' (Def.~\ref{def:mask}) and instead employ LCU techniques to achieve the block-diagonality that our swap-copy operation requires. These LCU methods also allows us to employ a few common block encoding manipulation techniques (namely, oblivious and fixed-point amplitude amplification, defined briefly below) to further improve the success probability of our protocol when certain preconditions on the input are met.

\begin{theorem}[Fixed-point amplitude amplification; Thm.~27 from \cite{gslw_19}] \label{thm:fpaa}
    Let $U$ a unitary and $\Pi$ an orthogonal projector such that $\alpha |\psi_G\rangle = \Pi U |\psi_0\rangle$ and $\alpha > \eta > 0$. Then there is a unitary $U'$ such that $\lVert |\psi_G\rangle - U'|\psi_0\rangle \rVert \leq \varepsilon$ which uses a single auxiliary qubit, and comprises $\mathcal{O}((1/\eta)\log{(1/\varepsilon)})$ copies of the following gates: $U$, $U^\dagger$, $C_\Pi\text{NOT}$, $C_{|\psi\rangle\langle\psi|}\text{NOT}$, and single-qubit rotations.
\end{theorem}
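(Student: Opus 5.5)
The statement is Thm.~\ref{thm:fpaa}, fixed-point amplitude amplification. I would prove it by reducing it to QSVT (Def.~\ref{def:be_qsvt}) applied to a rank-one block encoding, with an odd polynomial approximating a sign function.

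\textbf{Block encoding of the amplitude.} Set $\tilde{\Pi} = \Pi$ and take the input projector to be $|\psi_0\rangle\langle\psi_0|$. Then
\begin{equation}
\Pi U |\psi_0\rangle\langle\psi_0| = \alpha\, |\psi_G\rangle\langle\psi_0| .
\end{equation}
So $U$ is a $(1,\cdot,0)$ block encoding, with respect to these two projectors, of a rank-one operator whose single singular value is $\alpha$. Its left and right singular vectors are $|\psi_G\rangle$ and $|\psi_0\rangle$.

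\textbf{Choice of polynomial.} Take an odd real polynomial $P$ of degree $d = \mathcal{O}((1/\eta)\log(1/\varepsilon))$ with $|P(x)| \le 1$ on $[-1,1]$ and $|P(x) - 1| \le \varepsilon^2/2$ for all $x \in [\eta, 1]$. This is the standard polynomial approximation of $\mathrm{sign}(x)$ with a gap of width $\eta$ around $0$, for example an integrated and truncated Gaussian, or the Yoder--Low--Chuang polynomial. I would cite this degree bound rather than re-derive it.

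\textbf{QSVT step.} Def.~\ref{def:be_qsvt} gives the unitary $U_\Phi$ with
\begin{equation}
\Pi U_\Phi |\psi_0\rangle\langle\psi_0| = P(\alpha)\,|\psi_G\rangle\langle\psi_0| .
\end{equation}
The circuit uses $\mathcal{O}(d)$ alternating copies of $U$ and $U^\dagger$. Its phase rotations $e^{i\phi(2\Pi - I)}$ and $e^{i\phi(2|\psi_0\rangle\langle\psi_0| - I)}$ are each built from one $C_\Pi$NOT and one $C_{|\psi_0\rangle\langle\psi_0|}$NOT onto a single auxiliary qubit, a $Z$-rotation, and the uncomputing CNOT. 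That single auxiliary qubit is the one the statement allows, and this matches the claimed gate list.

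\textbf{Error bound.} Let $U' = U_\Phi$. Because $U'$ is unitary and $\Pi U' |\psi_0\rangle = P(\alpha)|\psi_G\rangle$, we have
\begin{equation}
\lVert |\psi_G\rangle - U'|\psi_0\rangle \rVert^2 = |1 - P(\alpha)|^2 + \lVert (I - \Pi)U'|\psi_0\rangle \rVert^2 = |1-P(\alpha)|^2 + 1 - |P(\alpha)|^2 .
\end{equation}
With $P(\alpha) \in [1 - \varepsilon^2/2,\, 1]$, this quantity is at most $2(1 - P(\alpha)) \le \varepsilon^2$.

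\textbf{Main obstacle.} I expect the hard part to be the polynomial. It needs degree $\mathcal{O}(\eta^{-1}\log(1/\varepsilon))$, the correct parity, and real coefficients. With real phases, QSVT yields $\mathrm{Poly}(\alpha)$ with a possible imaginary part. I would handle this either by the standard real-part/symmetric-phase convention or by absorbing a global phase. The rest is bookkeeping.
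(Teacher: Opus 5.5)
Your argument is correct and is essentially the proof of Thm.~27 in \cite{gslw_19}: QSVT applied to the rank-one block encoding $\alpha|\psi_G\rangle\langle\psi_0|$ with an odd sign-approximating polynomial of degree $\mathcal{O}(\eta^{-1}\log(1/\varepsilon))$, which the paper does not reprove but imports by citation. The obstacle you flag does not arise, because your own identity gives $\lVert |\psi_G\rangle - U'|\psi_0\rangle\rVert^2 = 2 - 2\,\mathrm{Re}\,P(\alpha)$ even when $P(\alpha)$ is complex, so you only need $\mathrm{Re}\,P(\alpha) \ge 1 - \varepsilon^2/2$ and no real-part extraction (the ``global phase'' fix would not work in general, but you do not need it).
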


\begin{theorem}[Robust, fixed-point, oblivious amplitude amplification; modified from Thm.~28 of \cite{gslw_19}] \label{thm:roaa}
    Let $U$ a unitary and $\Pi, \Pi'$ orthogonal projectors such that
        \begin{equation}
            \big\lVert\alpha W - \Pi' U \big\rVert \leq \varepsilon,
        \end{equation}
    for $\alpha > \eta > 0$ and where $W$ is an isometry ($W: \text{img}(\Pi) \rightarrow \text{img}(\Pi')$). Then we can construct a unitary $U'$ such that for all $|\psi\rangle \in \text{img}(\Pi)$:
        \begin{equation}
            \Big\lVert W|\psi\rangle - \Pi' U'|\psi\rangle\Big\rVert \leq \varepsilon/\alpha,
        \end{equation}
    which uses a single auxiliary qubit and $\mathcal{O}((1/\eta)\log{(1/\varepsilon)})$ copies of the following gates: $U$, $U^\dagger$, $C_\Pi\text{NOT}$, $C_{\Pi'}\text{NOT}$, and single-qubit rotations.
\end{theorem}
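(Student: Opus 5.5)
The plan is to obtain $U'$ as a quantum singular value transformation (Def.~\ref{def:be_qsvt}) of $U$, taken with respect to the projector pair $(\Pi, \Pi')$. The polynomial is an odd approximation to the sign function, so the transformation pushes every singular value of $\Pi' U \Pi$ toward $1$. The result is then a block encoding of the polar (isometric) part of $\Pi' U \Pi$, and it remains to show that this polar part is close to $W$.

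\emph{Step 1 (spectral localization).} Set $A \equiv \Pi' U \Pi$. Since $W$ is an isometry on $\mathrm{img}(\Pi)$, every singular value of $\alpha W$ restricted there equals $\alpha$. By Weyl's inequality for singular values, $\lVert A - \alpha W\rVert \le \varepsilon$ forces all singular values $\sigma_k$ of $A$ on $\mathrm{img}(\Pi)$ into $[\alpha-\varepsilon, \alpha+\varepsilon]$. I will assume $\varepsilon$ is small enough that $\alpha - \varepsilon \ge \eta' \equiv \eta/2$; otherwise the claimed bound $\varepsilon/\alpha$ is of order one and trivial up to constants.

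\emph{Step 2 (QSVT with a sign approximation).} I will invoke the standard polynomial approximation of $\mathrm{sgn}$ from \cite{gslw_19}. This gives an odd real polynomial $P$ of degree $\mathcal{O}((1/\eta)\log(1/\varepsilon))$ with $|P|\le 1$ on $[-1,1]$ and $|P(x) - 1| \le \varepsilon/\alpha$ for $x\in[\eta',1]$. Applying Def.~\ref{def:be_qsvt} to $U$ with projectors $\Pi,\Pi'$ yields a circuit $U'$ satisfying $\Pi' U' \Pi = P^{(SV)}(A) = \sum_k P(\sigma_k)|\ell_k\rangle\langle r_k|$. The circuit uses $\mathcal{O}(\deg P)$ applications of $U$ and $U^\dagger$, projector-controlled NOTs $C_\Pi\mathrm{NOT}$ and $C_{\Pi'}\mathrm{NOT}$, and single-qubit rotations. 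The phase rotations $e^{i\phi(2\Pi-I)}$ are realized with one auxiliary qubit, which accounts for the single auxiliary qubit in the statement; because $P$ is real, no further LCU of real and imaginary parts is needed. Writing $V_{\mathrm{pol}} \equiv \sum_k |\ell_k\rangle\langle r_k|$ for the polar isometry of $A$, Step 1 gives $\lVert P^{(SV)}(A) - V_{\mathrm{pol}}\rVert \le \max_k|P(\sigma_k)-1| \le \varepsilon/\alpha$.

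\emph{Step 3 (polar perturbation; main obstacle).} It remains to bound $\lVert V_{\mathrm{pol}} - W\rVert$. Let $\Sigma$ be the positive part, so that $A = V_{\mathrm{pol}}\Sigma$. Then
\begin{equation}
V_{\mathrm{pol}} - W = \frac{1}{\alpha}\big(A - \alpha W\big) + \frac{1}{\alpha}V_{\mathrm{pol}}\big(\alpha I - \Sigma\big),
\end{equation}
and both terms are at most $\varepsilon/\alpha$ by hypothesis and by Step 1. This gives $2\varepsilon/\alpha$, so after combining with Step 2 the total error on any $|\psi\rangle\in\mathrm{img}(\Pi)$ is $\mathcal{O}(\varepsilon/\alpha)$. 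The difficulty is getting the exact constant $\varepsilon/\alpha$ as stated. I would first try to absorb the constant by rescaling the input $\varepsilon$, which is acceptable since the statement is ``modified'' from Thm.~28 of \cite{gslw_19} and the gate count is only logarithmic in $1/\varepsilon$. If a sharp constant is required, I would replace the crude bound by a sharper polar-factor perturbation estimate, in the spirit of Li's bound $\lVert V_{\mathrm{pol}} - W\rVert \le 2\varepsilon/(\sigma_{\min}(A)+\alpha)$. The remaining bookkeeping is to check that the image of $V_{\mathrm{pol}}$ lies in $\mathrm{img}(\Pi')$, so that applying $\Pi'$ does not change the estimate.
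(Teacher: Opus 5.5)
The paper gives no proof of this statement; it imports it, with modifications, from Thm.~28 of \cite{gslw_19}. That theorem fixes the amplitude at exactly $\alpha=\sin(\pi/2n)$. It uses the Chebyshev polynomial $T_n$, which sends $\sin(\pi/2n)$ to $\pm1$ exactly, together with a robustness estimate, and obtains error $2n\varepsilon$ from $n$ queries with no logarithmic factor.

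Your route is genuinely different: a fixed-point sign-function QSVT (the mechanism behind Thm.~\ref{thm:fpaa}) followed by a polar-factor perturbation bound. This is actually what the modified hypothesis requires, since only $\alpha>\eta$ is assumed and the Chebyshev trick needs $\alpha$ exactly. The cost is the $\log(1/\varepsilon)$ factor in the query count, which matches the statement. Your Steps 1--3 are correct. They give $\lVert W|\psi\rangle-\Pi'U'|\psi\rangle\rVert\le 3\varepsilon/\alpha$ when $\alpha-\varepsilon\ge\eta/2$. Otherwise $\varepsilon>\alpha/2$, so the trivial bound $2<4\varepsilon/\alpha$ covers that case.

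The sharp constant $\varepsilon/\alpha$ is where your proposal stalls, and neither of your fixes closes it.
\begin{itemize}
\item Shrinking the polynomial precision only reduces the $\varepsilon/\alpha$ contributed by Step~2. The $2\varepsilon/\alpha$ from Step~3 is governed by the hypothesis' $\varepsilon$, which you are not free to rescale.
\item A sharper polar estimate cannot reach $\varepsilon/\alpha$ either. Already for $1\times1$ blocks, take $\Pi'U\Pi=\alpha\cos\phi\,e^{i\phi}$ with $\sin\phi=\varepsilon/\alpha$. It lies at distance exactly $\varepsilon$ from $\alpha W=\alpha$, yet its polar part $e^{i\phi}$ lies at distance $2\sin(\phi/2)>\varepsilon/\alpha$ from $W$.
\item Likewise, Li's bound with $\sigma_{\min}(A)=\alpha-\varepsilon$ gives $2\varepsilon/(2\alpha-\varepsilon)>\varepsilon/\alpha$.
\end{itemize}
So no method that outputs (an approximation to) the polar part of $\Pi'U\Pi$ can meet the stated constant. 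The cited theorem does not claim it either: $2n\varepsilon$ lies between $2\varepsilon/\alpha$ and $\pi\varepsilon/\alpha$. You should conclude with $\mathcal{O}(\varepsilon/\alpha)$; the sharp constant is a defect of the statement, not of your argument.

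One minor correction concerns the auxiliary qubit. A real target polynomial is precisely the case where \cite{gslw_19} averages $U_\Phi$ and $U_{-\Phi}$. That LCU runs on the same ancilla via Hadamard conjugation, so your count of one auxiliary qubit is right. Your stated reason, that no LCU is needed because $P$ is real, is not.
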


\begin{theorem}[Uniform singular value amplification; modified Thm.~30 of \cite{gslw_19}] \label{thm:uniform_sv_amp}
    Let $U$ an $(\alpha, a, 0)$-block encoding of $A$. Then we can construct a $(\lVert A\rVert_{1}/(1 - \eta), a + 1, \varepsilon \lVert A \rVert_1)$-block encoding of $A$ using $\mathcal{O}((\gamma\eta)^{-1}\log{[(\gamma\varepsilon)^{-1}]})$ queries to $U$, $U^\dagger$, where $\gamma \equiv \lVert A \rVert_1/\alpha$.
\end{theorem}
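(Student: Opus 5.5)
The plan is a direct application of QSVT (Def.~\ref{def:be_qsvt}) with an odd polynomial that acts as a \emph{linear amplifier} on the part of $[-1,1]$ where the singular values of the block-encoded matrix can lie. This mirrors the proof of Thm.~30 in \cite{gslw_19}, with its parameters renamed.

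First I would fix normalizations. The unitary $U$ is a $(1,a,0)$ block encoding of $A/\alpha$ with respect to $\Pi=\tilde\Pi=|0^a\rangle\langle 0^a|\otimes I_n$. Since $\lVert A\rVert\le \lVert A\rVert_1$ (reading $\lVert\cdot\rVert_1$ as any norm dominating the spectral norm, e.g.\ the Schatten-$1$ norm), every singular value of $A/\alpha$ lies in $[0,\gamma]$ with $\gamma=\lVert A\rVert_1/\alpha\le 1$. The target is a block encoding whose top-left block is $(1-\eta)A/\lVert A\rVert_1$, i.e.\ $A/\alpha$ multiplied by the factor $\kappa\equiv(1-\eta)/\gamma$. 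This target is itself a contraction, because $(1-\eta)\lVert A\rVert/\lVert A\rVert_1\le 1-\eta$.

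Second, I would construct an odd real polynomial $P$ with three properties:
\begin{itemize}[leftmargin=*]
    \item $|P(x)|\le 1$ on $[-1,1]$;
    \item $|P(x)-\kappa x|\le \varepsilon'$ for $x\in[-\gamma,\gamma]$;
    \item $\deg P=\mathcal{O}\big((\gamma\eta)^{-1}\log(1/(\gamma\varepsilon'))\big)$.
\end{itemize}
The natural candidate is $\kappa x$ multiplied by a smooth, odd-compatible approximation of the rectangle function $\mathbf{1}_{[-\gamma',\gamma']}$, with $\gamma<\gamma'\le \gamma/(1-\eta/2)$. I would build this cutoff from shifted error functions, which are approximated by polynomials of degree $\mathcal{O}(w^{-1}\log(1/\varepsilon'))$ for transition width $w\sim\gamma\eta$. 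I would then rescale so that the product stays below $1$ in magnitude on $[-1,1]$. Boundedness is automatic in two regions:
\begin{itemize}[leftmargin=*]
    \item inside the window, $\kappa|x|\le\kappa\gamma'\le 1-\eta/2$;
    \item outside the window, the cutoff is exponentially small.
\end{itemize}
The slack $\eta/2$ absorbs the approximation error. I expect this polynomial construction, and specifically getting the degree to scale as $(\gamma\eta)^{-1}$ rather than $\eta^{-1}$ while keeping $|P|\le1$ on all of $[-1,1]$, to be the main obstacle. My first move would be to invoke the linear-amplification polynomial lemma of \cite{gslw_19} (their Lem.~29 / Cor.~66) directly, and then check that its parameters translate to the stated ones.

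Third, I would apply Def.~\ref{def:be_qsvt} with $P$. Since $P$ is odd, the resulting circuit $U_\Phi$ uses $\deg P$ alternating queries to $U$ and $U^\dagger$ together with $\Pi$-controlled phase rotations, and implementing those rotations requires exactly one extra qubit; this accounts for the $a+1$. The circuit gives $\tilde\Pi U_\Phi\Pi=P^{(SV)}(A/\alpha)$. Because all singular values lie in $[0,\gamma]$, we get $\lVert P^{(SV)}(A/\alpha)-\kappa A/\alpha\rVert\le\varepsilon'$.

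Finally, I would multiply through by the subnormalization $\lVert A\rVert_1/(1-\eta)$. This yields
\[
\Big\lVert A - \tfrac{\lVert A\rVert_1}{1-\eta}\,\tilde\Pi U_\Phi\Pi\Big\rVert\le \tfrac{\lVert A\rVert_1}{1-\eta}\,\varepsilon',
\]
so choosing $\varepsilon'=(1-\eta)\varepsilon$ gives precisely a $(\lVert A\rVert_1/(1-\eta),a+1,\varepsilon\lVert A\rVert_1)$ block encoding. Since $\eta<1$ is treated as a fixed margin, the logarithmic factor in the query count remains $\mathcal{O}(\log[(\gamma\varepsilon)^{-1}])$.
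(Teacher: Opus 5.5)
Your proposal is correct and takes essentially the same route as the paper. The paper gives no separate argument and simply imports Thm.~30 of \cite{gslw_19}, which is proved by QSVT with an odd, $[-1,1]$-bounded linear-amplification polynomial. Your instantiation (amplification factor $(1-\eta)/\gamma$, margin $\eta$, additive error $(1-\eta)\varepsilon$ on $[-\gamma,\gamma]$, then rescaling by $\lVert A\rVert_1/(1-\eta)$) is exactly the parameter translation that gives the stated $(\lVert A\rVert_1/(1-\eta), a+1, \varepsilon\lVert A\rVert_1)$ block encoding and the $\mathcal{O}((\gamma\eta)^{-1}\log[(\gamma\varepsilon)^{-1}])$ query count. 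Your reading of $\lVert A\rVert_1$ as a norm dominating the spectral norm, which the paper leaves undefined here, is what the argument needs to place all singular values of $A/\alpha$ below the amplification threshold. The single extra qubit also covers the real-part extraction of real QSVT, so the $a+1$ count stands.
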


One of the ways we will modify our initial element-wise product algorithm is through using, at each level of our recursive construction, multiple copies of the input block encodings. It is thus necessary to establish how this additional processing affects the total query, gate, and space complexity of the full algorithm. We give a remark below on this topic before providing a few more helper-lemmata and propositions before explicitly providing our LCU- and amplitude-amplification-dependent constructions.

\begin{remark}[Query complexity for alternative protocols] \label{rem:alt_construct}
    In our recursively defined quantum element-wise product the query complexity of our `base case' (e.g., the circuit in Fig.~\ref{fig:recur_prod}) completely defines the query complexity of the general protocol. Computing this query complexity in the two-argument case is simple: if in the base case each argument, $(\ast_1)$ and $(\ast_2)$, requires $m$ copies, then the $k$-th recursive level yields a degree $2^k$ product at a cost of $(2m)^k$. Consequently achieving a degree $d$ product requires $k = \log{d}$, for a total query complexity of $\mathcal{O}((2m)^{\log{d}}) = \mathcal{O}(d^{1 + \log{m}})$. For constant $m$ this is polynomial in $d$.

    Moreover, we will also investigate the approximate setting (e.g., Thm.~\ref{thm:amp_ewt}) where $m$ is not constant but depends weakly on the approximation error, i.e., $m = \mathcal{O}((1/\delta)\log{(1/\varepsilon)})$ for $\delta$ bounded below by some constant. Here the total query complexity to achieve error $\varepsilon$ would scale instead as $d^{\log{((1/\delta)\log(d/\varepsilon))}}$, which is quasi-polynomial in $d$.
\end{remark}

The key observation toward our construction is that the application of the swap-copy operation (Lem.~\ref{lem:swap_copy}) does not require that the block encoding be zero outside of the copied block, but only \emph{block diagonal}. An easier way to achieve this instead of masking unitaries $U_M$ depicted in Fig.~\ref{fig:full_prod}, is through the use of linear combination of unitary (LCU) techniques (Thm.~\ref{thm:be_lcu}). In our case the LCU which achieves the required block diagonal structure is quite simple (see Fig.~\ref{fig:lcu_ewt_gadget}), and does not required controlled access to nor multiple copies of the input block encodings. In what follows we give the circuit for this LCU (Fig.~\ref{fig:lcu_ewt_gadget}), recovering a new statement for the complexity of computing element-wise powers (Lem.~\ref{lem:lcu_alt_ewt}) and thus transforms (Thm.~\ref{thm:lcu_alt_ewt_func}). We then investigate means of increasing the success probability through amplitude amplification subroutines (Thm.~\ref{thm:amp_ewt}) followed by analysis in special cases (Cors.~\ref{cor:near_iden_ewt} and \ref{cor:approx_ewt_func}). For simplicity we first state a fact on the success probability of generic LCU protocols.

\begin{proposition}[LCU protocol success probability] \label{prop:lcu_success}
    Let a non-unitary $V$ be expressible as the sum of $K$ unitaries according to coefficients $c_k$ for $k \in \{0, \dots, K - 1\}$. Then the standard LCU technique to apply $V$ to some initial state $|\psi\rangle$ uses an auxiliary register of $\lceil\log{K}\rceil$ qubits and succeeds with probability
        \begin{equation}
            p_{\text{succ}} = \lVert V|\psi\rangle \rVert^2\,\Big(\sum_{k} |c_k|\Big)^{-2}.
        \end{equation}
    I.e., the success probability is affected by both (1) the non-unitarity of the desired operator, and (2) the $\ell_1$ norm of the coefficients of the linear combination. Note that the spectral norm of the constituent unitaries ensures this success probability of between zero and one.
\end{proposition}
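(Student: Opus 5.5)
The argument is a direct computation of the post-selected state in the standard LCU circuit of Thm.~\ref{thm:be_lcu}, specialized to exact unitaries. First, I fix the decomposition $V = \sum_{k=0}^{K-1} c_k U_k$ and write $c_k = |c_k| e^{i\theta_k}$, absorbing the phases $e^{i\theta_k}$ into the unitaries $U_k$ so that all weights are nonnegative. Let $\beta \equiv \sum_k |c_k|$. I then take the state preparation pair $P_L = P_R = P$ with
\begin{equation}
P|0\rangle = \beta^{-1/2}\sum_{k}\sqrt{|c_k|}\,|k\rangle
\end{equation}
on $\lceil\log K\rceil$ qubits, padding with zero amplitudes as in Def.~\ref{def:state_prep_pair}. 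This is a $(\beta,\lceil\log K\rceil,0)$ pair, and the \textsc{select} unitary $W$ has the form \eqref{eq:lcu_select_def}.

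Second, I compute directly that
\begin{equation}
(\langle 0|\otimes I)(P^\dagger\otimes I)\,W\,(P\otimes I)(|0\rangle\otimes I) = \beta^{-1}\sum_k |c_k|\,e^{i\theta_k}U_k = V/\beta,
\end{equation}
which is exactly the statement that $W'$ is a $(\beta,\lceil\log K\rceil,0)$ block encoding of $V$. The register count $\lceil\log K\rceil$ is read off immediately. Applying $W'$ to $|0\rangle|\psi\rangle$ therefore yields $|0\rangle\otimes V|\psi\rangle/\beta + |\perp\rangle$, where $|\perp\rangle$ is supported on auxiliary states orthogonal to $|0\rangle$.

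Third, by the Born rule, measuring the auxiliary register in $|0\rangle$ succeeds with probability $\lVert (\langle 0|\otimes I)W'(|0\rangle|\psi\rangle)\rVert^2 = \lVert V|\psi\rangle\rVert^2/\beta^2$, which is the claimed formula. For the final remark, I use the triangle inequality together with the unitarity of each $U_k$: $\lVert V|\psi\rangle\rVert \le \sum_k |c_k|\,\lVert U_k|\psi\rangle\rVert = \beta$. This places $p_{\text{succ}}$ in $[0,1]$, consistent with the post-selected block being a contraction.

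There is no real obstacle here. The only point needing care is the phase handling: one must check that choosing $P_L = P_R$ with amplitudes $\sqrt{|c_k|}$ is valid even for complex $c_k$. It is, because the phases can be moved into $U_k$, or alternatively into $P_L$ through conjugated amplitudes. A second point is that the formula assumes this optimal ($\ell_1$-normalized) state preparation. If $\lVert y\rVert_1 < \beta$ is used instead, the probability is smaller, so I would state that $\beta$ is taken equal to $\sum_k|c_k|$.
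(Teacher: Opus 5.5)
Your argument is correct, and it is the standard computation the paper relies on implicitly. The paper states this proposition as a known fact without proof, resting on the LCU construction of Thm.~\ref{thm:be_lcu}, and your steps supply exactly that omitted check: $P_L = P_R$ with amplitudes $\sqrt{|c_k|/\beta}$ after absorbing the phases of the $c_k$, the identity $(\langle 0|\otimes I)W'(|0\rangle\otimes I) = V/\beta$, the Born rule, and the triangle-inequality bound $\lVert V|\psi\rangle\rVert \le \sum_k |c_k|$ for normalized $|\psi\rangle$.
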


We can investigate this generic statement in the specific setting of using LCU to achieve the self-element-wise product of a block encoding with itself. Here we will use a simple, two-coefficient LCU subroutine (depicted in Fig.~\ref{fig:lcu_ewt_gadget}) to recover the preconditions for the application of our swap-copy operation.

\begin{lemma}[LCU-based element-wise product] \label{lem:lcu_alt_ewt}
    Let $U_A$ be $(1, a, 0)$ block encoding of an $n$-qubit operator $A$. Then we can construct a $(1, a + n[2\log{d} + 1], 0)$ block encoding of the $d$-th order element-wise product of $A$ with itself ($A^\circ{d}$) using $d$ copies of $U_A$ and $\mathcal{O}(d)$ additional quantum gates.
    Moreover, the probability of successfully applying this element-wise product depends on both $A$ and the initial state $|\psi\rangle$:
    \begin{equation}
        p_{\text{succ}} 
        =  
        \lVert A^{\circ d} |\psi\rangle\rVert^2.
    \end{equation}
    \begin{proof}
        In this alternative construction, rather than applying masking unitaries, we use a simple LCU to zero-out the off-diagonal blocks of our input, such that the swap-copy (Lem.~\ref{lem:swap_copy}) can be applied. The key observation is applying a sign to the off-diagonal (non-square) blocks of our block encoding of $P(A\otimes B)P^\dagger$ can be achieved by a unitary that acts like a $Z$ gate on subspaces, where denoted by the unitary $Z^\perp$:
        \begin{equation}
            Z^{\perp} \equiv\left(- |0\rangle\langle0|_{r_1} + \sum_{j = 1}^{2^n - 1} |j\rangle\langle j|_{r_1}\right)\otimes\sum_{k = 0}^{2^n - 1} |k\rangle\langle k|_{r_3}
            =
            (I_{r_1} - 2\Pi_{|0\rangle\langle 0|_{r_1}})\otimes I_{r_3},
        \end{equation}
        which is just a reflection about the state $|0\rangle$ in the $r_1$ register. When applied to a block matrix (taking for the moment $A \in \mathbb{C}^{2^n\times 2^n}$ and $D \in \mathbb{C}^{(2^{2n} - 2^n)\times(2^{2n} - 2^n)}$ and $B, C$ such that the entire matrix is square) this appends a sign to only the off-diagonal (non-square) blocks:
        \begin{equation}
            Z^{\perp}
            \equiv
            \begin{bmatrix}
                - I_{2^n} & \\
                & I_{(2^{2n} - 2^n)}
            \end{bmatrix}_{r_1 r_3}
            \implies
            Z^{\perp}
            \begin{bmatrix}
                A & B\\
                C & D
            \end{bmatrix}
            Z^{\perp}
            =
            \begin{bmatrix}
                A & -B\\
                -C & D
            \end{bmatrix}.
        \end{equation}
        We can now use standard LCU techniques to take the linear combination of $U_{A}$ and $Z^{\perp} U_{A} Z^{\perp}$, weighted by coefficients $c_0 = c_1 = 1/\sqrt{2}$. A circuit for this unitary is given in Fig.~\ref{fig:lcu_ewt_gadget}, uses only a single copy of $U_A$, and can be recursively self-composed almost identically to our original circuit in Fig.~\ref{fig:recur_prod}, save with our new single-qubit LCU register $r_5$, and a controlled addition operation from this register to the compression gadget register $r_4$ after each pair of Hadamard gates, at each recursive level. This requires a single additional qubit, and succeeds with probability sub-normalized by the squared one-norm of the LCU coefficients $(|c_0| + |c_1|)^2 = 2$.
    \end{proof}
\end{lemma}

\begin{remark}[On the difference between Lem.~\ref{lem:lcu_alt_ewt} and Thm.~\ref{thm:iterated_prod}] \label{rem:lcu_iter_diff}
    Looking closer at the circuit in Fig.~\ref{fig:lcu_ewt_gadget}, we can see that if one were to insert a pair of Hadamard gates on $r_5$, split the wire between them, and create two qubits experiencing the first and second half of the original gates interacting with $r_5$, they would recover the construction in our original Thm.~\ref{thm:iterated_prod}. The saving here stems from the fact that we only require \emph{block-diagonality} for the swap-copy operation, and not all zeros outside the top left block. This allows us to apply an XOR rather than a true sum to the register $r_5$, resulting in saving an additional qubit per recursive self-composition of our circuit.

    Concretely, the space requirements scale with the number of controlled-additions to the compression gadget register at each recursive level. In the usual setting this per-level complexity is $4$, while in the LCU case it is approximately $2$ (where the single qubit LCU level is shared across each recursive levels, with periodic additions made to the compression gadget register). This gives a total compression gadget size in the two cases of
    \begin{align}
        \log{4^{\lceil\log{d}\rceil}} = \lceil\log{d}\rceil\log{(4)} &= 2\lceil\log{d}\rceil,\\
        \log{2^{\lceil\log{d}\rceil}} + \log{\left(\sum_{k = 0}^{\lceil\log{d}\rceil - 1} \lceil \log{(d/2^k)} \rceil \right)} &= \lceil\log{d}\rceil + \mathcal{O}(\log{\log{d}}).
    \end{align}
    We also note that this space use is distinct from the identical $\mathcal{O}(\log{d})$ space used when linear combinations of element-wise powers.
\end{remark}

\begin{figure}
    \centering
    \includegraphics[width=0.8\linewidth]{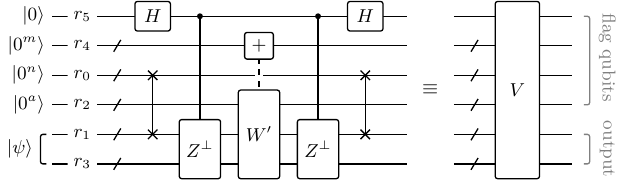}
    \caption{Alternative circuit for constructing a block encoding unitary $V$ of the cleaned element-wise product (Prop.~\ref{prop:clean_ewt_prod_form}) of two block encoded matrices. Here we use the same reduced notation for $W'$ as in Fig.~\ref{fig:reduced_form_prod}, where here $W'$ does not contain the masking unitaries of Fig.~\ref{fig:full_prod}, and where recursive self-composition proceeds the same as in Fig.~\ref{fig:recur_prod}, save that $r_5$ (a single-qubit register) also becomes the control of a projector-controlled addition to $r_4$, and is re-used across the same recursive level. Here $Z^\perp$ is a controlled reflection about the $|0^n\rangle$ state in $r_1$, as defined in Lem.~\ref{lem:lcu_alt_ewt}.}
    \label{fig:lcu_ewt_gadget}
\end{figure}

Once we have a protocol for the element-wise product we can use the same techniques as in Thm.~\ref{thm:qewt} to build out the full element-wise transform. We give this extension in Thm.~\ref{thm:lcu_alt_ewt_func}, followed by further discussion of the success probability, and select instances in which it can be improved by amplitude amplification techniques.

\begin{theorem}[LCU-based element-wise function] \label{thm:lcu_alt_ewt_func}
     Let $f$ be a polynomial (with a trivial constant term) of degree $d$ satisfying $|f| \leq 1$ on $[-1,1]$, with coefficients $c_{k}, k \in \{0, \dots, d\}$, $c_0 = 0$. Then, given access to controlled-$U$ and controlled-$U^\dagger$, where $U$ is an $(\alpha, a, \varepsilon)$ block encoding of an $n$-qubit operator $A$, there exists a quantum circuit preparing a $(\beta, b, \delta)$ block encoding of $f^\circ (A/\alpha)$---\emph{the element-wise application of $f$ to $A/\alpha$}---that requires $\mathcal{O}(d)$ copies of controlled-$U$ and $U^\dagger$, as well as $\mathcal{O}(d)$ single\slash two-qubit gates, where $(\beta, b, \delta)$ satisfy
        \begin{equation}
            \beta = \sum_{k = 1}^{d} |c_k|,
            \quad
            b = \mathcal{O}(a + n\log{d}),
            \quad
            \delta = \sum_{k = 1}^{d} |c_k|\,\big[(1 + \varepsilon/\alpha)^k -  1\big].
        \end{equation}
    Moreover, the probability of successfully applying this element-wise function has the following form, dependent on $f$, $A$, and the initial state:
    \begin{equation}
        p_{\text{succ}} = 
        \Big(\sum_{k} |c_k|\Big)^{-2}
        \,\lVert f^{\circ}(A) |\psi\rangle\rVert^{2}. 
    \end{equation}
    \begin{proof}
        Proof of this statement follows immediately from the term-wise complexity of the element-wise power subroutine in Lem.~\ref{lem:lcu_alt_ewt} and the previous construction of the full element-wise transform in Thm.~\ref{thm:qewt} using standard LCU techniques as depicted in Fig.~\ref{fig:ewt_full}.
    \end{proof}
\end{theorem}

A drawback of the constructions we have given thus far is that in the worst case (both for the log-space and linear-space protocols for element-wise products and functions) the probability of successfully applying the transformed block encoding can decrease exponentially\footnote{Note this is distinct from the separate but similarly thorny problem that if an initial state is nearly annihilated by the block encoding, i.e., that $\lVert A|\psi\rangle\rVert^2$ is small, which can generically occur if $A$ is far from unitary, then our protocol can also fail with high probability.} in the degree of the applied polynomial function. In what follows we discuss a few techniques by which (under certain conditioning assumptions) this problem can be mitigated. For the most part these problems have been encountered in prior work for taking long matrix products of block encodings, which can be modified to apply to our Kronecker-to-element-wise product identities.

Restricting ourselves to the usual matrix product compression gadget, we first deal with the possibility that each input $(\alpha_k, a_k, \varepsilon_k)$ block encoding for some linear operator $A$ has some non-optimal subnormalization $\alpha_k \geq \lVert A \rVert$. In the following lemma, adapted from Thm.~4 of \cite{flt_time_marching_23}, it can be shown that uniform singular value amplification (Thm.~\ref{thm:uniform_sv_amp}) can be used to multiply this subnormalization by a factor of $\gamma \equiv \alpha(1 - \delta)/\lVert A\rVert$, up to precision $\varepsilon$, in turn mitigating the effect of these $\alpha_k$ when they appear in long products of block encoded linear operators.

\begin{lemma}[Amplified compression gadget; adapted from \cite{flt_time_marching_23}] \label{lem:amp_comp_gadget}
    Let $U_{A_k}$ be $(\alpha_k, a_k, \varepsilon_k \lVert A_k\rVert)$ block encodings of (possibly non-unitary) $A_k$ for $k \in \{0,1, \dots, K - 1\}$ and $\sum_{k} \varepsilon_k \leq 1/2$. Then for any $0 < \varepsilon < 1/2K$ we can construct an $(\alpha', a', \varepsilon')$ block encoding of the matrix product $A_{K-1}\cdots A_{1} A_{0}$, where $\alpha'$ satisfies
        \begin{equation}
            \frac{1}{2}(1 - \delta)^{-L}\prod_{k} \lVert A_k\rVert 
            \leq 
            \alpha' 
            \leq 
            e^{1/2}(1 - \delta)^{-L}\prod_{k} \lVert A_k\rVert,
        \end{equation}
    and $a'$ and $\varepsilon'$ satisfy
    \begin{equation}
        a' = \max{a_k} + \log{(K)} + 2,
        \quad\quad
        \varepsilon' = e^{1/2} \left[K \varepsilon + \sum_{k} \varepsilon_k \right]\prod_{k} \lVert A_k\rVert.
    \end{equation}
    Moreover, this amplification procedure uses $\mathcal{O}(\alpha_k/(\delta\lVert A_k\rVert)\log{(\alpha_k/(\varepsilon\lVert A_k\rVert))})$ copies of each controlled-$U_k$ and its inverse.
\end{lemma}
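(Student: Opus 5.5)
The plan follows the time-marching argument of \cite{flt_time_marching_23}: first amplify each input block encoding uniformly, then multiply the amplified encodings with the compression gadget (Lem.~\ref{lem:comp_gadget}), and track how subnormalization and error compound.

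\emph{Step 1: per-factor amplification.} For each $k$, apply uniform singular value amplification (Thm.~\ref{thm:uniform_sv_amp}) to $U_{A_k}$ with amplification parameter $\delta$ and precision $\varepsilon$. Here $\gamma_k = \lVert A_k\rVert/\alpha_k$. This yields an $(\lVert A_k\rVert/(1-\delta),\, a_k+1,\, \varepsilon\lVert A_k\rVert + \varepsilon_k'\lVert A_k\rVert)$ block encoding. The term $\varepsilon_k'$ collects the propagated input error $\varepsilon_k$. That error is rescaled by $\gamma_k^{-1}$ through the subnormalization, and the amplifying polynomial is approximately linear with slope $\alpha_k/\lVert A_k\rVert$ on the relevant interval, so after renormalization the error stays $\mathcal{O}(\varepsilon_k\lVert A_k\rVert)$. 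The query count $\mathcal{O}(\alpha_k/(\delta\lVert A_k\rVert)\log(\alpha_k/(\varepsilon\lVert A_k\rVert)))$ is read off directly from Thm.~\ref{thm:uniform_sv_amp}.

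\emph{Step 2: compressed product.} Feed the $K$ amplified block encodings into the compression gadget. The counter register costs $\log K$ qubits and is shared with the largest $a_k$. The $+1$ from amplification and one extra qubit for the projection-controlled increment account for $a' = \max a_k + \log K + 2$. The product subnormalization is $\prod_k \lVert A_k\rVert/(1-\delta)$, which gives the stated form $(1-\delta)^{-L}\prod_k\lVert A_k\rVert$ with $L=K$. I would treat the constants $\tfrac12$ and $e^{1/2}$ as absorbing the normalization slack described next.

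\emph{Step 3: error and subnormalization bounds (main obstacle).} Write each encoded block as $\tilde A_k = A_k + E_k$ with $\lVert E_k\rVert \le (\varepsilon + \varepsilon_k)\lVert A_k\rVert$. Expanding the telescoping product gives
\[
\Big\lVert \prod_k \tilde A_k - \prod_k A_k\Big\rVert \le \Big[\prod_k \big(1+\varepsilon+\varepsilon_k\big) - 1\Big]\prod_k\lVert A_k\rVert .
\]
The hypotheses $K\varepsilon < 1/2$ and $\sum_k\varepsilon_k \le 1/2$ give
\[
\prod_k (1+\varepsilon+\varepsilon_k) \le e^{K\varepsilon+\sum_k\varepsilon_k} \le e .
\]
Linearizing, with $e^x - 1 \le e^{x}x$ for $x\le 1$ and $x = K\varepsilon+\sum_k\varepsilon_k$, yields $\varepsilon'$. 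The delicate point is fitting this to the stated constant $e^{1/2}$ rather than $e$. I would try first to bound the two error sources separately, or to use $x\le 1/2$ after possibly halving $\varepsilon$; if that fails, I would accept a slightly worse absolute constant.

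Finally, the two-sided bound on $\alpha'$ follows from the fact that the norms of the amplified blocks lie within factors $(1\pm(\varepsilon+\varepsilon_k))$ of their targets. Taking products and using $\prod_k(1-x_k) \ge 1-\sum_k x_k \ge 1/2$ gives the lower bound $\tfrac12$. The upper bound $\prod_k(1+x_k) \le e^{\sum_k x_k}$ gives $e^{1/2}$ when the relevant sum is at most $1/2$.
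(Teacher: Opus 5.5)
The paper does not prove this lemma; it imports it from Thm.~4 of \cite{flt_time_marching_23}. The paragraph before the lemma sketches exactly your route: uniformly amplify each factor via Thm.~\ref{thm:uniform_sv_amp}, then multiply with the compression gadget. So your skeleton is right. However, Step~1 has a genuine gap, and it is also why Step~3 cannot produce the stated constants.

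Thm.~\ref{thm:uniform_sv_amp} needs an exact encoding, so it acts on $\tilde A_k\equiv\alpha_k(\langle 0|\otimes I)U_{A_k}(|0\rangle\otimes I)$. The norm of $\tilde A_k$ can be as large as $(1+\varepsilon_k)\lVert A_k\rVert$.

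\begin{itemize}
\item Amplifying with slope $\gamma_k=(1-\delta)\alpha_k/\lVert A_k\rVert$ only controls singular values of $\tilde A_k/\alpha_k$ up to $\lVert A_k\rVert/\alpha_k$. Larger ones fall outside the window where the polynomial tracks $\gamma_k x$.
\item When $\varepsilon_k>\delta/(1-\delta)$, the line $\gamma_k x$ even exceeds $1$ there. So ``approximately linear on the relevant interval'' cannot hold.
\item The usual line-times-smoothed-window polynomial cuts those singular values off. The per-factor error is then of order $\lVert A_k\rVert$, not $\mathcal{O}(\varepsilon_k\lVert A_k\rVert)$.
\item The hypotheses allow $\varepsilon_k\gg\delta$, for example a single $\varepsilon_k$ near $\frac12$ with $\delta=\mathcal{O}(1/K)$. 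So your claimed $(\lVert A_k\rVert/(1-\delta),a_k+1,\cdot)$ encoding is not justified.
\end{itemize}

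The fix is to amplify to subnormalization $\lVert\tilde A_k\rVert/(1-\delta)$, or to the bound $(1+\varepsilon_k)\lVert A_k\rVert/(1-\delta)$. Then $\alpha'=(1-\delta)^{-K}\prod_k\lVert\tilde A_k\rVert$. The two-sided bound is just $\prod_k(1-\varepsilon_k)\ge 1-\sum_k\varepsilon_k\ge\frac12$ and $\prod_k(1+\varepsilon_k)\le e^{1/2}$, using only $\sum_k\varepsilon_k\le\frac12$.

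Your derivation with $x_k=\varepsilon+\varepsilon_k$ does not work. The sum $\sum_k x_k=K\varepsilon+\sum_k\varepsilon_k$ can approach $1$, which gives roughly $0$ and $e$ as bounds, not $\frac12$ and $e^{1/2}$. More fundamentally, $\alpha'$ is the chosen scalar of the encoding, not the norm of the encoded block, so the amplification precision $\varepsilon$ should not enter it at all.

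For $\varepsilon'$, the missing idea is to keep the two error sources apart instead of expanding $\prod_k(1+\varepsilon+\varepsilon_k)$.

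\begin{itemize}
\item Let $B_k$ be the amplified top-left blocks and $C_k=(1-\delta)\tilde A_k/\lVert\tilde A_k\rVert$. All have norm at most $1$ and $\lVert B_k-C_k\rVert\le\varepsilon$.
\item Telescoping then gives $\lVert\prod_k B_k-\prod_k C_k\rVert\le K\varepsilon$, with no compounding.
\item Since $\alpha'\prod_k C_k=\prod_k\tilde A_k$, the amplification error contributes at most $K\varepsilon\,\alpha'\le e^{1/2}(1-\delta)^{-K}K\varepsilon\prod_k\lVert A_k\rVert$.
\item Separately, $\lVert\prod_k\tilde A_k-\prod_k A_k\rVert\le\bigl(\prod_k(1+\varepsilon_k)-1\bigr)\prod_k\lVert A_k\rVert\le e^{1/2}\sum_k\varepsilon_k\prod_k\lVert A_k\rVert$.
\end{itemize}

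Adding these gives the stated $e^{1/2}\bigl[K\varepsilon+\sum_k\varepsilon_k\bigr]\prod_k\lVert A_k\rVert$. The only discrepancy is a factor $(1-\delta)^{-K}$ on the $K\varepsilon$ term, which is $\mathcal{O}(1)$ for the intended $\delta=\mathcal{O}(1/K)$. The printed statement is loose here, as it is with the undefined $L$. Your merged expansion yields only $e$ (or $e-1$) because it lets the amplification error and the input error multiply each other.
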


The point of the provided Lem.~\ref{lem:amp_comp_gadget}, e.g., choosing $\delta = \mathcal{O}(1/K)$, is that it modifies the success probability of the compression gadget to near its best possible value:
    \begin{equation}
        \frac{\lVert A_{K-1}A_{K-1}\cdots A_{0}|\psi\rangle\rVert^2}{\lVert A_{K - 1}\rVert^2 \cdots \lVert A_{1}\rVert^2 \lVert A_{0}\rVert^2},
    \end{equation}
i.e., up to the chosen error $\varepsilon$, on which the query complexity depends only weakly. Consequently if we have good knowledge of an upper bound on the spectral norm of the element-wise products at each level of our recursion, we improve our success probability at the cost of query complexity. In the next statement we apply the results of Lem.~\ref{lem:amp_comp_gadget} to our compressed element-wise product gadget (Thm.~\ref{thm:iterated_prod}); as mentioned this improves the success probability of our protocol in terms of the subnormalization factors of the input block encodings at the cost of additional query complexity (and thus a small amount of additional space) as well as an introduced approximation error.

\begin{theorem}[Amplified element-wise product] \label{thm:amp_ewt}
    Let $U_A$ an $(\alpha, a, \varepsilon)$ block encoding of an $n$-qubit operator $A$. Then we can construct an $(\gamma, \mathcal{O}(a + n\log{d}), \varepsilon')$ block encoding of the $d$-th element-wise product of $A$ with itself, denoted $(A)^{\circ d}$ using $Q = \mathcal{O}(d^{\log{(\eta/\delta)\log{(\eta/\varepsilon)}}})$ queries to $U_A$, where these block encoding parameters have the form
        \begin{equation}
            \gamma = \mathcal{O}{\Big(\alpha^{Q}\Big)},
            \quad\quad 
            \varepsilon' = \mathcal{O}{\Big([\alpha + \varepsilon]^{Q} \!- \alpha^{Q}\Big)},
        \end{equation}
    where $\eta \equiv \alpha/\lVert A \rVert \geq 1$, and where we will usually take $\delta = \mathcal{O}(1/d)$. Moreover, the lower bound on the success probability of the resulting protocol, $p_{\text{succ}}$, compared to that of the construction in Thm.~\ref{thm:iterated_prod}, $p_{\text{succ}}'$, is improved:
    \begin{equation}
        p_{\text{succ}}' = \alpha^{-2d}\lVert A^{\circ d} |\psi\rangle \rVert^2
        \;\;\mapsto\;\;
        p_{\text{succ}} \geq (1 - \varepsilon)^{2d}(1 - \delta)^{2d}\,\lVert A^{\circ d} |\psi\rangle \rVert^2.
    \end{equation}

    \begin{proof}
        The query complexity of this amplification at any given recursive level is the quoted
        \begin{equation}
            \mathcal{O}(\alpha/(\delta\lVert A\rVert)\log{(\alpha/(\varepsilon\lVert A\rVert))}) \equiv \mathcal{O}(\eta/\delta \log{(\eta/\varepsilon)}),
        \end{equation}
        where $\eta = \alpha/\lVert A \rVert \geq 1$. Using the general relation worked out in Rem.~\ref{rem:alt_construct} we can compute the asymptotic scaling of the total query complexity for achieving a degree-$d$ element-wise product of amplified block encodings
        \begin{equation}
            Q = \mathcal{O}\left(d^{\log{([\eta/\delta]\log{(\eta/\varepsilon)})}}\right).
        \end{equation}
        The choice of $\delta$ required to bring the success probability to within a $\Omega(1)$ multiplicative factor of $\lVert A^{\circ d} |\psi\rangle\rVert$ is thus $\delta = \mathcal{O}(1/d)$, which shows that our resulting query complexity is quasi-polynomial in $d$ (i.e., $\mathcal{O}(d^{\log{(d\log{(1/\varepsilon)})}})$).
    \end{proof}
\end{theorem}

It is well-known in literature (albeit often misquoted\footnote{For a good explanation of why this is, see the discussion given in the pedagogical \cite{lw_lecture_notes_26}}) that techniques like fixed-point, oblivious amplitude amplification (see Thm.~\ref{thm:roaa}) rely crucially on the block encoded linear operator being (close to) an isometry. When given a general block encoding, as shown in the previous Thm.~\ref{thm:amp_ewt}, the success probability cannot generally be driven close to one for all input states (i.e., obliviously). In the following few statements we will consider the much stronger assumption that our input block encoding (and indeed, its element-wise powers) is close to an isometry. Note for element-wise powers this means our block encoding is close to the (matrix product) identity, which while restrictive is analogous to assumptions made in previous work on compression gadgets \cite{lw_comp_gadget_19, vg_reducing_space_25, flt_time_marching_23}, and allows us to apply robust, oblivious amplitude amplification (ROAA). Following these statements we will discuss a few additional reasonable and useful input assumptions for the QEWT.

\begin{definition}[Near-identity block encodings] \label{def:near_iden_be}
    Let $U_A(k)$ a $k$-parameterized family of $(\alpha, a, \varepsilon)$ block encodings for $k \in \mathbb{N}$. Then we say that $U_A(k)$ is $k$-\emph{near-identity} if
        \begin{equation}
            \lVert U_A(k) - I \rVert = \mathcal{O}(1/k),
        \end{equation}
    equivalently that the maximum singular value of the difference between $U_A(k)$ (the entire block encoding unitary) and the identity matrix of the same dimension scales as $1/k$.
\end{definition}

While the condition of Def.~\ref{def:near_iden_be} appears quite restrictive, we argue, as is also noted in \cite{vg_reducing_space_25}, that similar bounds appear in a wide array of block encoding applications including Hamiltonian simulation and generic differential equation solvers (e.g., for short time evolutions $e^{iHt}$ can yield near-identity block encodings). As in the cited prior work and here, we are interested in this property as it provides a lower bound on the success probability of the (element-wise) multiplication of of multiple near-identity block encodings.

\begin{proposition}[Element-wise multiplication of near-identity block encodings] \label{prop:ewt_succ_near_iden}
    Let $U_{A_k}(K)$ for $k \in \{0, 1, \dots, K-1\}$ each $K$-near-identity. Then the element-wise product of all of the $A_k$ succeeds with probability $p_{\text{succ}} = \Omega(1)$, and moreover the product $B = (A_0\circ A_1\circ\cdots\circ A_{K-1})$ satisfies $\lVert B - I\rVert = \mathcal{O}(1)$.

    \begin{proof}
        It can easily be checked that the success probability for a $K$-near-identity block encoding scales as $\text{tr}{(\rho |A|^2)} = \Omega(1 - 1/K)$ for any fixed initial state $\rho$, and consequently the overall success probability scales as $\Omega((1 - 1/K)^K) = \Omega(1)$. The bound on the difference between the element-wise product and the identity follows from the submultiplicativity of the element-wise product in the operator norm (see Appx.~\ref{appx:ewt_prop}).
    \end{proof}
\end{proposition}

\begin{corollary}[Element-wise products between operators near the identity] \label{cor:near_iden_ewt}
    Let $U_A$ an $(\alpha, a, \varepsilon)$ block encoding of some $n$-qubit $A$, and moreover let $U_A$ $d$-near-identity. Then we can construct a $(\beta, b, \delta)$ block encoding of the $d$-th element-wise product of $A$ with itself, denoted $A^{\circ d}$ using $d$ queries to $U_A$, and where $\beta, b, \delta$ satisfy
        \begin{equation}
            \beta = \Omega(1),\quad 
            b = \mathcal{O}(a + n\log{d}),\quad 
            \delta = (\alpha + \varepsilon)^d - \alpha^{d}.
        \end{equation}
    Moreover the probability of successfully applying $A^{\circ d}$ to $|\psi\rangle$ scales as $\Omega(1)$.
    \begin{proof}
        Proof follows immediate by restricting of our previous amplified element-wise product (Thm.~\ref{thm:amp_ewt}) to the case that $A$ is $d$-near-identity, which constrains $\alpha = \Omega(1 - 1/d)$ and thus also provides the $\Omega(1)$ lower bound on both $\beta$ and the overall success probability. The lack of a requirement for amplification also reduces us from quasipolynomial to linear query complexity in $d$.
    \end{proof}
\end{corollary}

\begin{corollary}[Approximate element-wise transform] \label{cor:approx_ewt_func}
    Let $U_A$ as in the main Thm.~\ref{thm:qewt}. Then we can create a block encoding of $f^{\circ}(A)$, the element-wise application of $f$ to $A$, with block encoding parameters $(\beta, b, \varepsilon')$ of the form
    \begin{align}
        \beta &= \sum_{k = 1}^{d} |c_k|\,(1 - \delta)^{-k},
        \quad\;\;
        b = \mathcal{O}\Big(a + n(\log{d})\log{(\log{(1/\varepsilon))}}\Big),\\
        \varepsilon' &= \sum_{k = 1}^{d} |c_k|\,\Big[\big([1 - \delta]^{-1} + \varepsilon\big)^k \!- (1 - \delta)^{-k}\Big].
    \end{align}
    where the query complexity to the input block encoding is $d$, and the probability of successfully applying $f^{\circ}(A)$ to the input state is bounded below by $(1 - \varepsilon)^{2d}(1 - \delta)^{2d}\,\lVert f^{\circ}(A)\rVert^2/\lVert f\rVert_1^2$, where $\lVert \ast \rVert_1$ refers to the $\ell_1$ norm of the coefficients of $f$. Moreover, if $U_A$ is $d$-close to the identity, then all of the factor dependent on the amplification, $\delta$, in the block encoding parameters and success probability, reduces to zero.
\end{corollary}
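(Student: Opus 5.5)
The plan is to obtain Cor.~\ref{cor:approx_ewt_func} by feeding the \emph{amplified} element-wise powers of Thm.~\ref{thm:amp_ewt} into the LCU outer loop of Thm.~\ref{thm:qewt}, with the \textsc{select} unitary built as in Lem.~\ref{lem:ew_select}. The only change to the main construction is that every copy of $U_A$ at the leaves of the recursion is first passed through uniform singular value amplification (Thm.~\ref{thm:uniform_sv_amp}). Equivalently, the matrix products inside each recursive level use the amplified compression gadget of Lem.~\ref{lem:amp_comp_gadget} in place of Lem.~\ref{lem:comp_gadget_informal}. After amplification each leaf is a block encoding of $A$ with subnormalization $\lVert A\rVert/(1-\delta)$ and error proportional to $\varepsilon$; normalizing as in the theorem, this is a $((1-\delta)^{-1},\cdot,\varepsilon)$-type encoding of $A/\alpha$. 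Composing $k$ of these through the weaving/compression construction of Thm.~\ref{thm:iterated_prod} then gives subnormalization $(1-\delta)^{-k}$ for $A^{\circ k}$ and error $([1-\delta]^{-1}+\varepsilon)^k-(1-\delta)^{-k}$, using the same telescoping error propagation as in Appx.~\ref{appx:error_prop}.

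The steps, in order, are as follows.
\begin{enumerate}[label=(\roman*)]
\item Fix the per-leaf amplification with parameters $(\delta,\varepsilon)$ and record its block encoding parameters and its extra qubit. Its query cost $m=\mathcal{O}(\eta\delta^{-1}\log(\eta/\varepsilon))$ is treated as the ``$m$ copies per argument'' of Rem.~\ref{rem:alt_construct}.
\item Apply Thm.~\ref{thm:iterated_prod} (or Lem.~\ref{lem:lcu_alt_ewt}) with these amplified leaves to get the powers $A^{\circ k}$ for $k\le d$.
\item Make each power controlled via Lem.~\ref{lem:ew_select} and perform the LCU with state preparation pair for $(c_k)$. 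By linearity (Prop.~\ref{prop:error_lcu_ewt}) this gives $\beta=\sum_k|c_k|(1-\delta)^{-k}$ and the stated $\varepsilon'$ as a coefficient-weighted sum of the per-power errors.
\item For the success probability, invoke Prop.~\ref{prop:lcu_success}. The per-power lower bound $(1-\varepsilon)^{2d}(1-\delta)^{2d}$ from Thm.~\ref{thm:amp_ewt} is applied uniformly in $k\le d$, and the $\ell_1$ normalization contributes the factor $\lVert f\rVert_1^{-2}$.
\end{enumerate}

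For the space count $b$, each recursion level of depth $\log d$ now carries amplification workspace. Rem.~\ref{rem:alt_construct} shows that the effective number of sequential gadget calls grows to $d^{\log m}$. Consequently the compression register is $\log(d^{\log m})=\log d\cdot\log m$. With $m$ polylogarithmic in $1/\varepsilon$ (taking $\eta,\delta$ as constants), this gives the quoted $n(\log d)\log\log(1/\varepsilon)$ factor, plus the $a$ qubits of the input and the $\mathcal{O}(\log d)$ qubits of the LCU register. The final clause follows from Cor.~\ref{cor:near_iden_ewt}: if $U_A$ is $d$-near-identity, no amplification is needed, so we set $\delta=0$ and $m=1$, and the expressions collapse to those of Thm.~\ref{thm:qewt}.

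The main obstacle is reconciling the resource claims. The statement says the query complexity is $d$, but the amplification at each level makes the count quasi-polynomial ($d^{\log m}$, Table~\ref{tab:resource_comp}). I would read ``$d$'' as counting amplified-gadget calls, and state the raw $U_A$ count explicitly via Rem.~\ref{rem:alt_construct}. I would also check that the amplified leaves remain block-diagonal in $r_1$, so that swap-copy and the catalytic weaving state are still valid; the amplification acts only on the outer registers $r_2$ and $r_4$, so I expect this to go through.
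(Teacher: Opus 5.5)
Your overall route is the one the paper intends. The paper states the corollary without a separate proof, as Thm.~\ref{thm:amp_ewt} fed into the LCU of Thm.~\ref{thm:qewt}, with Cor.~\ref{cor:near_iden_ewt} for the last clause. You are also right that the ``$d$'' in the query count can only mean calls to amplified gadgets. Note that amplification needs controlled $U_A^\dagger$, which the hypotheses of Thm.~\ref{thm:qewt} do not supply.

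Your resource accounting, however, is internally inconsistent. The construction you describe (amplify every leaf, or equivalently swap Lem.~\ref{lem:amp_comp_gadget} in for the single compression gadget over the unrolled product) costs $m$ queries per leaf, hence $\mathcal{O}(dm)$ in total. The $d^{\log m}$ blow-up of Rem.~\ref{rem:alt_construct} arises only if intermediate element-wise powers are re-amplified at every level. Nothing in the corollary requires that, since none of $\beta$, $\varepsilon'$, $p_{\text{succ}}$ involves norms of intermediate powers.

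Your derivation of $b$ then fails outright. A compression counter has size logarithmic in the number of factors and carries no factor of $n$. So $\log(d^{\log m})=\log d\cdot\log m$ cannot produce $n(\log d)\log\log(1/\varepsilon)$. The $n$-dependence comes only from the weaving registers, one $n$-qubit register per level of the element-wise tree (Lem.~\ref{lem:rec_swap_copy}). Amplification repeats calls sequentially at a fixed level without adding levels. The space is therefore $\mathcal{O}(a+n\log d)$, as in Thm.~\ref{thm:qewt}, which sits inside the quoted $b$ only as a loose upper bound. Taking $\delta$ constant is also not the relevant regime: with $\delta=\mathcal{O}(1/d)$ one has $\log m=\mathcal{O}(\log\eta+\log d+\log\log(\eta/\varepsilon))$.

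Two further steps need repair.

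\textbf{Normalization.} After Thm.~\ref{thm:uniform_sv_amp} with slack $\delta$, each leaf is a $((1-\delta)^{-1},a+1,\varepsilon)$ block encoding of $\tilde A\equiv A/\lVert A\rVert$, not of $A/\alpha$. The two agree only when $\lVert A\rVert=\alpha$, in which case no amplification is needed. This normalization is what produces $\beta=\sum_k|c_k|(1-\delta)^{-k}$ and the stated $\varepsilon'$, so the target must be read as $f^{\circ}(\tilde A)$.

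\textbf{Success probability.} Per-power success probabilities do not compose into the success probability of an LCU. Apply Prop.~\ref{prop:lcu_success} to the whole circuit instead. The LCU places coefficients $c_k(1-\delta)^{-k}$ on unitaries encoding $(1-\delta)^k\tilde A^{\circ k}$. So in the exact case $p_{\text{succ}}=\lVert f^{\circ}(\tilde A)|\psi\rangle\rVert^2/\beta^2$. The bound $\beta\le(1-\delta)^{-d}\lVert f\rVert_1$ then gives the factor $(1-\delta)^{2d}\lVert f\rVert_1^{-2}$, reading the statement's $\lVert f^{\circ}(A)\rVert$ as $\lVert f^{\circ}(\tilde A)|\psi\rangle\rVert$. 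Any $\varepsilon$-dependent factor has to come from perturbing this expression via $\varepsilon'$, not from the per-power bounds of Thm.~\ref{thm:amp_ewt}.
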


Before moving on to Sec.~\ref{sec:applications}, where we construct and discuss applications of QEWTs, we briefly summarize what the above calculations express regarding the conditions under which applying element-wise functions is expected to be efficient\slash succeed with high probability. In the general setting the best we can do is uniformly amplify all singular values of the input block encoding, which in general costs, at each level of our recursive construction, queries proportional to $\alpha/\lVert A\rVert \geq 1$ (as well as logarithmically on the inverse approximation error), and for non-isometric $A$ can still permit success probability for certain input states to decay exponentially in $d$. Being able to arbitrarily drive the success probability to one requires \emph{both} that $A$ is a near-isometry and that the element-wise product preserves this near-isometric property, which restricts us to the near-identity\footnote{More generally to near-permutation matrices where each element is free up to a complex phase; these are sometimes referred to as \emph{monomial matrices}, and are a representation of a \emph{wreath product} $\mathbb{T} \wr S_{2^n}$. These turn out to be the largest class of matrices preserved under element-wise powers.} setting.

While beyond the scope of this work (and briefly expanded on in our discussion of open directions in Sec.~\ref{subsec:open_dir}), it is easy to enumerate assumptions which may allow more efficient application of the QEWT. These include cases where the input matrices have known block-structure, non-trivial null-space, or known\slash easily preparable eigenvectors, which may either obviate the need to allocate additional space or to apply a masking step before performing a swap-copy operation. We also note the recent explosion of randomized\slash parallelized\slash Trotterized techniques in block encoding algorithms \cite{campbell_qdrfit_19, cstwz_trotter_comm_21, chak_lcu_near_24, martyn_rall_halving_24}, which may provide practical benefit when running these algorithms on constrained devices.

%%%%%%%%%%%%%%%%%%%%%%%%%%%%%%%%%%%%%%%%%%%%%
\section{Applications} \label{sec:applications}

\noindent The original impetus for this work is rooted in the ubiquitous appearance of element-wise products and functions in applied numerical linear algebra. In this section we cover a few of these applications and the induced resource complexity for quantum algorithms addressing them; a few of these, for instance the section on quantum algorithms for common subroutines in machine learning inference (Sec.~\ref{subsec:ml_inf}), e.g., computing attention by way of an element-wise softmax, are in direct comparison to prior work. We also examine more generic applications to computing derived transforms common in classical signal processing, e.g., matrix convolutions (Sec.~\ref{subsec:app_2d_conv}) and block generalizations of the Kronecker and element-wise products (Sec.~\ref{subsec:app_nonstandard_prod}), each of which inherits the space-savings of our main construction.

%%%%%%%%%%%%%%%%%%%%%%%%%%%%%%%%%%%%%%%%%%%%%
\subsection{Machine learning inference} \label{subsec:ml_inf}

\noindent Here we focus on a key subroutine of a previous work on (linear-space) element-wise transforms \cite{gychanar_had_prod_25} which involves applying an element-wise function to a product of block encodings as part of a critical subroutine (self-attention) in transformer model inference \cite{attention_17}. Our construction provides immediate improvements to Thms.~S6, S8, and S9 of \cite{gychanar_had_prod_25}, which we exposit briefly below along with a few critical definitions. We note that our construction specifically applies to the block encoding input model, which subsumes the related \emph{state encoding} (Def.~\ref{def:state_encoding}) model, which should be thought of as a block encoding where only the first column (possibly subnormalized) is specified and transformed. The benefit of our method, as noted also in the cited work, is that element-wise transforms of block encodings are in general much more difficult to achieve than element-wise transforms of state encodings, given the ability in the latter to reduce to the diagonal case where the matrix and element-wise products coincide.

\begin{definition}[State encoding; Def.~2 in \cite{gychanar_had_prod_25}] \label{def:state_encoding}
    A unitary $U_\phi$ is a $(\alpha, a, \varepsilon)$ \emph{state encoding} of an $n$ qubit quantum state $|\phi\rangle$ if
        \begin{equation}
            \lVert |\phi\rangle - \alpha (\langle 0^{a}|\otimes I_{n}) U_\phi (|0^{a + n}\rangle)\rVert_{\infty} \leq \varepsilon.
        \end{equation}
    In other words the unitary $U$ prepares, when applied to the all zeros state, a quantum state $|\phi'\rangle$ that $\varepsilon$-approximates a state with $1/\alpha$ overlap\footnote{In this equation and through the rest of the definition, we correct an improper normalization\slash under-specification of the orthogonal component of the state given in \cite{gychanar_had_prod_25}.} with the desired $|\phi\rangle$:
        \begin{equation}
            U_{\phi}|0^{a + n}\rangle = \alpha^{-1} |0^{a}\rangle|\phi'\rangle + (1 - \alpha^{-2})^{1/2}|\phi^\perp\rangle,
        \end{equation}
    where $|\phi^\perp\rangle$ is some unknown quantum state orthogonal to $|\phi\rangle$ (and whose first $a$ qubits have zero support on $|0^a\rangle$). Here (and in the referenced work) error is taken in the $\ell_{\infty}$-norm, as the interest is in element-wise transformations.
\end{definition}

\begin{theorem}[Block encoding softmax; extension of Thm.~S8 in \cite{gychanar_had_prod_25}] \label{thm:softmax}
    Let $U_A$ an $(\alpha, a, \varepsilon)$ block encoding of a square matrix $A \in \mathbb{C}^{N\times N}$ for $N = 2^n$, $d \in \mathbb{N}$, and $0 \leq j \leq N$ an integer. Then one can prepare a $(1, \mathcal{O}(a + n\log{\ell}), \mathcal{O}(\sqrt{\varepsilon\sqrt{N/Z_j}}))$ state encoding of the softmax of the $j$-th column of $A$:
        \begin{equation}
            |A_j\rangle 
            \equiv
            \frac{1}{\sqrt{Z_j}} \sum_{k = 1}^{N} f^{\circ} (A_{jk} /2\alpha) |k\rangle,
        \end{equation}
    where $f(\ast) = \exp{(\ast)}$ is the exponential function, and $Z_j$ is the sum of $f^{\circ}(A_{jk})$ over all $k$. This requires using $\mathcal{O}(\ell\sqrt{N/Z_j})$ copies of controlled $U_A$, where $\ell = \mathcal{O}(n \log{(1/\varepsilon)})$.
\end{theorem}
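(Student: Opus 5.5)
The plan follows the state-encoding route of Thm.~S8 in \cite{gychanar_had_prod_25}, with its linear-space element-wise power gadget replaced by our log-space QEWT (Thm.~\ref{thm:qewt}).

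\textbf{Step 1: reduce to a state-encoding problem.} Form $U_A (I\otimes \mathrm{prep}_j)$, where $\mathrm{prep}_j$ prepares the computational basis state $|j\rangle$ on the $n$-qubit data register. Before doing so, we apply the QEWT to $A/\alpha$. Because element-wise functions commute with column extraction, the $j$-th column of $f^\circ(A/2\alpha)$ is exactly $f^\circ$ applied to the $j$-th column of $A/2\alpha$. This reduces the problem to block encoding $f^\circ(A/2\alpha)$ and then applying it to $|j\rangle$.

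\textbf{Step 2: approximate $f$ by a polynomial without constant term.} Write $f(x)=e^{x}=1+g(x)$ with $g(x)=e^x-1$, which has no constant term. Truncate the Taylor series of $g$ at degree $\ell=\mathcal{O}(\log(1/\varepsilon))$, which gives uniform error $\varepsilon$ on $[-1/2,1/2]$. The rescaling by $2\alpha$ ensures every entry of $A/2\alpha$ lies in this interval and keeps $|g|\le e^{1/2}-1<1$, after a constant rescaling, so the hypotheses of Thm.~\ref{thm:qewt} hold. That theorem gives a block encoding of $g_\ell^\circ(A/2\alpha)$ with $\beta=\mathcal{O}(1)$, since $\sum|c_k|\le e^{1/2}$, using $\mathcal{O}(\ell)$ controlled queries and $\mathcal{O}(a+n\log\ell)$ auxiliary qubits.

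\textbf{Step 3: restore the constant term.} The term $1$ contributes the all-ones matrix $J$ under element-wise application, not the identity. We add it by a two-term LCU with a block encoding of $J$: the product $H^{\otimes n}|0\rangle\langle 0|H^{\otimes n}$ gives $J/N$. The cleaner option is to work at the state level: the column of $J$ is the uniform state $H^{\otimes n}|0^n\rangle$, with amplitude $1$ per entry. We take an LCU of that state preparation, with weight $\sqrt N$ relative to the $\ell_2$ normalisation, and the column produced in Step 2. The $\sqrt N$ factor in the query complexity comes precisely from this normalisation mismatch.

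\textbf{Step 4: amplify and propagate error.} The resulting unnormalised vector has amplitudes $f(A_{jk}/2\alpha)$ up to additive $\ell_\infty$ error, and its overall subnormalisation is $\Theta(\sqrt{N/Z_j})$, since $\|(f(A_{jk}/2\alpha))_k\|_2=\sqrt{Z_j}$ in the paper's normalisation. Fixed-point amplitude amplification (Thm.~\ref{thm:fpaa}) then needs $\mathcal{O}(\sqrt{N/Z_j})$ rounds, each costing $\mathcal{O}(\ell)$ queries, which gives the stated $\mathcal{O}(\ell\sqrt{N/Z_j})$ total. Amplification adds only one auxiliary qubit, so the space stays $\mathcal{O}(a+n\log\ell)$. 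The $\ell=\mathcal{O}(n\log(1/\varepsilon))$ in the statement absorbs the extra precision needed to control $\ell_\infty$ error after renormalisation.

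\textbf{Main obstacle.} The hardest part is the error bookkeeping. The input error $\varepsilon$ propagates through the QEWT via $\delta=\sum|c_k|[(1+\varepsilon/\alpha)^k-1]=\mathcal{O}(\varepsilon)$ for bounded coefficients, as in Prop.~\ref{prop:error_lcu_ewt}. It is then amplified by the factor $\sqrt{N/Z_j}$ when we renormalise. Converting the resulting $\ell_2$ error into the $\ell_\infty$ guarantee of Def.~\ref{def:state_encoding} and matching the claimed $\mathcal{O}(\sqrt{\varepsilon\sqrt{N/Z_j}})$ requires a careful bound, obtained by splitting the error between the truncation and the amplification stages. I would first try to follow the analogous computation in Thm.~S8 verbatim, substituting our $\beta$ and $\delta$.
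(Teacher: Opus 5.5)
Your plan is the route the paper intends. The paper gives no standalone proof: it presents the result only as Thm.~S8 of \cite{gychanar_had_prod_25} with the linear-space element-wise gadget swapped for the log-space QEWT of Thm.~\ref{thm:qewt}, and, like you, it inherits the $\ell_\infty$ error and amplification bookkeeping from that work rather than rederiving it. Your explicit state-level handling of the constant term of $\exp$ is a useful detail the paper leaves implicit: it contributes the all-ones matrix, added via $H^{\otimes n}|0^n\rangle$ with weight $\sqrt{N}$, and it must be handled separately because Thm.~\ref{thm:qewt} assumes $c_0=0$. However, Thm.~\ref{thm:fpaa} costs $\mathcal{O}(\eta^{-1}\log(1/\varepsilon))$ runs of the circuit, so your count of $\mathcal{O}(\sqrt{N/Z_j})$ rounds drops a multiplicative $\log(1/\varepsilon)$ factor, which the $n$ in $\ell$ (a bound on the truncation degree) does not obviously absorb.
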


\begin{problem}[Defining block encoding self-attention] \label{prob:self_att}
    Let $n, d \in \mathbb{N}$ and $N = 2^n$, and assume access to the unitary $U_{S}$ which is an $(\alpha_s, a_s, \varepsilon_s)$ block encoding of $S \in \mathbb{R}^{N \times d}$. Moreover, for matrices $W_{q}$, $ W_{k}$, and $W_{v}$ all in $\mathbb{R}^{d\times d}$, assume access to unitaries $U_{W_{\ast}}$ which are $(\alpha_w, a_w, \varepsilon_w)$ block encoding of each respectively. Then define $Q = S W_{q}$, $K = SW_{k}$, and $V = SW_{v}$. Given an index $j \in \{0, 1, \dots, N-1\}$, computing the self attention according to $Q, K , V$ requires preparing a block encoding of $G_{j, \ast}$ with the form
        \begin{equation}
            G_{j, \ast} = (\text{softmax}(QK/\alpha_0)\,V)_{j, \ast}.
        \end{equation}
\end{problem}

\begin{theorem}[Block encoding self-attention; extension of Thm.~S9 in \cite{gychanar_had_prod_25}] \label{thm:self_attention}
    Let $j \in \{0, 1, \dots, N-1\}$ and $\alpha_0 \equiv \alpha_s^2 \alpha_w^2$, and consider the objects\slash conventions established in Prob.~\ref{prob:self_att}. Then we can construct an
    \begin{equation}
    (\alpha_s\alpha_w,\, 
    \mathcal{O}(a_s + a_w + n\log{\ell}),\, 
    \mathcal{O}(\alpha_s\alpha_w\sqrt{(\varepsilon_s + \varepsilon_{w})N/Z_j}))    
    \end{equation}
    block encoding of a matrix $G$ where $G_{j, \ast}$ has the form
    \begin{equation}
        G_{j, \ast} = (\text{softmax}(QK/\alpha_0)\,V)_{j, \ast},
    \end{equation}
    using $\mathcal{O}(\ell\sqrt{N/Z_j})$ queries to $U_S$, $U_{W_{q}}$, $U_{W_{k}}$ and $U_{W_{v}}$, where $Z_j$ is the has the form
        \begin{equation}
            Z_j \equiv \sum_{k = 1}^{N} (e^{\circ}(QK^{\intercal})/\alpha_0)_{j, k},
        \end{equation}
    and $\ell = \mathcal{O}(n\log{(\varepsilon_{s} + \varepsilon_{w})^{-1}})$, where we've used the notation $e^{\circ}(\ast)$ to denote the element-wise application of the exponential function.
\end{theorem}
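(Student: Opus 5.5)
I will build the row $G_{j,\ast}$ in three stages and then multiply. The stages are: a block encoding of the score matrix $QK^{\intercal}$; a state encoding of the softmax of its $j$-th row, obtained from the element-wise exponential; and finally a multiplication by a block encoding of $V$.

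\emph{Stage 1: the scores.} Using Def.~\ref{def:be_mult}, compose $U_S$, $U_{W_q}$, $U_{W_k}^\dagger$ and $U_S^\dagger$ (the last two giving the transposes, since everything is real) into a block encoding of
\[
QK^{\intercal} = S W_q W_k^{\intercal} S^{\intercal}.
\]
This encoding has subnormalization $\alpha_s^2\alpha_w^2 = \alpha_0$, auxiliary space $\mathcal{O}(a_s + a_w)$, and error $\mathcal{O}(\alpha_0(\varepsilon_s/\alpha_s + \varepsilon_w/\alpha_w))$. I will absorb this error into the stated $\mathcal{O}(\varepsilon_s + \varepsilon_w)$ after normalizing. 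Its top-left block is then $QK^{\intercal}/\alpha_0$, whose entries lie in $[-1,1]$.

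\emph{Stage 2: the softmax row.} Apply Thm.~\ref{thm:softmax} to this block encoding. Internally this means approximating $\exp(x/2)$ on $[-1,1]$ by a truncated Taylor polynomial of degree $\ell = \mathcal{O}(\log(1/\varepsilon))$. I take $\ell = \mathcal{O}(n\log((\varepsilon_s+\varepsilon_w)^{-1}))$ so that the $\ell_\infty$ error survives the $\sqrt{N}$ blow-up in normalization. The polynomial is then realized element-wise by Thm.~\ref{thm:qewt}. The polynomial has a constant term, which I handle by one extra LCU branch containing the identity. Applied to the $j$-th column via the computational basis state $|j\rangle$, this gives an unnormalized vector with squared norm $Z_j/N$ relative to subnormalization. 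Amplitude amplification (Thm.~\ref{thm:fpaa}) then lifts this to a state encoding at the cost of an $\mathcal{O}(\sqrt{N/Z_j})$ query overhead. This accounts for the query count $\mathcal{O}(\ell\sqrt{N/Z_j})$. The space is $\mathcal{O}(a_s + a_w + n\log\ell)$, by Thm.~\ref{thm:qewt}.

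\emph{Stage 3: multiplication by $V$.} Convert the state encoding of $|A_j\rangle$ into a block encoding whose $j$-th row is this softmax vector. The standard trick is that the state-preparation unitary block encodes $|{\cdot}\rangle\langle 0|$; conjugate it and pre-compose with a basis permutation sending $|0\rangle$ to $|j\rangle$. Then multiply by the block encoding of $V = SW_v$, which has subnormalization $\alpha_s\alpha_w$, again via Def.~\ref{def:be_mult}. The resulting $j$-th row is $(\mathrm{softmax}(QK^{\intercal}/\alpha_0)V)_{j,\ast}$ with subnormalization $\alpha_s\alpha_w$, as claimed.

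\emph{Error bookkeeping, the main obstacle.} Errors propagate through the element-wise exponential and through amplification. Following Prop.~\ref{prop:error_lcu_ewt}, the element-wise polynomial multiplies the input error by a factor $\sum_k |c_k|\,k \le e$. Amplification by $\sqrt{N/Z_j}$ then scales an $\ell_\infty$ error $\varepsilon$ into a normalized-state error of order $\sqrt{\varepsilon\sqrt{N/Z_j}}$, which is the form inherited from Thm.~\ref{thm:softmax}. The final multiplication by $V$ multiplies this by $\alpha_s\alpha_w$. The step I expect to be hardest is reconciling the $\ell_\infty$ state-encoding error with the operator-norm error needed for block encodings. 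I would bound the operator-norm error by the $\ell_2$ error of the row vector and accept the $\sqrt{N/Z_j}$ factor this produces, which matches the stated error $\mathcal{O}(\alpha_s\alpha_w\sqrt{(\varepsilon_s+\varepsilon_w)N/Z_j})$ up to constants.
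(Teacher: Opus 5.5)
The paper states this theorem without proof (as the QEWT-substituted version of Thm.~S9 of \cite{gychanar_had_prod_25}), so your argument has to stand on its own. It has a genuine gap in Stage 3.

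The state you produce in Stage 2 (that of Thm.~\ref{thm:softmax}) has amplitudes $e^{x_k/2}/\sqrt{Z_j}$, with $x_k = (QK^{\intercal})_{jk}/\alpha_0$; this is exactly why you approximate $\exp(x/2)$. Its amplitudes are therefore $\sqrt{s_{jk}}$, where $s_{jk} = e^{x_k}/Z_j$ are the softmax weights. Turning this state encoding into the $j$-th row and multiplying by $V$ yields $\sum_k \sqrt{s_{jk}}\, V_{k,\ast}$, not $\sum_k s_{jk} V_{k,\ast}$.

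What is missing is an element-wise squaring of the amplitude vector before Stage 3. The paper's own gadget does this at the level of states, as the state analogue of Prop.~\ref{prop:ewt_simple_form}:
\begin{itemize}
\item take two copies of the amplified state;
\item apply the qubit-wise CNOT permutation $P$ of Def.~\ref{def:submat_permut};
\item flag the CNOT target register on $|0^n\rangle$.
\end{itemize}
The flagged component is exactly $\sum_k s_{jk}|k\rangle$. This is a subnormalization-one encoding of the unnormalized softmax row, which is legitimate because $\lVert s_j\rVert_2 \le \lVert s_j\rVert_1 = 1$. It is also what makes the final subnormalization exactly $\alpha_s\alpha_w$. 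The cost is only a constant factor in queries and space, and the $\ell_2$ error at most doubles.

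Approximating $\exp(x)$ instead of $\exp(x/2)$ is not a substitute. The amplified state is $\ell_2$-normalized, so you would encode $s_j\, Z_j/\lVert e^{x}\rVert_2$. That prefactor is data-dependent and cannot be absorbed into $\alpha_s\alpha_w$.

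Second, your treatment of the constant term in Stage 2 is incorrect as written. The constant term of an element-wise polynomial is $c_0 J$, with $J$ the all-ones matrix (the element-wise identity), not $c_0 I$.
\begin{itemize}
\item An extra LCU branch carrying the identity (the convention $A^{\circ 0} = I$ of Lem.~\ref{lem:ew_select}) encodes $c_0 I + g^{\circ}(M)$ with $g = f - c_0$. Since $c_0 = 1$ dominates for $\exp(x/2)$, the extracted vector would be essentially $|j\rangle$.
\item Putting $J = N\,H^{\otimes n}|0^n\rangle\langle 0^n|H^{\otimes n}$ into the operator-level LCU gives the correct matrix, but with subnormalization $\approx N$. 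The flagged amplitude is then $\sqrt{Z_j}/N$, and amplification costs $N/\sqrt{Z_j}$ rather than $\sqrt{N/Z_j}$.
\end{itemize}
The relative norm $\sqrt{Z_j/N}$ you assert is obtained only by adding the constant at the level of the vector. Run the constant-free QEWT of Thm.~\ref{thm:qewt} on $|j\rangle$ (weight $\sum_{k\ge 1}|c_k|$) in superposition with $H^{\otimes n}|0^n\rangle$ (weight $|c_0|\sqrt{N}$). The flagged output is $\sum_k f(x_k)|k\rangle/(|c_0|\sqrt{N} + \sum_{k\ge1}|c_k|)$, whose norm is $\Theta(\sqrt{Z_j/N})$.

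Relatedly, applying an encoding of $f^{\circ}(QK^{\intercal}/\alpha_0)$ to $|j\rangle$ extracts the $j$-th \emph{column}, while the row-wise softmax needs row $j$. Build $KQ^{\intercal} = S W_k W_q^{\intercal} S^{\intercal}$ in Stage 1 instead, or use the adjoint of the encoding.
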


We note again that the application of our construction to the softmax and self-attention subroutines presented above applies primarily in the block encoding input model; as noted in \cite{gychanar_had_prod_25, rr_nonlinear_23}, if our goal instead is to just compute non-linear transformations of state encodings, then we can appeal to other methods that can convert our problem to one involving diagonal block encodings for which even more space efficient QSVT methods can apply. In this sense our method, while not necessary for certain state preparation problems, is strictly more powerful, and does not necessarily acquire the exponential (in dimension, or subnormalization) overhead that column-wise application of non-linear state transformation methods would accrue. 

%%%%%%%%%%%%%%%%%%%%%%%%%%%%%%%%%%%%%%%%%%%%%
\subsection{Matrix convolutions} \label{subsec:app_2d_conv}

\noindent Among the most straightforward application of element-wise transforms of block encodings appears in multi-dimensional signal processing generally \cite{learning_had_prod_25, hj_topics_91}, which has also been a recent focus of other block encoding-based quantum algorithms \cite{rhgpr_acc_inf_25, mw_gqsp_24}. Concretely for discrete signals of finite support the element-wise product has a well-known and particularly nice interpretation as a certain convolution in Fourier space. In this section we define these convolutions in the two-dimensional setting and provide a simple quantum algorithm (Thm.~\ref{thm:be_2d_conv}) based in our previous construction for computing them; this provides an immediate extension to the one-dimensional or vectorized convolution algorithms of previous work \cite{rhgpr_acc_inf_25, mw_gqsp_24}, and suggests a wealth of further applications based on performing variant quantum element-wise transforms in the Fourier basis.

\begin{definition}[Discrete 2D convolutions] \label{def:2d_conv}
    Let $h(k_1, k_2)$ and $g(k_1, k_2)$ two discrete signals of finite extent, where $0 \leq k_1, k_2 \leq K - 1$.\footnote{In general the domains of support can be different in each index for each signal, but we keep the support domain square for simplicity.} Then the \emph{linear convolution} of $h, g$ is given by
        \begin{equation}
            (h \ast g)_{\text{lin}}(k_1, k_2)
            \equiv
            \sum_{j_1}\sum_{j_2} h(j_1, j_2) g(k_1 - j_1, k_2 - j_2).
        \end{equation}
    The region over which $j_1, j_2$ are summed is left unspecified here, though can be bounded if $h, g$ have finite support; regardless, it can be shown that the region of support for the resulting convolution is $0 \leq k_1, k_2 \leq 2K - 1$. If we instead wanted to compute the \emph{circular convolution}, we simply treat the finite signals in a periodic way:
        \begin{equation}
            (h \ast g)_{\text{circ}}(k_1, k_2)
            \equiv
            \sum_{j_1 = 0}^{K - 1}\sum_{j_2 = 0}^{K - 1} h(j_1, j_2) g(k_1 - j_1\!\!\!\pmod{K}, k_2 - j_2\!\!\!\pmod{K}),
        \end{equation}
    which has support on $0 \leq k_1, k_2 \leq K - 1$. It is this \emph{circular convolution} that corresponds to element-wise multiplication of the 2D discrete Fourier transforms of our signals (see Lem.~\ref{lem:dis_2d_conv}). Finally, note that the linear convolution can be recovered using the circular convolution by padding the signals with zeros until they are of length $2K - 1$.
\end{definition}

\begin{lemma}[Discrete 2D convolution theorem] \label{lem:dis_2d_conv}
    Let $h(k_1, k_2)$ and $g(k_1, k_2)$ two discrete signals of finite extent, where $0 \leq k_1, k_2 \leq K - 1$. Then the 2D circular convolution of $h$ and $g$ can be written as the inverse Fourier transform applied to the element-wise product of the Fourier transforms $H, G$ of $h, g$:
        \begin{equation}
            (h \ast g)_{\text{circ}}(k_1, k_2)
            =
            \mathcal{F}^{-1} (H \circ G)
            =
            \mathcal{F}^{-1} (\mathcal{F}(h) \circ \mathcal{F}(g)),
        \end{equation}
    where the 2D Fourier transform is defined in the expected way,
        \begin{equation}
            H(k_1, k_2)
            \equiv
            \sum_{j_1 = 0}^{K - 1}\sum_{j_2 = 0}^{K - 1} h(j_1, j_2) \,e^{-2\pi i (j_1 k_1 + j_2 k_2)/K},
        \end{equation}
    and the element-wise product in the discrete 2D setting is the usual one (Def.~\ref{def:elem_func}). Note that while this definition (and the convolution in Def.~\ref{def:2d_conv}) are presented in the 2D setting, a similar relation extends simply to higher dimensions.

    In the 2D setting, however, we have the particularly nice relation that the 2D DFT of a matrix can be written as the composition of the row-wise and column-wise 1D DFTs of the matrix, i.e.,
        \begin{equation}
            \text{DFT}_{2D}(A)
            =
            F A F^{T}
            \implies
            \text{DFT}_{2D}(A)_{jm}
            =
            \sum_{k, \ell} F_{jk} A_{k \ell} F_{m \ell},
        \end{equation}
    where $F_{jk} = \omega^{jk}/\sqrt{K}$ and $\omega = e^{-2\pi i/K}$ is the usual DFT matrix. This implies that our 2D circular convolution can be written as a series of matrix and element-wise products:
        \begin{equation}
            (h \ast g)_{\text{circ}}
            =
            F^{\dagger}([F h F^{T}] \circ [F g F^{T}]) F^{*}
            =
            F^\dagger([F h F] \circ [F g F]) F^\dagger,
        \end{equation}
    where we've treated all among $F, g, h$ as $K$-dimensional matrices, and made the additional simplification given that the DFT matrix $F$ is both symmetric and unitary. Note crucially that the element-wise product and matrix products are not mutually associative.
\end{lemma}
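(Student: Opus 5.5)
The plan is to prove the convolution identity directly from the definitions and then repackage it in matrix form. Throughout, $\omega = e^{-2\pi i/K}$ and all index arithmetic is mod $K$.

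\textbf{Step 1: the Fourier transform of the circular convolution factorizes.} Let $c = (h\ast g)_{\text{circ}}$. I will compute its 2D transform
\[
C(k_1,k_2) = \sum_{m_1,m_2} c(m_1,m_2)\,\omega^{m_1k_1 + m_2k_2}.
\]
After substituting the definition of $c$, I will change variables $\ell_i = m_i - j_i \bmod K$. For each fixed $j_i$ this is a bijection of $\mathbb{Z}_K$. Because $\omega^K = 1$, the reduction mod $K$ in the exponent is harmless, so
\[
\omega^{m_ik_i} = \omega^{j_ik_i}\,\omega^{\ell_ik_i}.
\]
The sum therefore splits as $C(k_1,k_2) = H(k_1,k_2)\,G(k_1,k_2)$, i.e.\ $\mathcal{F}(c) = H\circ G$ entrywise.

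\textbf{Step 2: invert $\mathcal{F}$.} I will use the orthogonality relation $\sum_{k=0}^{K-1}\omega^{k(j-j')} = K\,\delta_{jj'}$, which follows from summing a geometric series. It shows that
\[
\mathcal{F}^{-1}(X)(j_1,j_2) = K^{-2}\sum_{k_1,k_2} X(k_1,k_2)\,\omega^{-(j_1k_1 + j_2k_2)}
\]
is a two-sided inverse of $\mathcal{F}$. Applying it to Step 1 gives $c = \mathcal{F}^{-1}(H\circ G)$, which is the first displayed identity of the lemma.

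\textbf{Step 3: matrix form.} Expanding the matrix product gives $(FAF^{T})_{jm} = \sum_{k,\ell} F_{jk}A_{k\ell}F_{m\ell}$. Since $F_{jk} = \omega^{jk}/\sqrt{K}$, this equals $K^{-1}\,\mathcal{F}(A)_{jm}$, so the 2D DFT is the row-wise DFT followed by the column-wise DFT. Orthogonality also shows $F$ is unitary, and $F$ is symmetric because $F_{jk} = F_{kj}$. Hence
\[
(F^T)^{-1} = F^{*} = (F^{\dagger})^{T} = F^{\dagger}.
\]
It follows that $\mathcal{F}^{-1}(X)$ equals $F^{\dagger}XF^{*}$ up to a scalar factor. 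Combining with Step 2 gives the formula $F^{\dagger}([FhF^{T}]\circ[FgF^{T}])F^{*}$, and the symmetry of $F$ then turns it into the stated form $F^{\dagger}([FhF]\circ[FgF])F^{\dagger}$.

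\textbf{Main obstacle: normalization constants.} I expect the only real difficulty to be the scalar factors, since the lemma's $H$ is unnormalized while $F$ carries $1/\sqrt{K}$. Tracking them, $FhF^{T} = H/K$ and $FgF^{T} = G/K$, while $F^{\dagger}XF^{*} = K\,\mathcal{F}^{-1}(X)$. The matrix expression therefore equals $K^{-1}(h\ast g)_{\text{circ}}$.

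I would first check this constant on the case $K = 1$ and on delta signals. I would then either state the identity with the explicit factor $K$, or adopt the unitary normalization of $\mathcal{F}$ for which the identity holds exactly up to that single scalar. Such a scalar is harmless in the block-encoding setting, where it is absorbed into the subnormalization.

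Finally, I would note that matrix products and $\circ$ do not associate, so the parenthesization must be kept exactly as written.
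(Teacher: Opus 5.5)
The paper gives no proof of this lemma. It is stated as the standard convolution theorem and used directly in Thm.~\ref{thm:be_2d_conv}, so there is no route to compare against. Your direct computation is the standard argument and it is correct. That computation consists of the reindexing $\ell_i = m_i - j_i \bmod K$, which gives $\mathcal{F}(h\ast g) = H\circ G$, then inversion via the geometric-sum orthogonality relation, then the identification $FAF^{T} = \mathcal{F}(A)/K$ with $F$ symmetric and unitary.

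Your constant tracking is also correct, and it shows that the lemma's matrix identity is false as written whenever $K>1$. The mismatch comes from the lemma silently switching from the unnormalized $H$ to the unitary transform $FAF^{T} = H/K$. From $FhF^{T} = H/K$, $FgF^{T} = G/K$ and $F^{\dagger}XF^{*} = K\,\mathcal{F}^{-1}(X)$ one gets
\[
F^{\dagger}\big([FhF^{T}]\circ[FgF^{T}]\big)F^{*} = K^{-1}\,(h\ast g)_{\text{circ}}.
\]
Your delta test confirms this. With $h = g = e_{00}$ you get $FhF^{T} = FgF^{T} = J/K$, where $J$ is the all-ones matrix, and the left side equals $e_{00}/K$. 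So you should state the identity with the explicit factor $K$.

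Two of your side remarks need correcting.

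First, adopting the unitary normalization of $\mathcal{F}$ does not remove the factor. For the unitary DFT on $\mathbb{Z}_K^2$ the convolution theorem reads $(h\ast g)_{\text{circ}} = K\,\mathcal{F}^{-1}(\mathcal{F}(h)\circ\mathcal{F}(g))$. So $K$ is intrinsic once both transforms are realized by the unitary $F$; it can only be eliminated by giving up unitarity, e.g.\ using $\sqrt{K}F$ with inverses in place of adjoints.

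Second, the scalar is not harmless in the block-encoding setting. There $K = 2^n$, and the factor cannot be avoided: $h = g = J/K$ have unit spectral norm, while $(h\ast g)_{\text{circ}} = J$ has norm $K$. Consequently the circuit behind Thm.~\ref{thm:be_2d_conv} yields a $(K\alpha\beta,\ldots)$ block encoding of $A\ast B$, not an $(\alpha\beta,\ldots)$ one. That is an exponentially large subnormalization, with the corresponding cost in success probability.
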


Having defined the discrete 2D circular convolution, as well as the associated convolution theorem, a quantum algorithm for block encoding such convolutions is almost immediate from our previous constructions. Note also that if one wishes to take the convolution of $d$ total block encodings on $n$ qubits, our methods yield a quantum algorithm which uses only $\mathcal{O}(n\log{d})$ additional space, as it would for any combination of matrix and element-wise products of $\mathcal{O}(d)$ total block encodings.

\begin{theorem}[Block encoding the 2D convolution] \label{thm:be_2d_conv}
    Let $U_A$ and $U_B$ be $(\alpha, a, \varepsilon)$ and $(\beta, b, \delta)$ block encodings of square $n$-qubit operators $A, B$ respectively. Then we can construct a $(\alpha\beta, \max{(a, b) + 1, \alpha \delta + \beta \varepsilon + \varepsilon \delta})$ block encoding of the circular convolution of $A$ and $B$, denoted $(A \ast B)$ using a single query to each of $U_A$ and $U_B$, $\mathcal{O}(n^2\log{n})$ single and two-qubit gates, and $\mathcal{O}(\max(a,b))$ additional space.

    \begin{proof}
        Proof follows directly from constructing the quantum circuit achieving the block encoding of the product in Lem.~\ref{lem:dis_2d_conv}, which involves conjugating an element-wise product of two block encodings in the Fourier basis with a quantum Fourier transform (whose gate complexity goes as $\mathcal{O}(n^2 \log{n})$. The error propagation follows from the fact that the element-wise product is a principal sub-matrix of the Kronecker product, and that the quantum Fourier transform is unitary.
    \end{proof}
\end{theorem}

%%%%%%%%%%%%%%%%%%%%%%%%%%%%%%%%%%%%%%%%%%%%%
\subsection{Non-standard block matrix products} \label{subsec:app_nonstandard_prod}

\noindent It turns out the constructions of Sec.~\ref{sec:main_const} yield quantum algorithms for computing other non-standard matrix products and functions. In this section we define, discuss, and build quantum algorithms for realizing various block-matrix generalizations to the Kronecker and element-wise products. These products appear in statistical physics, the analysis of linear systems, and control theory \cite{ts_prod_72, kr_prod_68}, and have been subjected to extensive study both through applied \cite{tk_tr_prod_app_89, hc_block_kronecker_89} and foundational matrix analytic lenses \cite{mn_mat_comm_79, liu_ts_kr_prod_99}.

\begin{definition}[Tracy--Singh product \cite{ts_prod_72}] \label{def:ts_prod}
    Let $A \in \mathbb{C}^{m\times n}$ and $B \in \mathbb{C}^{p\times q}$ where $A$ can be decomposed into $(m_i \times n_j)$-sized blocks, and $B$ can be decomposed into $(p_k \times q_\ell)$-sized blocks. Then the \emph{Tracy--Singh product} (or TS product) of $A$ and $B$, often denoted $(A \boxtimes B)$ has the form\footnote{Here what is meant by the nested indices is that the $ij$-th sub-block of the $(mp \times nq)$ matrix is itself divided into blocks, the $k\ell$-th of which is the Kronecker product $A_{ij} \otimes B_{k\ell}$.}
        \begin{equation}
            (A \boxtimes B) \equiv
            ((A_{ij} \otimes B_{k\ell})_{k\ell})_{ij},
        \end{equation}
    which should be interpreted as the pair-wise \emph{Kronecker product} according to the partitions of the two input block matrices. Note the shape of the resulting matrix is $(mp\times nq)$.
\end{definition}

\begin{proposition}[Relation between the Kronecker and TS product; adapted from \cite{ts_prod_lin_op_25}] \label{prop:kron_ts_prod}
    Let the \emph{commutation matrix} $K_{m,n}$ have the form (written suggestively in braket notation)
        \begin{equation} \label{eq:commutation_matrix}
            K_{m,n} = \sum_{i,j}^{m, n} |i\rangle\langle j| \otimes |j\rangle\langle i|.
        \end{equation}
    Then the TS product $(A \boxtimes B)$ between square $A$ and $B$ of size $n^2$ and $m^2$ respectively (with square sub-blocks of dimension $n$ and $m$ respectively) is related to the Kronecker product $(A \otimes B)$ by the commutation matrix:
        \begin{equation} \label{eq:kron_ts_prod}
            A \boxtimes B = (I_{n} \otimes K_{m,n}\otimes I_m)(A \otimes B)(I_{n} \otimes K_{n,m}\otimes I_m).
        \end{equation}
\end{proposition}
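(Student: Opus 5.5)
The approach is to write both sides in braket (tensor-register) notation and to observe that the Tracy--Singh product differs from the Kronecker product only by the order of two tensor factors. The commutation matrices in (\ref{eq:kron_ts_prod}) are exactly the register swaps that fix this order.

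First, I fix the block decompositions as tensor decompositions. Since $A$ is $n^2\times n^2$ with $n\times n$ blocks, I write
\begin{equation*}
A=\sum_{i,j=0}^{n-1}|i\rangle\langle j|\otimes A_{ij}.
\end{equation*}
Here the first factor (dimension $n$) labels the outer block and $A_{ij}\in M_n$ acts on the inner factor. Similarly,
\begin{equation*}
B=\sum_{k,\ell=0}^{m-1}|k\rangle\langle \ell|\otimes B_{k\ell},
\end{equation*}
with $B_{k\ell}\in M_m$. The Kronecker product then reads
\begin{equation*}
A\otimes B=\sum_{i,j,k,\ell}|i\rangle\langle j|\otimes A_{ij}\otimes|k\rangle\langle\ell|\otimes B_{k\ell},
\end{equation*}
on registers of dimensions $(n,n,m,m)$ in that order. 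I will then unpack Def.~\ref{def:ts_prod}: the $ij$-th outer block is subdivided into blocks indexed by $k\ell$, and each such sub-block is $A_{ij}\otimes B_{k\ell}$. In register language this is
\begin{equation*}
A\boxtimes B=\sum_{i,j,k,\ell}|i\rangle\langle j|\otimes|k\rangle\langle\ell|\otimes A_{ij}\otimes B_{k\ell},
\end{equation*}
on registers of dimensions $(n,m,n,m)$. The two expressions therefore differ only by exchanging the second and third tensor factors.

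Second, I check the action of the commutation matrix. From (\ref{eq:commutation_matrix}), $K_{m,n}$ sends $|j\rangle|i\rangle\in\mathbb{C}^n\otimes\mathbb{C}^m$ to $|i\rangle|j\rangle\in\mathbb{C}^m\otimes\mathbb{C}^n$, so it is the swap isometry from $n\otimes m$ to $m\otimes n$. Likewise $K_{n,m}=K_{m,n}^{T}=K_{m,n}^{-1}$ is the reverse swap. The key identity to verify is
\begin{equation*}
K_{m,n}(X\otimes Y)K_{n,m}=Y\otimes X \quad\text{for } X\in M_n,\ Y\in M_m,
\end{equation*}
which follows by evaluating both sides on product basis vectors. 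Padding with identities, $I_n\otimes K_{m,n}\otimes I_m$ swaps the middle two factors of the row space, taking $(n,n,m,m)$ to $(n,m,n,m)$. Similarly, $I_n\otimes K_{n,m}\otimes I_m$ swaps the middle two factors of the column space in the opposite direction. Applying this term by term to the expansion of $A\otimes B$ sends $A_{ij}\otimes|k\rangle\langle\ell|$ to $|k\rangle\langle\ell|\otimes A_{ij}$, which gives exactly the expression for $A\boxtimes B$ above. Linearity finishes the argument.

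The main obstacle is bookkeeping rather than mathematics. Specifically, I must (a) pin down the nesting and ordering convention in Def.~\ref{def:ts_prod}, namely which index is outer and whether blocks are listed row-major, and (b) confirm that the subscripts $K_{m,n}$ on the left and $K_{n,m}$ on the right are oriented consistently with (\ref{eq:commutation_matrix}). I would check both with a small example, $n=m=2$, comparing a single entry $\langle i,k,p,q|\cdot|j,\ell,r,s\rangle$ on each side. If the orientation turned out reversed, the fix would be to swap the two subscripts; the structural argument is unchanged.
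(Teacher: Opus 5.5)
Your proposal is correct and takes essentially the paper's route. The paper gives no separate proof beyond citing \cite{ts_prod_lin_op_25}, but it relies on exactly your observation in the proof of Thm.~\ref{thm:ts_prod_be}: the Tracy--Singh product is the Kronecker product with the two middle registers swapped, $(A\boxtimes B)_{r_1 r_2 r_3 r_4} = S_{r_2 r_3}(A_{r_1 r_2}\otimes B_{r_3 r_4})S_{r_2 r_3}$. Your second paragraph already shows that $K_{m,n}$ maps $\mathbb{C}^n\otimes\mathbb{C}^m\to\mathbb{C}^m\otimes\mathbb{C}^n$, so the identity holds with the subscripts as written and the orientation hedge in your last paragraph is unnecessary.
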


Note also that the commutation matrix is named as such given the related identity $K_{m^2, n^2}(A\otimes B) K_{m^2, n^2} = (B\otimes A)$. I.e., the commutation matrix in (\ref{eq:commutation_matrix}), interpreted as a quantum gate, is just the swap operation (taking the dimensions of $A, B$ equal). Evidently it is also its own inverse, and the Kronecker and TS products are related by a simple \emph{block-wise swap}, which can easily be computed quantum mechanically as in Thm.~\ref{thm:ts_prod_be}.

\begin{definition}[Khatri--Rao product \cite{kr_prod_68}] \label{def:kr_prod}
    Let $A \in \mathbb{C}^{m\times n}$ and $B \in \mathbb{C}^{p\times q}$ where $A$ can be decomposed into $(m_i \times n_j)$-sized blocks, and $B$ can be decomposed into $(p_k \times q_\ell)$-sized blocks, where $i, k$ and $j, \ell$ range over the same sets respectively. Then the \emph{Khatri--Rao product} (or KR product) of $A$ and $B$, denoted $(A \odot B)$, has the form
        \begin{equation}
            (A \odot B) \equiv
            (A_{ij} \otimes B_{ij})_{ij},
        \end{equation}
    which should be interpreted as a block-wise (as in between blocks indexed by the same index pair $(i,j)$) Kronecker product between the two block matrices. Unlike the TS product (Def.~\ref{def:ts_prod}), Kronecker products between sub-blocks of $A$ and $B$ do not cover all pairs of index pairs $(ij, k\ell)$, but only one set of indices $(ij)$. The shape of the resulting matrix is $(m_i p_i) \times (n_j q_j)$, where sums over $i, j$ are implicitly taken \emph{inside} their encompassing parentheses.
\end{definition}

In the case that $A, B$ in the above products have only trivial (i.e., scalar) sub-blocks, the TS and KR products reduce to the usual Kronecker and element-wise products. Moreover, the TS product shares many desirable properties with the Kronecker product, including associativity, distribution over sums, and that $(A \boxtimes B)^{-1} = A^{-1}\boxtimes B^{-1}$. In what follows we make the correspondence between these pairs of products more concrete by giving quantum algorithms to compute the TS and KR products between block encoded matrices. For simplicity we work in a qubit-adapted setting, though standard methods in matrix padding (see discussion in \cite{gslw_19}) can be used to achieve these non-standard products between block matrices of arbitrary (conformable) shape and dimension.

\begin{example}[TS and KR products; minimal examples] \label{ex:ts_kr_prod}
    For the visual reader we include brief, explicit examples of the Tracy-Singh (Def.~\ref{def:ts_prod}) and Khatri-Rao (Def.~\ref{def:kr_prod}) products between $(2\times 2)$ block matrices. Note that defining these products in principles requires no additional assumptions on the block sizes beyond their mutual compatibility to constitute a rectangular $M_1, M_2$, defined below:
    \begin{equation}
        M_1 =
        \left[
        \begin{array}{c|c}
            A_{00} & A_{01}\\\hline
            A_{10} & A_{11}
        \end{array}
        \right],
        \quad\quad 
        M_2 =
        \left[
        \begin{array}{c|c}
            B_{00} & B_{01}\\\hline
            B_{10} & B_{11}
        \end{array}
        \right].
    \end{equation}
    In this case the $(4\times 4)$ block matrix TS product can be expressed as the following:
    \begin{equation}
        (M_1 \boxtimes M_2) \equiv 
        \left[
        \begin{array}{cc|cc}
            A_{00}\otimes B_{00} & A_{00}\otimes B_{01} & A_{01}\otimes B_{00} & A_{01}\otimes B_{01}\\
            A_{00}\otimes B_{10} & A_{00}\otimes B_{11} & A_{01}\otimes B_{10} & A_{01}\otimes B_{11}\\\hline
            A_{10}\otimes B_{00} & A_{10}\otimes B_{01} & A_{11}\otimes B_{00} & A_{11}\otimes B_{01}\\
            A_{10}\otimes B_{10} & A_{10}\otimes B_{11} & A_{11}\otimes B_{10} & A_{11}\otimes B_{11}
        \end{array}
        \right],
    \end{equation}
    where we see that the larger $2\times 2$ blocks (each containing four sub-blocks) are each indexed by the same block $A_{ij}$ entering into the left argument of a Kronecker product. Just as in the usual Kronecker product, we can see that the four corner blocks of this block matrix contain Kronecker products between sub-blocks of $M_1$ and $M_2$ with the same index. Permuting these blocks and post-selecting on the resulting top-left sub-matrix yields the KR product, summarized (dropping indices for now) below:
    \begin{align}
        (M_1 \odot M_2) &\equiv 
        (\langle 0 |\otimes I)
        P
        \left[
        \begin{array}{cc|cc}
            A_{00}\otimes B_{00} & \ast & \ast & A_{01}\otimes B_{01}\\
            \ast & \ast & \ast & \ast\\\hline
            \ast & \ast & \ast & \ast\\
            A_{10}\otimes B_{10} & \ast & \ast & A_{11}\otimes B_{11}
        \end{array}
        \right]
        P^\dagger
        (|0\rangle \otimes I)
        ,\\
        &= 
        \left[
        \begin{array}{c|c}
            A_{00}\otimes B_{00} & A_{01}\otimes B_{01}\\\hline
            A_{10}\otimes B_{10} & A_{11}\otimes B_{11}
        \end{array}
        \right].
    \end{align}
    From this it can be seen, at least at a moral level, that the quantum algorithms to compute these products involve applying our previously constructed algorithms to sub-registers of block-encoded matrices with block structure.
\end{example}

\begin{theorem}[Tracy--Singh product between block encodings] \label{thm:ts_prod_be}
    Let $U_A$ and $U_B$ be $(\alpha, a, 0)$ and $(\beta, b, 0)$ block encodings of $(n + m)$-qubit operators $A$ and $B$ respectively. Moreover, consider $A$ and  $B$ as each divided into $2^n$ square blocks of size $2^m$, and that matrix elements within these blocks are indexed by computational basis states for qubits over disjoint registers, e.g., $A_{r_1 r_2}$ where $r_1$ (labeling the blocks) has size $n$, while $r_2$ has size $m$. Then we can construct a unitary $V$ that is a $(\alpha\beta, a + b, 0)$-block encoding of the Tracy--Singh product $(A\boxtimes B)$ using one query to each of $U_A$ and $U_B$ and $\max{(m, n)}$ two-qubit swap gates. Here $(A\boxtimes B)$ is a square matrix of dimension $2^{2(n + m)}$.

    \begin{proof}
        The construction in this qubit setting follows from the identity in Prop.~\ref{prop:kron_ts_prod}, as the TS product is not only a block-wise analogue of the Kronecker product, but also its row- and column-permutation. I.e., ignoring the unitaries $U_A$ and $U_B$ for the moment, the TS product satisfies, modifying the notation of (\ref{eq:kron_ts_prod}):
        \begin{equation}
            (A\boxtimes B)_{r_1 r_2 r_3 r_4}
            =
            S_{r_2 r_3} (A_{r_1 r_2} \otimes B_{r_3 r_4}) S_{r_2 r_3}.
        \end{equation}
        As the Kronecker product of two block encoded operators is a simple construction (and indeed is a subroutine in our element-wise product circuit, using only the union of the auxiliary space required by each input), the only addition is that the conjugating qubit-wise swap gates on the $n$- and $m$-qubit registers $r_2$ and $r_3$ (here denoted $S$ with the registers swapped in the subscript). Note that here the swap is performed in-place between registers of different sizes, $n, m$, and thus requires $\max{(m, n)}$ two-qubit swap gates. Finally, as this product is just a permutation of the Kronecker product, error propagation is identical tho that case, yielding an $(\alpha\beta, a + b, \alpha\delta + \beta\varepsilon + \delta\varepsilon)$ block encoding of $(A \boxtimes B)$ in the case that the input block encodings have error $\varepsilon$ and $\delta$ respectively.
    \end{proof}
\end{theorem}

\begin{theorem}[Khatri--Rao product between block encodings] \label{thm:kr_prod_be}
    Let $U_A$ and $U_B$ as in Thm.~\ref{thm:ts_prod_be}. Then we can construct a unitary $V$ that is a $(\alpha\beta, a + b, 0)$-block encoding of the Khatri--Rao product $(A \odot B)$ using one query to each of $U_A$ and $U_B$ and $\max{(m, n)}$ swap gates. Here $(A \odot B)$ is a square matrix of dimension $2^{(n + m)}$ (the same as $A$ and $B$).

    \begin{proof}
        Proof follows from the observation that the KR product is a submatrix of the TS product analogously to how the element-wise product is a submatrix of the Kronecker product. I.e., we conjugate by the same permutation employed in our original element-wise construction, save here only on the block-labeling register of the first block encoding.
        \begin{equation}
            (A\odot B)_{r_3 r_4}
            =
            (\langle 0|_{r_1 r_2}\otimes I_{r_3 r_4})
            P_{r_1 r_3}
            (A \boxtimes B)_{r_1 r_2 r_3 r_4}
            P^{\dagger}_{r_1 r_3}
            (|0\rangle_{r_1 r_2}\otimes I_{r_3 r_4}).
        \end{equation}
        As in our construction of the TS product, error propagation here for block encodings of $A$ and $B$ with error $\varepsilon$ and $\delta$ respectively is exactly analogous to the standard element-wise product, i.e., the result is an $(\alpha\beta, a + b, \alpha\delta + \beta\varepsilon + \delta\varepsilon)$ block encoding of $(A \odot B)$.
    \end{proof}
\end{theorem}

%%%%%%%%%%%%%%%%%%%%%%%%%%%%%%%%%%%%%%%%%%%%%
\section{Discussion} \label{sec:discussion}

\noindent In this work we construct a family of quantum algorithms for computing element-wise functions of block encodings, whose space use scales logarithmically rather than linearly in the degree of the applied function---this scaling, matching that of existing methods for applying spectral maps to or taking linear combinations of block encodings, thus represents a new and practically-implementable subroutine in the block encoding toolkit. Moreover, we raise and rectify errors in previous constructions (Appx.~\ref{appx:lcu_nonstandard}), establishing a robust foundation for the general theory of \emph{quantum element-wise transforms} (QEWTs).

The core of our construction depends on the orchestration of both well-known and (to the authors' knowledge) new but simple subroutines for manipulating block encodings. These include (1) a known identity relating the element-wise product of matrices to a principal submatrix of their Kronecker product, (2) a known construction for exponentially reducing the auxiliary space required when \emph{matrix multiplying} many block encodings at once (the so-called \emph{compression gadget} of Low and Wiebe \cite{lw_comp_gadget_19, vg_reducing_space_25}), and (3) a method for cheaply duplicating blocks of block encodings \emph{which themselves have block structure} (which we call a \emph{swap-copy}, Lem.~\ref{lem:swap_copy}) \emph{as well as} a way to efficiently re-use the auxiliary space required by this swap-copy within recursively self-composed versions of our protocol, enabling space-efficient computation of high-degree element-wise products (our \emph{weaving lemma}, Lem.~\ref{lem:rec_swap_copy}). These, together with our corrected scheme for constructing LCU \textsc{select} unitaries (Lem.~\ref{lem:ew_select}), and application of various amplitude amplification\slash uniform singular value amplification techniques (Sec.~\ref{sec:alt_const}), compose our central contribution. Crucially, beyond merely introducing (1-3) above, we show that these techniques can be made to interact while maintaining both correctness and $\mathcal{O}(\log{d})$ auxiliary space.

Beyond introducing a class of improved quantum algorithms, we devote Sec.~\ref{sec:applications} to specific applications across machine learning inference, multidimensional signal processing, and the computation of generalized block-matrix products. These applications are partly direct improvements to existing constructions \cite{gychanar_had_prod_25}, as well as strict generalizations \cite{rr_nonlinear_23, rhgpr_acc_inf_25, mw_gqsp_24}, and novel quantum algorithms for existing, classically useful matrix\slash tensor functions \cite{ts_prod_72, kr_prod_68}. In addition to applications, however, a tandem aim of this work is the development of quantum algorithms for core numerical linear algebraic subroutines that can be \emph{efficiently and interpretably combined}, as well as modified to (1) more easily incorporate classical techniques and (2) admit the use of numerical tools for compilation\slash verification\slash optimization long common in classical computing.

%%%%%%%%%%%%%%%%%%%%%%%%%%%%%%%%%%%%%%%%%%%%%
\subsection{On space and query lower bounds}

\noindent While the primary focus of this work is showing large space improvements and explicit quantum circuits for computing element-wise functions of block encodings, it is worth investigating the ultimate performance limitations of any quantum algorithm for this task. As for many other aspects of this work, the element-wise product, given its strange character, does not immediately permit the known circuit depth, space, and query lower bounds for matrix products. Nevertheless, we can establish a few such bounds, and leave the remaining gap to future work.

\begin{enumerate}
    \item Most simply the element-wise product and matrix product coincide for diagonal block encodings, which establishes an $\Omega{(d)}$ query complexity lower bound, as well as (in the multivariable setting) an $\Omega{(\log{d})}$ \emph{space} lower bound (following from the space lower bound of exact compression gadgets shown in \cite{vg_reducing_space_25}).

    \item Secondly, while the linear-space protocol identifies the $d$-th element-wise product as a sub-matrix of the Kronecker product $A^{\otimes d}$, which can be implemented in constant depth given block encodings of $A$, the full circuit including permutations requires linear depth, as is expected from standard results in circuit depth lower bounds for Hamiltonian simulation and phase estimation.
\end{enumerate}

Consequently room for improvement to this work exists for the case of (1) element-wise functions of a single block encoding, where the known gap is logarithmic in the degree of the applied function, and (2) depth improvements under special structural assumptions on the input block encoding, where the gap is linear in the degree of the applied function. Moreover, while not investigated in this work, we expect there exist hardness assumptions for certain element-wise transforms based on the accrued subnormalization factor, e.g., in the case where one uniformly amplifies or else applies a thresholding function element-wise.

%%%%%%%%%%%%%%%%%%%%%%%%%%%%%%%%%%%%%%%%%%%%%
\subsection{Open directions} \label{subsec:open_dir}

\noindent Our work suggests numerous near-term extensions, including the use of randomized\slash approximative techniques to further improve performance under architectural constraints, as is widely applied to other block encoding algorithms \cite{martyn_rall_halving_24, mrclc_parallel_qsp_24, rcc_modular_qsp_23}. We expect that additional knowledge of the input (e.g., low rank or well-conditioned) can also be leveraged for performance, and that variants of our the swap-copy and weaving lemma can yield independently interesting methods for manipulating block encodings with block structure or known eigensubspaces. Here we briefly enumerate a few concrete directions and associated questions related to this work.
 
At a high level these are clustered around (a) direct performance improvements for the subroutines presented in the main text under reasonable assumptions on the input, (b) the extension of our methods to other common but under-studied (in a quantum computing context) matrix\slash tensor functions, (c) fundamental questions on resource lower bounds, and finally (d) the proposal of concrete applications with reasonable instance sizes.

\begin{enumerate}[label=(\alph*)]
    \item As discussed in the main text, various assumptions on the input block encoding (e.g., the diagonal case) can change the space cost of implementing QEWTs or their approximate versions. Promising assumptions include the near-diagonal setting (or their Fourier-space equivalents of matrices with displacement-structure, including Toeplitz, Hankel, and their block-variants). Moreover, there exist matrices whose block diagonalization (after which a swap-copy can be applied) can be accomplished in other ways than our masking procedure (e.g., block-triangular matrices in \cite{lw_mat_eq_25} in the context of solving linear matrix equations). How can QEWTs be combined with other block encoding algorithms using small amounts of additional space to achieve non-standard functions (e.g., Jordan block transformations \cite{ls_qep_26, gls_qet_arb_26})?
    
    \item While our construction works given black-box block encodings, there exist assumptions stronger than block encodings (e.g., sparse element-wise oracles) under which computing QEWTs becomes significantly easier. Do there exist less trivial access assumptions, stronger than black box access, which permit even more space-efficient QEWTs? Are any of these \emph{de-quantizable} by existing matrix sketching methods for block encoding algorithms? \cite{chia_low_rank_22}
    
    \item While we can show space and query lower bounds for generic multivariable element-wise transforms (e.g., by reducing to the diagonal case where matrix multiplication and element-wise products coincide, which can require $\mathcal{O}(\log{d})$ auxiliary space \cite{vg_reducing_space_25}), this argument \emph{does not} apply to an element-wise function of a \emph{single} block encoding. How can we show less trivial space and query lower bounds in restricted settings?

    \item While the element-wise product appears ubiquitously in machine learning \cite{learning_had_prod_25}, control theory \cite{js_rga_86}, the study of the stability of dynamical systems, and classical signal processing \cite{hj_topics_91}, this work does not study the interaction between common conditioning\slash rank assumptions for these problems and the subnormalizations\slash success probabilities of the QEWT. Under what assumptions on the input block encoding should we expect, e.g., the polynomial advantages in \cite{gychanar_had_prod_25} to be preserved?
    
\end{enumerate}

It is the authors' aim that the techniques presented in this work provide not only direct utility but basic inspiration for building quantum algorithms for common matrix functions\slash tensorial manipulations---there is a huge variety of well-developed classical tools and intuition for such matrix functions, and a key component for honest comparison between classical and quantum algorithms is the ability to incorporate structural information about the input data (as we constantly leverage in the classical setting) to improve quantum algorithmic performance.

%%%%%%%%%%%%%%%%%%%%%%%%%%%%%%%%%%%%%%%%%%%%%
\section{Acknowledgments} \label{sec:ack}

\noindent The authors are grateful to John Martyn, Isaac Chuang, and Mark Wilde for discussions on the topic of this work. ZMR acknowledges funding from the Japan Society for the Promotion of Science (JSPS) Postdoctoral Fellowship for Research in Japan (KAKENHI 24KF0136). RS was supported by the U.S. Department of Energy, Office of Science, Accelerated Research in Quantum Computing Centers, Quantum Utility through Advanced Computational Quantum Algorithms, grant no. DE-SC0025572. Part of this work was done while ZMR was visiting the Simons Institute.

\textbf{On LLM use:} The planning, main results and constructions, composition, and reference compilation for this document were completed \emph{without} the use of a large language model (LLM). In the final stages of this work's composition LLMs were consulted asynchronously and without edit permissions to identify typos, obvious errors, redundancies, or notational deficiencies; any suggestions were reviewed and implemented serially and by-hand if correct.

%%%%%%%%%%%%%%%%%%%%%%%%%%%%%%%%%%%%%%%%%%%%%
%%%%%%%%%%%%%%%%%%%%%%%%%%%%%%%%%%%%%%%%%%%%%
%%%%%%%%%%%%%%%%%%%%%%%%%%%%%%%%%%%%%%%%%%%%%
\bibliography{main}

@article{lyc_optimal_pulses_14,
    title={Optimal arbitrarily accurate composite pulse sequences},
    author={Low, Guang Hao and Yoder, Theodore J and Chuang, Isaac L},
    journal={Phys. Rev. A},
    volume={89},
    number={2},
    pages={022341},
    year={2014},
    publisher={APS}
}

@article{lyc_equiangular_16,
    title = {Methodology of Resonant Equiangular Composite Quantum Gates},
    author = {G. H. Low and T. J. Yoder and I. L. Chuang},
    journal = {Phys. Rev. X},
    volume = {6},
    issue = {4},
    pages = {041067},
    numpages = {13},
    year = {2016},
    publisher = {American Physical Society},
    doi = {10.1103/PhysRevX.6.041067},
    url = {https://link.aps.org/doi/10.1103/PhysRevX.6.041067}
}

@article{lc_ham_sim_17,
    title = {Optimal {H}amiltonian Simulation by Quantum Signal Processing},
    author = {G. H. Low and I. L. Chuang},
    journal = {Phys. Rev. Lett.},
    volume = {118},
    issue = {1},
    pages = {010501},
    numpages = {5},
    year = {2017},
    publisher = {American Physical Society},
    doi = {10.1103/PhysRevLett.118.010501},
    url = {https://link.aps.org/doi/10.1103/PhysRevLett.118.010501}
}

@article{lc_qubitization_19,
    title={Hamiltonian Simulation by Qubitization},
    volume={3},
    doi={10.22331/q-2019-07-12-163},
    journal={Quantum},
    author={G. H. Low and I. L. Chuang},
    year={2019},
    pages={163},
    url={http://dx.doi.org/10.22331/q-2019-07-12-163}
}

@article{gslw_19,
    title={Quantum singular value transformation and beyond: exponential improvements for quantum matrix arithmetics},
    ISBN={9781450367059},
    url={http://dx.doi.org/10.1145/3313276.3316366},
    DOI={10.1145/3313276.3316366},
    journal={Proceedings of the 51st Annual ACM SIGACT Symposium on Theory of Computing (STOC)},
    publisher={ACM},
    author={Gilyén, András and Su, Yuan and Low, Guang Hao and Wiebe, Nathan},
    year={2019}
}

@article{mw_gqsp_24,
    title = {Generalized Quantum Signal Processing},
    author = {Motlagh, Danial and Wiebe, Nathan},
    journal = {PRX Quantum},
    volume = {5},
    issue = {2},
    pages = {020368},
    numpages = {16},
    year = {2024},
    month = {Jun},
    publisher = {American Physical Society},
    doi = {10.1103/PRXQuantum.5.020368},
    url = {https://link.aps.org/doi/10.1103/PRXQuantum.5.020368}
}

@book{nc_textbook_11,
    author = {Nielsen, Michael A. and Chuang, Isaac L.},
    title = {Quantum {C}omputation and {Q}uantum {I}nformation},
    year = {2011},
    isbn = {1107002176},
    publisher = {Cambridge University Press},
    address = {USA},
    edition = {10th}
}

@article{lw_mat_eq_25,
    title={Quantum algorithm for linear matrix equations}, 
    author={Rolando D. Somma and Guang Hao Low and Dominic W. Berry and Ryan Babbush},
    year={2025},
    journal={arXiv preprint, arXiv:2508.02822},
    url={https://arxiv.org/abs/2508.02822}, 
}

@book{tb_num_alg_22,
    author = {Trefethen, Lloyd N. and Bau, David},
    title = {Numerical Linear Algebra, Twenty-fifth Anniversary Edition},
    publisher = {Society for Industrial and Applied Mathematics},
    year = {2022},
    doi = {10.1137/1.9781611977165},
    address = {Philadelphia, PA},
    edition   = {},
    URL = {https://epubs.siam.org/doi/abs/10.1137/1.9781611977165},
    eprint = {https://epubs.siam.org/doi/pdf/10.1137/1.9781611977165}
}

@article{ts_prod_lin_op_25,
    author = {Fabienne Chouraqui},
    title = {When the {Tracy-Singh} product of matrices represents a certain operation on linear operators},
    journal = {Communications in Algebra},
    volume = {53},
    number = {1},
    pages = {1--10},
    year = {2025},
    publisher = {Taylor \& Francis},
    doi = {10.1080/00927872.2024.2371560},
    URL = {https://doi.org/10.1080/00927872.2024.2371560},
    eprint = {https://doi.org/10.1080/00927872.2024.2371560}
}

@book{tref_approx_19,
    author = {Trefethen, Lloyd N.},
    title = {Approximation Theory and Approximation Practice, Extended Edition},
    publisher = {Society for Industrial and Applied Mathematics},
    year = {2019},
    doi = {10.1137/1.9781611975949},
    address = {Philadelphia, PA},
    edition   = {},
    URL = {https://epubs.siam.org/doi/abs/10.1137/1.9781611975949},
    eprint = {https://epubs.siam.org/doi/pdf/10.1137/1.9781611975949}
}

@article{lw_comp_gadget_19,
    title={Hamiltonian Simulation in the Interaction Picture}, 
    author={Guang Hao Low and Nathan Wiebe},
    year={2019},
    journal={arXiv preprint, arXiv:1805.00675},
    url={https://arxiv.org/abs/1805.00675}, 
}

@article{gmkf_nonlinear_24,
    title = {Nonlinear transformation of complex amplitudes via quantum singular value transformation},
    author = {Guo, Naixu and Mitarai, Kosuke and Fujii, Keisuke},
    journal = {Phys. Rev. Res.},
    volume = {6},
    issue = {4},
    pages = {043227},
    numpages = {9},
    year = {2024},
    month = {Dec},
    publisher = {American Physical Society},
    doi = {10.1103/PhysRevResearch.6.043227},
    url = {https://link.aps.org/doi/10.1103/PhysRevResearch.6.043227}
}

@article{ls_qep_26,
    author = {Low, Guang Hao and Su, Yuan},
    title = {Quantum Eigenvalue Processing},
    journal = {SIAM Journal on Computing},
    volume = {55},
    number = {1},
    pages = {135-215},
    year = {2026},
    doi = {10.1137/24M1689363},
    URL = {https://doi.org/10.1137/24M1689363},
    eprint = {https://doi.org/10.1137/24M1689363}
}

@article{fast_spec_amp_25,
    title = {Fast Quantum Simulation of Electronic Structure by Spectral Amplification},
    author = {Low, Guang Hao and King, Robbie and Berry, Dominic W. and Han, Qiushi and DePrince, A. Eugene and White, Alec F. and Babbush, Ryan and Somma, Rolando D. and Rubin, Nicholas C.},
    journal = {Phys. Rev. X},
    volume = {15},
    issue = {4},
    pages = {041016},
    numpages = {41},
    year = {2025},
    month = {Oct},
    publisher = {American Physical Society},
    doi = {10.1103/pb2g-j9cw},
    url = {https://link.aps.org/doi/10.1103/pb2g-j9cw}
}

@article{zs_low_energy_24,
    doi = {10.22331/q-2024-08-27-1449},
    url = {https://doi.org/10.22331/q-2024-08-27-1449},
    title = {Hamiltonian simulation for low-energy states with optimal time dependence},
    author = {Zlokapa, Alexander and Somma, Rolando D.},
    journal = {{Quantum}},
    issn = {2521-327X},
    publisher = {{Verein zur F{\"{o}}rderung des Open Access Publizierens in den Quantenwissenschaften}},
    volume = {8},
    pages = {1449},
    month = aug,
    year = {2024}
}

@article{nkl_dense_rank_be_22,
    doi = {10.22331/q-2022-12-13-876},
    url = {https://doi.org/10.22331/q-2022-12-13-876},
    title = {Block-encoding dense and full-rank kernels using hierarchical matrices: applications in quantum numerical linear algebra},
    author = {Nguyen, Quynh T. and Kiani, Bobak T. and Lloyd, Seth},
    journal = {{Quantum}},
    issn = {2521-327X},
    publisher = {{Verein zur F{\"{o}}rderung des Open Access Publizierens in den Quantenwissenschaften}},
    volume = {6},
    pages = {876},
    month = dec,
    year = {2022}
}

@article{clvy_q_circ_be_24,
    author = {Camps, Daan and Lin, Lin and Van Beeumen, Roel and Yang, Chao},
    title = {Explicit Quantum Circuits for Block Encodings of Certain Sparse Matrices},
    journal = {SIAM Journal on Matrix Analysis and Applications},
    volume = {45},
    number = {1},
    pages = {801-827},
    year = {2024},
    doi = {10.1137/22M1484298},
    URL = {https://doi.org/10.1137/22M1484298},
    eprint = {https://doi.org/10.1137/22M1484298}
}

@inproceedings{cv_fable_be_22,
    title={Fable: Fast approximate quantum circuits for block-encodings},
    author={Camps, Daan and Van Beeumen, Roel},
    booktitle={2022 IEEE International Conference on Quantum Computing and Engineering (QCE)},
    pages={104--113},
    year={2022},
    organization={IEEE}
}

@article{scc_be_struct_24,
    title={Block-encoding structured matrices for data input in quantum computing},
    author={S{\"u}nderhauf, Christoph and Campbell, Earl and Camps, Joan},
    journal={Quantum},
    volume={8},
    pages={1226},
    year={2024},
    publisher={Verein zur F{\"o}rderung des Open Access Publizierens in den Quantenwissenschaften}
}

@article{yuan_cobble_25,
    title={Cobble: Compiling Block Encodings for Quantum Computational Linear Algebra}, 
    author={Charles Yuan},
    year={2025},
    journal={arXiv preprint, arXiv:2511.01736},
    url={https://arxiv.org/abs/2511.01736}, 
}

@inproceedings{bck_ham_lcu_15,
    title={Hamiltonian simulation with nearly optimal dependence on all parameters},
    author={Berry, Dominic W. and Childs, Andrew M. and Kothari, Robin},
    booktitle={{2015 IEEE 56th Annual Symposium on Foundations of Computer Science (FOCS)}},
    pages={792--809},
    year={2015},
    organization={IEEE}
}

@article{ks_mat_arith_26,
    title={Quantum matrix arithmetics with {Hamiltonian} evolution}, 
    author={Christopher Kang and Yuan Su},
    year={2026},
    journal={arXiv preprint, arXiv:2510.06316},
    url={https://arxiv.org/abs/2510.06316}, 
}

@misc{lw_lecture_notes_26,
    author = "Lin Lin and Nathan Wiebe",
    title  = "Quantum Algorithms for Scientific Computation",
    note   = "url: \url{https://math.berkeley.edu/~linlin/qasc/live_notes_0429.pdf}",
    year = 2026,
}

@inproceedings{bccks_ham_lcu_14, 
    series={STOC ’14},
    title={{Exponential improvement in precision for simulating sparse Hamiltonians}},
    url={http://dx.doi.org/10.1145/2591796.2591854},
    DOI={10.1145/2591796.2591854},
    booktitle={Proceedings of the forty-sixth annual ACM symposium on Theory of computing (STOC)},
    publisher={ACM},
    author={Berry, Dominic W. and Childs, Andrew M. and Cleve, Richard and Kothari, Robin and Somma, Rolando D.},
    year={2014},
    month=May, 
    pages={283–292},
    collection={STOC ’14} 
}

@article{gychanar_had_prod_25,
    title={Quantum Transformer: Accelerating model inference via quantum linear algebra}, 
    author={Naixu Guo and Zhan Yu and Matthew Choi and Yizhan Han and Aman Agrawal and Kouhei Nakaji and Alán Aspuru-Guzik and Patrick Rebentrost},
    year={2025},
    journal={arXiv preprint, arXiv:2402.16714},
    url={https://arxiv.org/abs/2402.16714}, 
}

@article{rhgpr_acc_inf_25,
    title={Accelerating Inference for Multilayer Neural Networks with Quantum Computers}, 
    author={Arthur G. Rattew and Po-Wei Huang and Naixu Guo and Lirandë Pira and Patrick Rebentrost},
    year={2025},
    journal={arXiv preprint, arXiv:2510.07195},
    url={https://arxiv.org/abs/2510.07195}, 
}

@article{rr_nonlinear_23,
    title={Non-Linear Transformations of Quantum Amplitudes: Exponential Improvement, Generalization, and Applications}, 
    author={Arthur G. Rattew and Patrick Rebentrost},
    year={2023},
    journal={arXiv preprint, arXiv:2309.09839},
    url={https://arxiv.org/abs/2309.09839}, 
}

@article{hcss_nonlinear_23,
    title = {Nonlinear transformations in quantum computation},
    author = {Holmes, Zo\"e and Coble, Nolan J. and Sornborger, Andrew T. and Subaşı},
    journal = {Phys. Rev. Res.},
    volume = {5},
    issue = {1},
    pages = {013105},
    numpages = {20},
    year = {2023},
    month = {Feb},
    publisher = {American Physical Society},
    doi = {10.1103/PhysRevResearch.5.013105},
    url = {https://link.aps.org/doi/10.1103/PhysRevResearch.5.013105}
}

@article{mrtc_unification_21,
    title={Grand unification of quantum algorithms},
    author={Martyn, John M and Rossi, Zane M and Tan, Andrew K and Chuang, Isaac L},
    journal={PRX Quantum},
    volume={2},
    number={4},
    pages={040203},
    year={2021},
    publisher={APS}
}

@book{hj_matrix_analysis_85, 
    place={Cambridge}, 
    title={Matrix Analysis}, 
    publisher={Cambridge University Press}, 
    author={Horn, Roger A. and Johnson, Charles R.}, 
    year={1985}
}

@book{hj_topics_91, 
    place={Cambridge}, 
    title={Topics in Matrix Analysis}, 
    publisher={Cambridge University Press}, 
    author={Horn, Roger A. and Johnson, Charles R.}, 
    year={1991}
}

@article{cks_linear_17,
   title={Quantum Algorithm for Systems of Linear Equations with Exponentially Improved Dependence on Precision},
   volume={46},
   ISSN={1095-7111},
   url={http://dx.doi.org/10.1137/16M1087072},
   DOI={10.1137/16m1087072},
   number={6},
   journal={SIAM Journal on Computing},
   publisher={Society for Industrial & Applied Mathematics (SIAM)},
   author={Childs, Andrew M. and Kothari, Robin and Somma, Rolando D.},
   year={2017},
   month={Jan}, 
   pages={1920–1950} 
}

@inproceedings{cs_qsvt_tang_tian_23,
    author = {Ewin Tang and Kevin Tian},
    title = {A {CS} guide to the quantum singular value transformation},
    booktitle = {2024 Symposium on Simplicity in Algorithms (SOSA)},
    pages = {121-143},
    doi = {10.1137/1.9781611977936.13},
    URL = {https://epubs.siam.org/doi/abs/10.1137/1.9781611977936.13},
    eprint = {https://epubs.siam.org/doi/pdf/10.1137/1.9781611977936.13},
    year={2024},
    publisher={},
}

@article{hc_block_kronecker_89,
    author = {Hyland, David C. and Collins, Jr., Emmanuel G.},
    title = {Block {Kronecker} Products and Block Norm Matrices in Large-Scale Systems Analysis},
    journal = {SIAM J. Matrix Anal. Appl.},
    volume = {10},
    number = {1},
    pages = {18-29},
    year = {1989},
    doi = {10.1137/0610002},
    URL = {https://doi.org/10.1137/0610002},
    eprint = {https://doi.org/10.1137/0610002}
}

@article{liu_ts_kr_prod_99,
    title = {{Matrix results on the Khatri-Rao and Tracy-Singh products}},
    journal = {Linear Algebra Appl.},
    volume = {289},
    number = {1},
    pages = {267-277},
    year = {1999},
    issn = {0024-3795},
    doi = {https://doi.org/10.1016/S0024-3795(98)10209-4},
    url = {https://www.sciencedirect.com/science/article/pii/S0024379598102094},
    author = {Shuangzhe Liu},
}

@article{kr_prod_68,
    ISSN = {0581572X},
    URL = {http://www.jstor.org/stable/25049527},
    author = {C. G. Khatri and C. Radhakrishna Rao},
    journal = {Sankhyā Ser. A},
    number = {2},
    pages = {167--180},
    publisher = {Springer},
    title = {Solutions to Some Functional Equations and Their Applications to Characterization of Probability Distributions},
    volume = {30},
    year = {1968}
}

@article{tk_tr_prod_app_89,
    ISSN = {03195724},
    URL = {http://www.jstor.org/stable/3314768},
    author = {Derrick S. Tracy and Kankanam G. Jinadasa},
    journal = {Can. J. Stat.},
    number = {1},
    pages = {107--120},
    publisher = {Statistical Society of Canada, Wiley},
    title = {Partitioned {Kronecker} Products of Matrices and Applications},
    volume = {17},
    year = {1989}
}

@article{chia_low_rank_22,
    author = {Chia, Nai-Hui and Gily\'{e}n, Andr\'{a}s Pal and Li, Tongyang and Lin, Han-Hsuan and Tang, Ewin and Wang, Chunhao},
    title = {Sampling-based Sublinear Low-rank Matrix Arithmetic Framework for Dequantizing Quantum Machine Learning},
    year = {2022},
    issue_date = {October 2022},
    publisher = {Association for Computing Machinery},
    address = {New York, NY, USA},
    volume = {69},
    number = {5},
    issn = {0004-5411},
    url = {https://doi.org/10.1145/3549524},
    doi = {10.1145/3549524},
    journal = {J. ACM},
    month = oct,
    articleno = {33},
    numpages = {72}
}

@article{mn_mat_comm_79,
    ISSN = {00905364, 21688966},
    URL = {http://www.jstor.org/stable/2958818},
    author = {Jan R. Magnus and H. Neudecker},
    journal = {Ann. Stat.},
    number = {2},
    pages = {381--394},
    publisher = {Institute of Mathematical Statistics},
    title = {The Commutation Matrix: Some Properties and Applications},
    volume = {7},
    year = {1979}
}

@article{flt_time_marching_23,
    doi = {10.22331/q-2023-03-20-955},
    url = {https://doi.org/10.22331/q-2023-03-20-955},
    title = {Time-marching based quantum solvers for time-dependent linear differential equations},
    author = {Fang, Di and Lin, Lin and Tong, Yu},
    journal = {{Quantum}},
    issn = {2521-327X},
    publisher = {{Verein zur F{\"{o}}rderung des Open Access Publizierens in den Quantenwissenschaften}},
    volume = {7},
    pages = {955},
    month = mar,
    year = {2023}
}

@article{gls_qet_arb_26,
    title={Quantum Eigenvalue Transformations for Arbitrary Matrices}, 
    author={Xabier Gutiérrez and Lorenzo Laneve and Mikel Sanz},
    year={2026},
    journal={arXiv preprint, arXiv:2604.19688},
    url={https://arxiv.org/abs/2604.19688}, 
}

@article{tensor_decomp_09,
    author = {Kolda, Tamara G. and Bader, Brett W.},
    title = {Tensor Decompositions and Applications},
    journal = {SIAM Review},
    volume = {51},
    number = {3},
    pages = {455-500},
    year = {2009},
    doi = {10.1137/07070111X},
    URL = {https://doi.org/10.1137/07070111X},
    eprint = {https://doi.org/10.1137/07070111X},
}

@article{hl_tensor_np_13,
    author = {Hillar, Christopher J. and Lim, Lek-Heng},
    title = {{Most Tensor Problems Are NP-Hard}},
    year = {2013},
    issue_date = {November 2013},
    publisher = {ACM},
    address = {New York, NY, USA},
    volume = {60},
    number = {6},
    issn = {0004-5411},
    url = {https://doi.org/10.1145/2512329},
    doi = {10.1145/2512329},
    journal = {J. ACM},
    month = nov,
    articleno = {45},
    numpages = {39},
}

@article{cdghs_finding_qsp_angles_20,
    title={Finding Angles for Quantum Signal Processing with Machine Precision},
    author={Rui Chao and Dawei Ding and Andras Gilyen and Cupjin Huang and Mario Szegedy},
    year={2020},
    journal={arXiv preprint, arXiv:2003.02831},
}

@article{dlnw_robust_iter_24,
    author = {Dong, Yulong and Lin, Lin and Ni, Hongkang and Wang, Jiasu},
    title = {Robust Iterative Method for Symmetric Quantum Signal Processing in All Parameter Regimes},
    journal = {SIAM J. Sci. Comput.},
    volume = {46},
    number = {5},
    pages = {A2951-A2971},
    year = {2024},
    doi = {10.1137/23M1598192},
    URL = {https://doi.org/10.1137/23M1598192},
    eprint = {https://doi.org/10.1137/23M1598192}
}

@article{wdl_sym_qsp_energy_22,
    doi = {10.22331/q-2022-11-03-850},
    url = {https://doi.org/10.22331/q-2022-11-03-850},
    title = {On the energy landscape of symmetric quantum signal processing},
    author = {Wang, Jiasu and Dong, Yulong and Lin, Lin},
    journal = {{Quantum}},
    issn = {2521-327X},
    publisher = {{Verein zur F{\"{o}}rderung des Open Access Publizierens in den Quantenwissenschaften}},
    volume = {6},
    pages = {850},
    month = {Nov},
    year = {2022}
}

@article{haah_decomposition_19,
  doi = {10.22331/q-2019-10-07-190},
  url = {https://doi.org/10.22331/q-2019-10-07-190},
  title = {Product Decomposition of Periodic Functions in Quantum Signal Processing},
  author = {Haah, Jeongwan},
  journal = {{Quantum}},
  issn = {2521-327X},
  publisher = {{Verein zur F{\"{o}}rderung des Open Access Publizierens in den Quantenwissenschaften}},
  volume = {3},
  pages = {190},
  month = oct,
  year = {2019}
}

@article{dmwl_efficient_phases_21,
  title={Efficient phase-factor evaluation in quantum signal processing},
  volume={103},
  ISSN={2469-9934},
  url={http://dx.doi.org/10.1103/PhysRevA.103.042419},
  DOI={10.1103/physreva.103.042419},
  number={4},
  journal={Phys. Rev. A},
  publisher={American Physical Society (APS)},
  author={Dong, Yulong and Meng, Xiang and Whaley, K. Birgitta and Lin, Lin},
  year={2021},
  month=apr
}

@article{ts_prod_72,
    author = {Tracy, Derrick S. and Singh, Rana P.},
    title = {A new matrix product and its applications in partitioned matrix differentiation},
    journal = {Statistica Neerlandica},
    volume = {26},
    number = {4},
    pages = {143-157},
    doi = {https://doi.org/10.1111/j.1467-9574.1972.tb00199.x},
    url = {https://onlinelibrary.wiley.com/doi/abs/10.1111/j.1467-9574.1972.tb00199.x},
    eprint = {https://onlinelibrary.wiley.com/doi/pdf/10.1111/j.1467-9574.1972.tb00199.x},
    year = {1972}
}

@article{zzrf_basic_q_lin_alg_21,
    author = {L. Zhao and Z. Zhao and P. Rebentrost and J. Fitzsimons},
    title = {Compiling basic linear algebra subroutines for quantum computers},
    journal = {Quantum Mach. Intell.},
    year = {2021},
    volume = {3},
    number = {21},
    doi={https://doi.org/10.1007/s42484-021-00048-8}
}

@book{gv_mat_computation_13,
    author = {Golub, Gene H. and Van Loan, Charles F.},
    title = {Matrix Computations},
    publisher = {Johns Hopkins University Press},
    year = {2013},
    doi = {10.1137/1.9781421407944},
    address = {Philadelphia, PA},
    URL = {https://epubs.siam.org/doi/abs/10.1137/1.9781421407944},
    eprint = {https://epubs.siam.org/doi/pdf/10.1137/1.9781421407944}
}

@article{vg_reducing_space_25,
    title={Methods for Reducing Ancilla-Overhead in Block Encodings}, 
    author={Francisca Vasconcelos and András Gilyén},
    year={2025},
    journal={arXiv preprint, arXiv:2507.07900},
    url={https://arxiv.org/abs/2507.07900}, 
}

@article{mrclc_parallel_qsp_24,
    doi = {10.22331/q-2025-08-27-1834},
    url = {https://doi.org/10.22331/q-2025-08-27-1834},
    title = {Parallel Quantum Signal Processing Via Polynomial Factorization},
    author = {Martyn, John M. and Rossi, Zane M. and Cheng, Kevin Z. and Liu, Yuan and Chuang, Isaac L.},
    journal = {{Quantum}},
    issn = {2521-327X},
    publisher = {{Verein zur F{\"{o}}rderung des Open Access Publizierens in den Quantenwissenschaften}},
    volume = {9},
    pages = {1834},
    year = {2025}  
}

@article{cw_lcu_12,
    author = {Childs, Andrew M. and Wiebe, Nathan},
    title = {Hamiltonian Simulation Using Linear Combinations of Unitary Operations},
    year = {2012},
    publisher = {Rinton Press, Incorporated},
    address = {Paramus, NJ},
    journal = {Quantum Info. Comput.},
    volume = {12},
    number = {11–12},
    issn = {1533-7146},
    month = {Nov},
    pages = {901–924},
    numpages = {24},
}

@article{styan_had_prod_73,
    title = {Hadamard products and multivariate statistical analysis},
    journal = {Linear Algebra and its Applications},
    volume = {6},
    pages = {217-240},
    year = {1973},
    issn = {0024-3795},
    doi = {https://doi.org/10.1016/0024-3795(73)90023-2},
    url = {https://www.sciencedirect.com/science/article/pii/0024379573900232},
    author = {George P.H. Styan},
}

@book{mn_had_prod_textbook_99,
    author = {Magnus, Jan R. and Neudecker, Heinz},
    edition = {Second},
    isbn = {0471986321 9780471986324 047198633X 9780471986331},
    publisher = {John Wiley},
    title = {Matrix Differential Calculus with Applications in Statistics and Econometrics},
    year={1999},
}

@article{martyn_rall_halving_24,
    title={Halving the Cost of Quantum Algorithms with Randomization}, 
    author={John M. Martyn and Patrick Rall},
    year={2025},
    journal={npj Quantum Inf.},
    volume={11},
    number={47},
}

@article{rc_m_qsp_22,
    doi = {10.22331/q-2022-09-20-811},
    url = {https://doi.org/10.22331/q-2022-09-20-811},
    title = {Multivariable quantum signal processing ({M}-{QSP}): prophecies of the two-headed oracle},
    author = {Rossi, Zane M. and Chuang, Isaac L.},
    journal = {{Quantum}},
    issn = {2521-327X},
    publisher = {{Verein zur F{\"{o}}rderung des Open Access Publizierens in den Quantenwissenschaften}},
    volume = {6},
    pages = {811},
    year = {2022}
}

@article{rcc_modular_qsp_23,
    doi = {10.22331/q-2025-06-18-1776},
    url = {https://doi.org/10.22331/q-2025-06-18-1776},
    title = {Modular quantum signal processing in many variables},
    author = {Rossi, Zane M. and Ceroni, Jack L. and Chuang, Isaac L.},
    journal = {{Quantum}},
    issn = {2521-327X},
    publisher = {{Verein zur F{\"{o}}rderung des Open Access Publizierens in den Quantenwissenschaften}},
    volume = {9},
    pages = {1776},
    month = jun,
    year = {2025}
}

@article{rbmc_su_11_qsp,
    title={Quantum signal processing with continuous variables}, 
    author={Zane M. Rossi and Victor M. Bastidas and William J. Munro and Isaac L. Chuang},
    year={2023},
    journal={arXiv preprint, arXiv:2304.14383},
}

@article{amt_23,
    title={Quantum signal processing and nonlinear {Fourier} analysis}, 
    author={Michel Alexis and Gevorg Mnatsakanyan and Christoph Thiele},
    year={2024},
    journal={Rev Mat Complut},
    volume={37},
    pages={655-694},
    doi={https://doi.org/10.1007/s13163-024-00494-5},
}

@article{szego_nlfa_qsp_24,
    author = {Alexis, Michel and Lin, Lin and Mnatsakanyan, Gevorg and Thiele, Christoph and Wang, Jiasu},
    title = {Infinite quantum signal processing for arbitrary {S}zegő functions},
    journal = {Commun. Pure Appl. Math.},
    volume = {79},
    number = {1},
    pages = {123-174},
    doi = {https://doi.org/10.1002/cpa.70007},
    url = {https://onlinelibrary.wiley.com/doi/abs/10.1002/cpa.70007},
    eprint = {https://onlinelibrary.wiley.com/doi/pdf/10.1002/cpa.70007},
    year = {2026},
}

@article{bw_analytical_phase_26,
    title={Analytical Angle-Finding and Series Expansions for Quantum Signal Processing via Orthogonal Polynomial Theory}, 
    author={Pierre-Antoine Bernard and Nathan Wiebe},
    year={2026},
    journal={arXiv preprint, arXiv:2605.05321},
    url={https://arxiv.org/abs/2605.05321}, 
}

@article{johnson_had_mat_74,
    title={Hadamard products of matrices},
    author={Charles R. Johnson},
    journal={Linear \& Multilinear Algebra},
    year={1974},
    volume={1},
    pages={295-307},
    url={https://api.semanticscholar.org/CorpusID:120925607}
}

@article{js_rga_86,
    author = {Johnson, Charles R. and Shapiro, Helene M.},
    title = {Mathematical Aspects of the Relative Gain Array {$(A \circ A^{-T})$}},
    journal = {SIAM Journal on Algebraic Discrete Methods},
    volume = {7},
    number = {4},
    pages = {627-644},
    year = {1986},
    doi = {10.1137/0607069},
    URL = {https://doi.org/10.1137/0607069},
    eprint = {https://doi.org/10.1137/0607069}
}

@article{learning_had_prod_25,
    author={Chrysos, Grigorios G. and Wu, Yongtao and Pascanu, Razvan and Torr, Philip H.S. and Cevher, Volkan},
    journal={IEEE Transactions on Pattern Analysis and Machine Intelligence}, 
    title={Hadamard Product in Deep Learning: Introduction, Advances and Challenges}, 
    year={2025},
    volume={47},
    number={8},
    pages={6531-6549},
    doi={10.1109/TPAMI.2025.3560423}
}

@article{kn_block_kron_vecb_91,
    title = {Block {K}ronecker products and the vecb operator},
    journal = {Linear Algebra and its Applications},
    volume = {149},
    pages = {165-184},
    year = {1991},
    issn = {0024-3795},
    doi = {https://doi.org/10.1016/0024-3795(91)90332-Q},
    url = {https://www.sciencedirect.com/science/article/pii/002437959190332Q},
    author = {Ruud H. Koning and Heinz Neudecker and Tom Wansbeek},
}

@article{ltlf_matrix_prod_quantum_26,
    title = {Quantum circuit implementation of two matrix product operations and elementary column transformations},
    journal = {Physics Letters A},
    volume = {587},
    pages = {131717},
    year = {2026},
    issn = {0375-9601},
    doi = {https://doi.org/10.1016/j.physleta.2026.131717},
    url = {https://www.sciencedirect.com/science/article/pii/S0375960126003932},
    author = {Yu-Hang Liu and Yuan-Hong Tao and Jing-Run Lan and Shao-Ming Fei},
}

@article{dlx_prod_be_25,
    title={Products between block-encodings}, 
    author={Dekuan Dong and Yingzhou Li and Jungong Xue},
    year={2025},
    journal={arXiv preprint, arXiv:2509.15779},
    url={https://arxiv.org/abs/2509.15779}, 
}

@article{ls_opt_lchs_25,
    title={Optimal quantum simulation of linear non-unitary dynamics}, 
    author={Guang Hao Low and Rolando D. Somma},
    year={2025},
    journal={arXiv preprint, arXiv:2508.19238},
    url={https://arxiv.org/abs/2508.19238}, 
}

@article{acl_lchs_near_opt_23,
    title={Quantum Algorithm for Linear Non-unitary Dynamics with Near-Optimal Dependence on All Parameters}, 
    author={Dong An and Andrew M. Childs and Lin Lin},
    year={2026},
    journal={Commun. Math. Phys.},
    volume={407},
    number={19},
    url={https://doi.org/10.1007/s00220-025-05509-w} 
}

@article{sbs_lin_mat_diffeq_26,
    title={Efficient quantum algorithm for linear matrix differential equations and applications to open quantum systems}, 
    author={Sophia Simon and Dominic W. Berry and Rolando D. Somma},
    year={2026},
    journal={arXiv preprint, arXiv:2605.16195},
    url={https://arxiv.org/abs/2605.16195}, 
}

@article{dll_lchs_state_cost_23,
    title = {Linear Combination of {H}amiltonian Simulation for Nonunitary Dynamics with Optimal State Preparation Cost},
    author = {An, Dong and Liu, Jin-Peng and Lin, Lin},
    journal = {Phys. Rev. Lett.},
    volume = {131},
    issue = {15},
    pages = {150603},
    numpages = {6},
    year = {2023},
    month = {Oct},
    publisher = {American Physical Society},
    doi = {10.1103/PhysRevLett.131.150603},
    url = {https://link.aps.org/doi/10.1103/PhysRevLett.131.150603}
}

@article{campbell_qdrfit_19,
    title={Random Compiler for Fast {H}amiltonian Simulation},
    volume={123},
    ISSN={1079-7114},
    url={http://dx.doi.org/10.1103/PhysRevLett.123.070503},
    DOI={10.1103/physrevlett.123.070503},
    number={7},
    journal={Phys. Rev. Lett.},
    publisher={American Physical Society (APS)},
    author={Campbell, Earl},
    year={2019},
}

@article{cstwz_trotter_comm_21,
    title = {Theory of Trotter Error with Commutator Scaling},
    author = {Childs, Andrew M. and Su, Yuan and Tran, Minh C. and Wiebe, Nathan and Zhu, Shuchen},
    journal = {Phys. Rev. X},
    volume = {11},
    issue = {1},
    pages = {011020},
    numpages = {49},
    year = {2021},
    month = {Feb},
    publisher = {American Physical Society},
    doi = {10.1103/PhysRevX.11.011020},
    url = {https://link.aps.org/doi/10.1103/PhysRevX.11.011020}
}

@article{chak_lcu_near_24,
    doi = {10.22331/q-2024-10-10-1496},
    url = {https://doi.org/10.22331/q-2024-10-10-1496},
    title = {Implementing any Linear Combination of Unitaries on Intermediate-term Quantum Computers},
    author = {Chakraborty, Shantanav},
    journal = {{Quantum}},
    issn = {2521-327X},
    publisher = {{Verein zur F{\"{o}}rderung des Open Access Publizierens in den Quantenwissenschaften}},
    volume = {8},
    pages = {1496},
    month = oct,
    year = {2024}
}

@inproceedings{attention_17,
    author = {Vaswani, Ashish and Shazeer, Noam and Parmar, Niki and Uszkoreit, Jakob and Jones, Llion and Gomez, Aidan N. and Kaiser, \L{}ukasz and Polosukhin, Illia},
    title = {Attention is all you need},
    year = {2017},
    isbn = {9781510860964},
    publisher = {Curran Associates Inc.},
    address = {Red Hook, NY, USA},
    booktitle = {Proceedings of the 31st International Conference on Neural Information Processing Systems},
    pages = {6000–6010},
    numpages = {11},
    location = {Long Beach, California, USA},
    series = {NIPS'17}
}

%%%%%%%%%%%%%%%%%%%%%%%%%%%%%%%%%%%%%%%%%%%%%
\appendix

%%%%%%%%%%%%%%%%%%%%%%%%%%%%%%%%%%%%%%%%%%%%%
\section{The compression gadget} \label{appx:comp_gadget}

\noindent A key component of the main construction of Sec.~\ref{sec:main_const} is a method for reducing auxiliary space usage when taking high-degree matrix products of block encodings. Here we briefly review the background, main statement, and variants of these \emph{compression gadgets} (introduced in \cite{lw_comp_gadget_19}).

The basic conceit of these methods is that, while the gadget for multiplying two block encodings involves the instantiation of disjoint auxiliary qubit registers, e.g.,
    \begin{align}
        (\langle 0|_{r_1}\otimes I_{r_2})(U_{A})_{r_1 r_2}(|0\rangle_{r_1}\otimes I_{r_2}) &= A_{r_2},\\
        (\langle 0|_{r_1'}\otimes I_{r_2})(U_{B})_{r_1' r_2}(|0\rangle_{r_1'}\otimes I_{r_2}) &= B_{r_2},\\
        (\langle 0|_{r_1'r_1}\otimes I_{r_2})(U_{A})_{r_1 r_2} (U_{B})_{r_1' r_2}(| 0\rangle_{r_1'r_1}\otimes I_{r_2}) &= (AB)_{r_2},
    \end{align}
when we are multiplying many matrices at once we can instantiate a smaller auxiliary register which records only vital information about the successful application of each constituent block encoding. This reduces the auxiliary space usage from linear (i.e., copying $r_1, r_1', r_1'', \dots$ according to the example above) to logarithmic in the degree of the product. This improvement is captured by the formal statement of the compression gadget below (depicted in Fig.~\ref{fig:comp_gadget}).

\begin{lemma}[Compression gadget; Lemma 13 in \cite{lw_comp_gadget_19}, generalized in \cite{vg_reducing_space_25}] \label{lem:comp_gadget}
    Let $U_k$ for $k \in \{0, 1, \dots, K - 1\}$ be a series of unitaries respectively block encoding (square) $A_k$,
        \begin{equation}
            (\langle 0|_{a}\otimes I_s) U_k (|0\rangle_{a}\otimes I_s) = A_k,
            \quad
            \lVert A_k \rVert \leq 1,
        \end{equation}
    where each $A_k$ acts on $n_s$ qubits and each block encoding uses $n_a$ additional qubits of space. Then there exists a quantum circuit $V$ with form
        \begin{equation}
            (\langle 0|_{ac}\otimes I_s) V (|0\rangle_{ac}\otimes I_s)
            =
            |0\rangle\langle 0|_{b}\otimes I_s +
            \sum_{k = 0}^{K-1} |k\rangle\langle k|_b \otimes \Big[ \prod_{j = 0}^{k} A_j\Big]_{s},
        \end{equation}
    where $n_b \in \mathcal{O}(n_c) = \mathcal{O}(\log{K})$, and the cost of $V$ is one query to controlled-controlled-$U_k$ and $\mathcal{O}(K(n_a + \log{K}))$ additional single qubit gates. That is, $V$ is a $(1, n_a + n_c, 0)$ block encoding of the block diagonal matrix
        \begin{equation}
            W = I_{s} \oplus \bigoplus_{k = 0}^{K-1} \Big[\prod_{j = 0}^{k} A_j\Big]_{s},
        \end{equation}
    which can be converted to a block encoding of the product of all $A_k$ by selecting only the $k = K$ block, i.e., $V$ could also be seen as a $(1, n_a + n_b + n_c, 0) = (1, n_a + \mathcal{O}(\log{K}), 0)$ block encoding the product of all inputs.
\end{lemma}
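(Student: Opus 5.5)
This is the Low--Wiebe compression gadget (Lemma~\ref{lem:comp_gadget}); I would prove it by an explicit circuit built around a \emph{failure counter}. Let $c$ be a counter register of $n_c = \lceil \log(K+1)\rceil + 1$ qubits and $b$ a clock register of $\mathcal{O}(\log K)$ qubits. All $U_k$ act on a single shared auxiliary register $a$ of $n_a$ qubits. The circuit, for $k = 0, 1, \dots, K-1$, does the following in order:
\begin{enumerate}
\item Apply $U_k$ on $a s$, controlled on the clock register $b$ holding a value $\ge k$ (equivalently, a comparator between $b$ and the constant $k$), and on $c$ being $|0\rangle$ (this is the ``controlled-controlled'' query).
\item Apply a $\Pi$-controlled increment $c \mapsto c+1$ with $\Pi = I - |0\rangle\langle 0|_a$.
\end{enumerate}
The increment is reversible modular addition of cost $\mathcal{O}(\log K)$ gates, and the comparator costs $\mathcal{O}(n_a + \log K)$ gates per step. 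This yields the stated $\mathcal{O}(K(n_a+\log K))$ gate count and exactly one query per $U_k$.

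To verify the block structure, I would not track $c$ as a single counter but compare the circuit with the unary (linear-space) product gadget. Fix an input $|0\rangle_a|0\rangle_c|j\rangle_b|\psi\rangle_s$. I would show by induction on $k$ that, after step $k$, the state equals the projection onto ``$a$ returned to $|0\rangle$ at every active step'', tensored with $|0\rangle_c$, plus a remainder. That remainder is orthogonal to $|0\rangle_c$, and once $c\ne 0$ it stays $\neq 0$ because further $U$'s are blocked.

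The key point is that the counter never wraps back to $0$. It is incremented at most $K$ times and has more than $\log K$ bits, so any branch in which some $U_k$ left $a$ outside $|0\rangle$ carries $c\neq 0$ forever. Projecting onto $\langle 0|_{ac}$ therefore keeps only the branch in which every applied block succeeded. That branch contributes $\prod_{k\le j} (\langle 0|_a U_k|0\rangle_a) = A_j\cdots A_0$; the ordering $\prod_{j=0}^{k}$ should be matched to the circuit order. For the clock value $|0\rangle_b$ with no active $U$ it contributes $I_s$. Reading off the $|j\rangle\langle j|_b$ diagonal gives the displayed $W$, and restricting to the top value of $b$ gives the full product with subnormalization $1$ and zero error.

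I expect the main obstacle to be this non-wrap-around and blocking argument. The subtle issue is that a failed branch could later evolve back to $|0\rangle_a$ under a subsequent $U_{k'}$, and the compressed counter must still flag it. I would handle this by controlling each $U_k$ on $c=0$, so that failed branches are frozen. The cleaner alternative, which counts and then uncomputes at the end as in \cite{lw_comp_gadget_19}, uses $c$ only as a monotone counter, and I would fall back on it if the frozen-branch control inflates the gate count.
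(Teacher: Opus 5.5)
Your construction is correct. Its core is the simplified circuit the paper depicts in Fig.~\ref{fig:comp_gadget}: one shared auxiliary register $a$, followed after each $U_k$ by an $(I-|0\rangle\langle 0|_a)$-controlled modular increment of a counter with more than $\log K$ bits. The paper gives no proof of the lemma beyond citing \cite{lw_comp_gadget_19, vg_reducing_space_25}, so your non-wrap-around argument is what actually justifies that circuit.

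You differ from the paper in two ways.

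First, the obstacle you flag is not real, and the freeze on $c=0$ is unnecessary (though harmless, and it does account for the controlled-controlled query). Write each increment as $P_0\otimes I_c + P_1\otimes \mathrm{INC}_c$, with $P_0 = |0\rangle\langle 0|_a$ and $P_1 = I - P_0$. Expanding the product gives $\langle 0|_c V|0\rangle_c = \sum_{s\in\{0,1\}^K} P_{s_{K-1}}U_{K-1}\cdots P_{s_0}U_0\,\langle 0|\mathrm{INC}^{|s|}|0\rangle = P_0U_{K-1}\cdots P_0U_0$ whenever $2^{n_c}>K$, and the same holds for each fixed clock value. A branch that fails and later returns to $|0\rangle_a$ has already been counted and is never decremented. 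So what keeps $c\neq 0$ is monotonicity of the counter, not the blocking of later $U$'s.

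Second, you build the block-diagonal $W$ from a clock register that is never modified, using comparator controls. The paper instead attributes that structure in \cite{lw_comp_gadget_19} to a combination of controlled modular additions and subtractions, and then drops $b$ entirely. Your route makes block-diagonality in $b$ manifest and is easier to check.

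Watch the indexing. With the control ``$b\ge k$'', clock value $0$ activates $U_0$ and yields $A_0$, not $I_s$. Use ``$b>k$'' on values $0,\dots,K$ to get $I_s\oplus A_0\oplus A_1A_0\oplus\cdots$. The displayed statement itself double-books $|0\rangle\langle 0|_b$, so some such reindexing is needed regardless.
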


We should clarify that the form of the block encoding in Lem.~\ref{lem:comp_gadget} has additional structure, namely, post-selecting on registers $a$, $c$ being in the all zeros state, we apply a \emph{block-diagonal} block encoding to $b$, $s$ whose $\ell$-th block is the product of the first $\ell$ of the $A_k$'s.\footnote{This is achieved by the inclusion of \emph{both} controlled modular additions and subtractions, for which we refer interested readers to the original paper.} We require, as mentioned, only a simpler block encoding of the product of all $K$ of the $A_k$'s. In this case we can eliminate the register $b$, while $c$ acts as the target of simple controlled additions from the $a$ register (i.e., if all block encodings are successful, then $a$ contains the all zeros state).\footnote{We've given the original statement from \cite{lw_comp_gadget_19} in Lem.~\ref{lem:comp_gadget} which uses, e.g., $a$ for register names and $n_a$ for their size, while we use $r_k$ for register names and Latin letters for their size.}
We depict the circuit of this simplified and sufficient version in Fig.~\ref{fig:comp_gadget} (nearly identical to the construction in \cite{vg_reducing_space_25}).

While we give only asymptotic space and query complexities in Lem.~\ref{lem:comp_gadget}, quick examination of the circuit in Fig.~\ref{fig:comp_gadget} reveals that the pre-factors involved are quite reasonable, and further optimizations are readily made (e.g., dropping the last projection-controlled addition, given the definition of a block encoding). Indeed, as we discuss a little below, one can define a whole class of related circuits which use coherently applied mid-circuit gates to transduce information about the successful application\slash transformation of intermediate block encoding unitaries. In the case that the input block encodings are approximate, the error propagation is handed in precisely the same way as the usual, linear-space product method, which we reproduce in Appx.~\ref{appx:error_prop} (Rem.~\ref{rem:error_prop_prod}).

\begin{figure}
    \centering
    \includegraphics[width=0.9\linewidth]{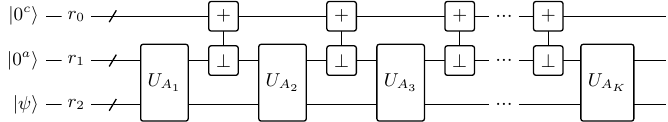}
    \caption{Circuit form of the standard compression gadget (Lem.~\ref{lem:comp_gadget}) introduced and generalized in \cite{lw_comp_gadget_19} and \cite{vg_reducing_space_25} respectively. Here we have suppressed multiple qubit wires with a slash, and compute the block encoding of the matrix product of $\{A_1, \dots, A_K\}$, given access to their block encoding unitaries, which use $r_1$ as their auxiliary register. We also compress notation for our projection-controlled addition (modulo $2^c$) operations, whose controls here are notated by $(\perp)$ and whose targets are notated by $(+)$ (i.e., if $r_1$ is not in the all zeros state, then a modular addition is performed on $r_0$). If $r_0$ and $r_1$ are measured at the end in the all zeroes state, then the desired matrix product is applied to $|\psi\rangle$.}
    \label{fig:comp_gadget}
\end{figure}

Since its introduction the compression gadget has been widely employed \cite{lw_comp_gadget_19, flt_time_marching_23} as well as extended to a general class of techniques for space-saving when manipulating many block encodings simultaneously. These extensions, which may offer interesting avenues to extend techniques of this work, and pivot on the following common assumptions. Concretely, in \cite{vg_reducing_space_25} they consider the possibility of moving from an exact compression gadget to an approximate one, allowing for even further reduction in space use. They enumerate the conditions under which this further compression succeeds and fails, and identity key assumptions (e.g., that the inputs are near-identity) where this approximate scheme is practically relevant. Analogously, when the input model is not merely block encodings but access to free Hamiltonian evolution for arbitrary (polynomial) time \cite{ks_mat_arith_26}, there exist further approximate protocols for matrix arithmetic that are space-efficient compared to naïve methods for multiplying\slash adding. These methods, in addition to hybrid techniques involving Trotterization and product formulas, as well as the incorporation of randomness, offer exciting avenues for future work further optimizing our construction in special cases.

%%%%%%%%%%%%%%%%%%%%%%%%%%%%%%%%%%%%%%%%%%%%
\section{Properties of the element-wise product} \label{appx:ewt_prop}

\noindent Here we briefly collect a few known, useful properties of the element-wise product of matrices obeying certain constraints. While our main results do not depend explicitly on nor take advantage of the more involved of these properties, we briefly gloss the most important as there exist numerous other block encoding algorithms (e.g., linear combination of Hamiltonian simulation: LCHS \cite{acl_lchs_near_opt_23, ls_opt_lchs_25, lw_mat_eq_25}) whose resource complexity depends strongly on the spectral properties of their input, and thus these results may have bearing on the efficiency block encoding methods applied post-QEWT. We also expect (especially through some of the non-trivial spectral lower bounds of Prop.~\ref{prop:ewt_spectral} that one might be able to prove useful lower bounds on success probability and amplitude amplification requirements for specific classes of block encodings.

\begin{theorem}[Schur product theorem; Thm.~5.2.1 in \cite{hj_topics_91}]
    If $A, B \in M_n$ are positive semi-definite (PSD), then so is their element-wise product $A \circ B$. Moreover, if $B$ is positive definite (PD), and $A$ has no diagonal entry equal to zero, then $A \circ B$ is also PD. Furthermore, if $A$ and $B$ are both PD, then so is $A \circ B$. As a corollary, note that element-wise polynomials of PSD\slash PD matrices with positive real coefficients are also PSD\slash PD.
\end{theorem}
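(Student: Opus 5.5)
The plan is to reduce all three claims to the observation that $A\circ B$ is a principal submatrix of $A\otimes B$ (Prop.~\ref{prop:simple_had_prod}), together with the fact that principal submatrices of PSD/PD matrices inherit these properties. Concretely, let $J:\mathbb{C}^n\to\mathbb{C}^n\otimes\mathbb{C}^n$ be the isometry $J|i\rangle=|i\rangle|i\rangle$. Then
\begin{equation}
    J^\dagger (A\otimes B) J = A\circ B .
\end{equation}
If $A,B\succeq 0$, then $A\otimes B\succeq 0$, because its eigenvalues are the products $\lambda_i(A)\mu_j(B)\ge 0$ and it is Hermitian. Compressing by $J$ preserves positive semidefiniteness: for any $x$,
\begin{equation}
    x^\dagger (A\circ B)x=(Jx)^\dagger(A\otimes B)(Jx)\ge 0 .
\end{equation}
This proves the first claim.

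For the definite claims I would use a more explicit decomposition. Write $A=\sum_k v_k v_k^\dagger$ using, for example, a spectral decomposition with vectors $v_k$ scaled by $\sqrt{\lambda_k}$. Then
\begin{equation}
    x^\dagger(A\circ B)x=\sum_k (x\circ \bar v_k)^\dagger B\,(x\circ \bar v_k),
\end{equation}
which holds because $(A\circ B)_{ij}=\sum_k (v_k)_i\overline{(v_k)_j}B_{ij}$, up to conjugation conventions. Each term is nonnegative, which gives PSD again. If $B\succ 0$, the sum vanishes only when $x\circ \bar v_k=0$ for every $k$. That is, $x_i (v_k)_i=0$ for all $i,k$. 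Since $A_{ii}=\sum_k|(v_k)_i|^2\neq 0$, for each $i$ some $(v_k)_i\neq 0$, which forces $x_i=0$. Hence $x=0$ and $A\circ B\succ 0$. The PD--PD case follows immediately, because a PD matrix has strictly positive diagonal entries $A_{ii}=e_i^\dagger A e_i>0$.

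For the corollary, note that element-wise powers $A^{\circ k}$ are PSD by induction, using the first part. Positive combinations of PSD matrices are PSD. For the PD version with $k\ge 1$ terms, the diagonal entries $A_{ii}^k$ are positive, so the inductive step uses the second part. One caveat: the $k=0$ element-wise power is the all-ones matrix, which is PSD but not PD. The PD statement therefore needs at least one positive coefficient on some $k\ge1$, and I would state this assumption explicitly.

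The main subtlety is handling conjugation correctly in the rank-one decomposition identity; the rest is routine.
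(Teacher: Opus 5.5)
Your proof is correct. The paper gives no argument of its own here; it only cites Thm.~5.2.1 of Horn and Johnson. Your route is the standard textbook one: compress $A\otimes B$ by the isometry $|i\rangle\mapsto|i\rangle|i\rangle$ for the semidefinite claim, which is exactly the principal-submatrix identity of Prop.~\ref{prop:simple_had_prod} used throughout the paper, and use the Gram identity $x^\dagger(A\circ B)x=\sum_k (x\circ\bar v_k)^\dagger B\,(x\circ\bar v_k)$ for the definite claims, with the conjugations correct as you wrote them.

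Your caveats are legitimate sharpenings of the statement as the paper phrases it. First, the definite clause needs $A\succeq 0$ as a standing hypothesis; you use it implicitly when writing $A=\sum_k v_kv_k^\dagger$, and without it $A=\mathrm{diag}(-1,1)$, $B=I$ is a counterexample. Second, a constant term contributes $c_0$ times the all-ones matrix, which is singular for $n\ge 2$. So the PD version of the corollary needs some $c_k>0$ with $k\ge 1$.
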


\begin{proposition}[Submultiplicativity of element-wise product; Thm.~5.5.1 of \cite{hj_topics_91}] 
\label{prop:submult_ewt}
    For any $A, B \in M_{m, n}$ ($m\times n$ matrices with complex coefficients), $\lVert (A \circ B) \rVert \leq \lVert A \rVert \, \lVert B \rVert$, where $\lVert A\rVert$ is the spectral norm, or largest singular value of $A$. We also have the following chain of inequalities:
        \begin{equation}
            \lVert (A \circ B) \rVert \leq r(A)\,c(B) \leq \lVert A \rVert \, \lVert B \rVert,
        \end{equation}
    where $r(A)$, $c(A)$ are the maximum $\ell_2$ norms among rows and columns of $A$ or $B$. The commutativity of the element-wise product establishes the statement with row and column maximum $\ell_2$-norms flipped. Moreover, the (element-wise product) submultiplicativty of the spectral norm given here can be extended to all \emph{unitarily-invariant} norms.
\end{proposition}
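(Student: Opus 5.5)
The spectral-norm claims follow from a single Cauchy--Schwarz estimate. The extension to unitarily invariant norms follows by viewing $X \mapsto A\circ X$ as a linear map, bounding it on the operator and trace norms, and passing to Ky Fan norms and then Fan dominance. Throughout, $A,B\in M_{m,n}$ have entries $a_{ij},b_{ij}$.

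\emph{Step 1 (the refined spectral bound).} Since $\lVert M\rVert=\sup|y^\dagger M x|$ over unit $x\in\mathbb{C}^n$, $y\in\mathbb{C}^m$, I would estimate
\begin{equation}
    |y^\dagger (A\circ B)x| = \Big|\sum_{i,j} \bar y_i a_{ij}\, b_{ij} x_j\Big|
    \le \Big(\sum_{i,j}|y_i|^2|a_{ij}|^2\Big)^{1/2}\Big(\sum_{i,j}|b_{ij}|^2|x_j|^2\Big)^{1/2}
\end{equation}
by Cauchy--Schwarz, splitting each summand as $(\bar y_i a_{ij})(b_{ij}x_j)$. The first factor equals $\big(\sum_i |y_i|^2 \lVert e_i^\dagger A\rVert_2^2\big)^{1/2}\le r(A)$. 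The second equals $\big(\sum_j |x_j|^2\lVert B e_j\rVert_2^2\big)^{1/2}\le c(B)$. This gives $\lVert A\circ B\rVert\le r(A)c(B)$.

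Next, $r(A)=\max_i\lVert e_i^\dagger A\rVert_2\le\lVert A\rVert$ and $c(B)=\max_j\lVert Be_j\rVert_2\le\lVert B\rVert$, which closes the chain. The flipped version $c(A)\,r(B)$ follows by applying the same argument to $B\circ A=A\circ B$. As a sanity check, $\lVert A\circ B\rVert\le\lVert A\rVert\lVert B\rVert$ also follows from Prop.~\ref{prop:simple_had_prod}, since a submatrix of $A\otimes B$ has norm at most $\lVert A\otimes B\rVert=\lVert A\rVert\lVert B\rVert$.

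\emph{Step 2 (trace norm).} Define $\Phi_A(X)=A\circ X$. Step 1 gives $\lVert\Phi_A(X)\rVert\le\lVert A\rVert\,\lVert X\rVert$. For the trace norm $\lVert\cdot\rVert_1$ I would use duality. The entrywise identity $\mathrm{tr}(C^\dagger(A\circ X))=\mathrm{tr}((\bar A\circ C)^\dagger X)$ shows that the adjoint of $\Phi_A$ with respect to the Hilbert--Schmidt pairing is $\Phi_{\bar A}$. Since $\lVert\bar A\rVert=\lVert A\rVert$, we get
\begin{equation}
    \lVert A\circ X\rVert_1=\sup_{\lVert C\rVert\le1}|\mathrm{tr}(C^\dagger(A\circ X))|\le\lVert A\rVert\,\lVert X\rVert_1 .
\end{equation}

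\emph{Step 3 (all unitarily invariant norms).} I expect this to be the main obstacle, since the singular values of $A\circ B$ are not simply related to those of $A$ and $B$. I would interpolate via the Ky Fan $k$-norms $N_k(X)=\sum_{i\le k}\sigma_i(X)$, using the standard decomposition formula
\begin{equation}
    N_k(X)=\min\{\lVert Y\rVert_1+k\lVert Z\rVert : X=Y+Z\}.
\end{equation}
Take an optimal split $B=Y+Z$. Linearity gives $A\circ B=A\circ Y+A\circ Z$, so by Steps 1 and 2,
\begin{equation}
    N_k(A\circ B)\le\lVert A\rVert\big(\lVert Y\rVert_1+k\lVert Z\rVert\big)=\lVert A\rVert N_k(B)\quad\text{for every }k.
\end{equation}
By the Fan dominance theorem this yields $N(A\circ B)\le\lVert A\rVert\,N(B)$ for every unitarily invariant norm $N$.

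If $N$ is normalized so that $N(\mathrm{diag}(1,0,\dots))=1$, then $\lVert A\rVert\le N(A)$, and hence $N(A\circ B)\le N(A)N(B)$. I would state the normalization hypothesis explicitly, since submultiplicativity fails for unnormalized scalings of a norm. If the decomposition formula for $N_k$ were unavailable, the fallback would be to prove the weak majorization $\sum_{i\le k}\sigma_i(A\circ B)\le\sum_{i\le k}\sigma_i(A)\sigma_i(B)$, which is Thm.~5.5.4 of \cite{hj_topics_91}, directly.
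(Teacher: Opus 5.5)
Your proof is correct, and it is more complete than what the paper offers. The paper gives no argument of its own and simply cites Horn and Johnson. The only in-paper route is the one you list as a sanity check: $A\circ B$ is a submatrix of $A\otimes B$ (Prop.~\ref{prop:simple_had_prod}), so $\lVert A\circ B\rVert\le\lVert A\otimes B\rVert=\lVert A\rVert\,\lVert B\rVert$. Your Step~1 is the classical Schur-type Cauchy--Schwarz estimate, and it is what produces the sharper intermediate quantity $r(A)\,c(B)$, which the Kronecker-submatrix argument cannot see.

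You genuinely diverge from the textbook in the extension to unitarily invariant norms. The standard route goes through the weak majorization $\sum_{i\le k}\sigma_i(A\circ B)\le\sum_{i\le k}\sigma_i(A)\sigma_i(B)$, whose proof needs considerably more machinery. Your route uses nothing beyond Step~1:
\begin{itemize}
\item the Hilbert--Schmidt adjoint of $X\mapsto A\circ X$ is $X\mapsto\bar A\circ X$, which gives the trace-norm bound;
\item the decomposition $N_k(X)=\min\{\lVert Y\rVert_1+k\lVert Z\rVert : X=Y+Z\}$ then interpolates between the trace and operator norms;
\item Fan dominance extends the Ky Fan bounds to every unitarily invariant norm.
\end{itemize}
The price is that you only get $N_k(A\circ B)\le\sigma_1(A)\,N_k(B)$, not the finer majorization pairing $\sigma_i(A)$ with $\sigma_i(B)$. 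That weaker bound is exactly what the statement needs.

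Your normalization caveat is a real correction to the statement as written. For a unitarily invariant norm $N$, the inequality $N(A\circ B)\le N(A)N(B)$ holds for all $A,B$ if and only if $N(E_{11})\ge 1$, where $E_{11}$ is the matrix unit.
\begin{itemize}
\item Necessity: take $A=B=E_{11}$.
\item Sufficiency: monotonicity of symmetric gauge functions gives $N(A)\ge\sigma_1(A)\,N(E_{11})\ge\sigma_1(A)$, and combining this with your bound $N(A\circ B)\le\sigma_1(A)N(B)$ finishes it.
\end{itemize}
So the paper's unqualified phrase ``all unitarily-invariant norms'' should carry this hypothesis.
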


\begin{proposition}[Spectral properties of the element-wise product; variously \cite{hj_matrix_analysis_85, hj_topics_91}] \label{prop:ewt_spectral}
    Let $A, B \in M_{n}$ be positive definite. Then $AB = A\circ B$ iff both $A$ and $B$ are diagonal. Moreover, $\det{(A \circ B)} \geq \det{B}\prod_{k} A_{kk}$ (this is called \emph{Oppenheim's inequality}). For $A, B$ PSD, we thus also have $\det{(A \circ B)} \geq (\det{A})(\det{B})$. Finally, as a relatively weak statement, recalling that the element-wise product is a principal submatrix of the Kronecker product, we have the following spectral bounds:
        \begin{equation}
            \lambda_{\text{min}}(A \circ B) \geq \lambda_{\text{min}}(A)\,\lambda_{\text{min}}(B),\quad\quad
            \lambda_{\text{max}}(A \circ B) \leq \lambda_{\text{max}}(A)\,\lambda_{\text{max}}(B),
        \end{equation}
    whenever $A, B$ are PSD. A stronger statement can be made (Thm.~5.3.1 of \cite{hj_topics_91}) for PSD Hermitian $A, B$:
        \begin{equation}
             \lambda_{\text{min}}(A \circ B) \geq  \lambda_{\text{min}}(AB^T),\quad\quad
             \lambda_{\text{min}}(A \circ B) \geq  \lambda_{\text{min}}(AB).
        \end{equation}
\end{proposition}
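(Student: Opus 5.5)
The proposition collects several classical facts, so I would prove them one at a time, reusing earlier items where possible.

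\textbf{Kronecker bounds.} For the two eigenvalue bounds $\lambda_{\text{min}}(A)\lambda_{\text{min}}(B) \le \lambda(A\circ B) \le \lambda_{\text{max}}(A)\lambda_{\text{max}}(B)$, I use Prop.~\ref{prop:simple_had_prod} and Def.~\ref{def:submat_permut}. These show that $A\circ B$ is a principal submatrix of $A\otimes B$ up to the permutation $P$. Hence $A\circ B$ is a compression $V^\dagger (A\otimes B) V$ for an isometry $V$. Since $A,B$ are Hermitian PSD, the spectrum of $A\otimes B$ is $\{\lambda_i(A)\lambda_j(B)\}$. Cauchy interlacing, or simply the Rayleigh quotient restricted to $\text{img}(V)$, then gives both bounds at once.

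\textbf{Oppenheim's inequality.} For $\det(A\circ B)\ge \det B\prod_k A_{kk}$, I would follow the standard induction on $n$ via Schur complements. Write $A=\begin{bmatrix} a_{11} & a^\dagger\\ a & A_{22}\end{bmatrix}$ and split $A = a_{11}^{-1}\begin{bmatrix} a_{11}\\ a\end{bmatrix}\begin{bmatrix} a_{11} & a^\dagger\end{bmatrix} + (0\oplus S_A)$, where $S_A = A_{22}-a a^\dagger/a_{11}$ is PSD. By the Schur product theorem the first summand, Hadamard-multiplied by $B$, is PSD, so $\det(A\circ B)\ge \det\big((0\oplus S_A)\circ B + \text{rank-one PSD part}\big)$. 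Expanding along the first row and using $\det B = b_{11}\det(S_B)$ reduces the claim to the $(n-1)$-dimensional case for $S_A\circ B_{22}$. Finally $(S_A)_{kk}\le A_{kk}$ closes the induction.

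The PSD corollary $\det(A\circ B)\ge \det A\det B$ then follows from Hadamard's inequality $\prod_k A_{kk}\ge \det A$.

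\textbf{The characterization $AB=A\circ B$ iff $A,B$ diagonal.} The "if" direction is immediate. For "only if", suppose $AB=A\circ B$. Then $\det(A\circ B)=\det A\det B$, so by Oppenheim
\begin{equation}
\det A\det B=\det(A\circ B)\ge \det B\prod_k A_{kk}\ge \det A\det B .
\end{equation}
Since $\det B>0$, this forces $\prod_k A_{kk}=\det A$. The equality case of Hadamard's inequality for PD matrices then forces $A$ to be diagonal. By commutativity of $\circ$ the same argument applies to $B$.

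\textbf{Main obstacle: Thm.~5.3.1 bounds.} The stronger bounds $\lambda_{\text{min}}(A\circ B)\ge \lambda_{\text{min}}(AB^T)$ and $\lambda_{\text{min}}(A\circ B)\ge \lambda_{\text{min}}(AB)$ do not follow from the Kronecker compression alone, so I expect this to be the hard step.

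My first attempt uses the identity $x^\dagger(A\circ B)x=\operatorname{tr}(D_x^\dagger A D_x B^T)=\lVert A^{1/2}D_x (B^T)^{1/2}\rVert_F^2$, with $D_x=\operatorname{diag}(x)$. The goal is to lower-bound this by $\lambda_{\text{min}}(A^{1/2}B^TA^{1/2})\,\lVert x\rVert^2$, exploiting the fact that $\operatorname{vec}D_x$ lies in the span of the vectors $e_i\otimes e_i$. If that direct estimate only recovers $\lambda_{\text{min}}(A)\lambda_{\text{min}}(B)$, I would defer to the proof in Horn and Johnson (Thm.~5.3.1 of \cite{hj_topics_91}). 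The version with $AB$ then follows from the $AB^T$ version applied with $B^T=\bar B$, since $\bar B$ is PSD with the same spectrum as $B$.
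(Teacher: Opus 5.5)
The paper gives no proof of this proposition beyond citing Horn--Johnson, so your proposal has to stand on its own. Three parts are correct: the Kronecker-compression eigenvalue bounds, the PSD corollary via Hadamard's inequality, and the equality-case argument for $AB=A\circ B$ (the last is conditional on Oppenheim).

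\textbf{Oppenheim: the induction does not close.} The matrix $(vv^\dagger/a_{11})\circ B=D_vBD_v^\dagger/a_{11}$, with $v=(a_{11},a)$, is not rank one. Whatever PSD piece of it you discard, inducting on $S_A\circ B_{22}$ in the stated form gives a lower bound containing $\prod_{k\ge 2}(S_A)_{kk}$. To reach $\prod_k A_{kk}$ you would then need $(S_A)_{kk}\ge A_{kk}$. The inequality $(S_A)_{kk}\le A_{kk}$ that you cite points the wrong way. Concretely, for $B=I$ your bound is at most $a_{11}\prod_{k\ge2}(S_A)_{kk}$, which is strictly below $\det(A\circ I)=\prod_k A_{kk}$ whenever $a\neq 0$. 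The identity $\det B=b_{11}\det S_B$ also does not fit a reduction to $S_A\circ B_{22}$.

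The repair is to pivot exactly on $(A\circ B)_{11}=a_{11}b_{11}$. Write the first column of $B$ as $(b_{11},b)$ and set $S_B=B_{22}-bb^\dagger/b_{11}$. The Schur complement of the pivot is $A_{22}\circ B_{22}-(aa^\dagger/a_{11})\circ(bb^\dagger/b_{11})=A_{22}\circ S_B+S_A\circ(bb^\dagger/b_{11})\succeq A_{22}\circ S_B$. Hence $\det(A\circ B)\ge a_{11}b_{11}\det(A_{22}\circ S_B)\ge a_{11}b_{11}\det S_B\prod_{k\ge 2}A_{kk}=\det B\prod_k A_{kk}$. Alternatively, the same Schur complement is $\succeq S_A\circ B_{22}$. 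Applying the hypothesis with the factors' roles swapped and using $\det A=a_{11}\det S_A$ gives $\det A\prod_k B_{kk}$, which is equivalent by commutativity.

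\textbf{Thm.~5.3.1 bounds: unproved, and the reduction is invalid.} You defer these to the reference. Your derivation of the $AB$ bound from the $AB^T$ bound also fails. Substituting $\bar B$ for $B$ yields $\lambda_{\min}(A\circ\bar B)\ge\lambda_{\min}(AB)$, which is a statement about a different matrix. For complex Hermitian inputs, $A\circ\bar B$ and $A\circ B$ need not be isospectral. For $A=\left[\begin{smallmatrix}1&1/2&i/2\\ 1/2&1&1/2\\ -i/2&1/2&1\end{smallmatrix}\right]$ and $B=\bar A$, one gets $\lambda_{\min}(A\circ B)=3/4$ but $\lambda_{\min}(A\circ\bar B)=1/2$. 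This is why the two bounds are listed separately.

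The missing idea is to compare $B$ with a multiple of $A^{-1}$. Take $A$ PD; if $A$ is singular, both right-hand sides vanish and the Schur product theorem suffices.

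For the $AB^T$ bound, let $\mu=\lambda_{\min}(A^{1/2}B^TA^{1/2})=\lambda_{\min}(AB^T)$, so $B^T\succeq\mu A^{-1}$. Your trace identity then gives $x^\dagger(A\circ B)x\ge\mu\operatorname{tr}(D_x^\dagger A D_x A^{-1})=\mu\lVert A^{1/2}D_xA^{-1/2}\rVert_F^2\ge\mu\lVert x\rVert^2$. The last step holds because $A^{1/2}D_xA^{-1/2}$ is similar to $D_x$, and $\lVert N\rVert_F^2\ge\sum_i|\lambda_i(N)|^2$ by Schur's inequality. This is Fiedler's inequality $A\circ(A^{-1})^T\succeq I$ in disguise.

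For the $AB$ bound, let $\mu'=\lambda_{\min}(A^{1/2}BA^{1/2})$, so $B\succeq\mu' A^{-1}$. By the Schur product theorem, $A\circ B\succeq\mu'(A\circ A^{-1})\succeq\mu' I$. The inequality $A\circ A^{-1}\succeq I$ follows by applying the Schur product theorem to $\left[\begin{smallmatrix}A&I\\ I&A^{-1}\end{smallmatrix}\right]\circ\left[\begin{smallmatrix}A^{-1}&I\\ I&A\end{smallmatrix}\right]=\left[\begin{smallmatrix}X&I\\ I&X\end{smallmatrix}\right]$, with $X=A\circ A^{-1}$, and testing against $(v,-v)$.
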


In the interest of concision we stop here, but refer the interested reader to the in-depth treatment of the element-wise or Hadamard product in \cite{hj_matrix_analysis_85, hj_topics_91} (specifically Ch.~5 of the latter and its references), which provides a huge variety of useful statements, including in the block-matrix case, the study of the relative gain array, stronger constraints than PSD on the input matrices, and the interaction betwee the element-wise product and so-called `ordinary functions' of matrices, which generalize spectral mappings like QSVT.

%%%%%%%%%%%%%%%%%%%%%%%%%%%%%%%%%%%%%%%%%%%%%
\section{LCU for non-standard matrix products} \label{appx:lcu_nonstandard}

\noindent In the main text we stated a lemma (Lem.~\ref{lem:ew_select}) confirming that the \textsc{select} unitary for the element-wise product up to a degree $d$ could be built, just as for the usual matrix product, using a $d$ copies of a controlled block encoding of the desired matrix, as well as $\mathcal{O}(\log{d})$ additional space and $\mathcal{O}(d\log{d})$ single- and two-qubit gates. In this section we both give a constructive proof of this Lem.~\ref{lem:ew_select}, and discuss the insufficiency of a brief argument given in prior work for constructing the same unitary.\footnote{We emphasize that, as our construction occurs above the element-wise power subroutine, we remedy and leave (mostly) unchanged the quoted complexities in the prior \cite{gychanar_had_prod_25}.}

We provide a brief summary of the oversights in prior work on this topic (Rem.~\ref{rem:subtle_nonstandard}), which can be roughly grouped into questions of access model, properties of the element-wise product, and block encoding properties. We also provide an auxiliary figure (Fig.~\ref{fig:ewt_controlled_form}) which illustrates, at a minimal size, the modifications that can recover the desired construction. Following these discussions we provide the statement and accompanying proof of the main text's Lem.~\ref{lem:ew_select}.

\begin{remark}[Subtleties non-standard \textsc{select} unitaries] \label{rem:subtle_nonstandard}
    The proof of Thm.~S6 in \cite{gychanar_had_prod_25} makes only passing reference to a construction which takes as input block-encoded element-wise powers of a given matrix, and produces the \textsc{select} unitary which coherently applies these powers according to the quantum state of an auxiliary register. The core argument of this work is that the usual construction of \textsc{select} for matrix products (reproduced below in Lem.~\ref{lem:gate_cost_standard}) suffices. We point out three complications in this statement.
    \begin{enumerate}[label=(\alph*)]
        \item The construction of \textsc{select}, toward applying LCU among block encodings, requires \emph{controlled access} to the block encoding unitaries. If a circuit description of these block encodings is known this controllization can be straightforward, while for black-box unitaries controllization can be extremely expensive. Regardless, the controlled access assumption should be noted explicitly where relevant.

        \item The application of Lem.~\ref{lem:gate_cost_standard} (e.g., in the cited \cite{cks_linear_17}) involves the sequential application of $2^j$-th \emph{matrix powers} of a block encoding according to the $j$-th bit $b_j$ of the control register. This sequential application necessarily induces \emph{matrix powers} among these block encodings, leading (by the associativity of matrix multiplication) to the desired total power:
            \begin{equation}
                \prod_{j = 0}^{\log{d}} A^{b_j 2^j} 
                = A^{\sum_{j = 0}^{\log{d}} b_j 2^j}
                = A^{k} \;\text{ where }\; k \equiv b_{\lceil\log{d}\rceil} \cdots b_{1}b_{0}.
            \end{equation}
        The issue immediately encountered when trying to apply this to the element-wise product is that the matrix and element-wise products are \emph{incompatible}, meaning that the matrix multiplication of bit-wise controlled element-wise products is simply incorrect:
            \begin{equation}
                (A^{\circ b_0 2^0})(A^{\circ b_1 2^1})\cdots(A^{\circ b_{\log{d}} d}) \neq A^{\circ (b_0 2^0 + b_1 2^1 + \dots b_{\log{d}}d)}.
            \end{equation}
        
        \item Finally, as implied in the previous point it is not even clear how we should define $A^{\circ 0}$, as the identity element for the standard matrix product and the element-wise product are not identical. Consequently, \emph{even if} the scheme of sequentially applying bit-wise controlled element-wise powers worked, we could not use standard controlled block encodings without further modification. As a small example (whose correct circuit is given in (Fig.~\ref{fig:ewt_controlled_form}) one can compare the following two \textsc{select} unitaries, where the first computes the desired bit-wise-controlled element-wise product with the matrix product identity, and the second with the element-wise product identity:\footnote{Where again we assume that $r_2$ and $r_4$ are post-selected to the all zeros state, which means $r_0$ is used catalytically, and $r_1$ experiences the identity.}
            \begin{align}
                V &= |0\rangle\langle 0|_{r_5} \otimes (A \circ I)_{r_3} + |1\rangle\langle 1|_{r_5} \otimes (A^{\circ 2})_{r_3},\\
                V' &= |0\rangle\langle 0|_{r_5} \otimes (A^{\circ 1})_{r_3} + |1\rangle\langle 1|_{r_5} \otimes (A^{\circ 2})_{r_3}.
            \end{align}
        Thus, we have to make sure that in the case that a control bit for a constituent block encoding unitary is zero that we \emph{either} apply an element-wise product with the element-wise identity, or a matrix product with the matrix product identity, such that the overall effect on the product is null. It turns out (as block encoding the element-wise identity incurs a bad subnormalization) that we can do the former comparatively simply as discussed in the proof of Lem.~\ref{lem:ew_select_proof}.
    \end{enumerate}
\end{remark}

\begin{figure}
    \centering
    \includegraphics[width=\linewidth]{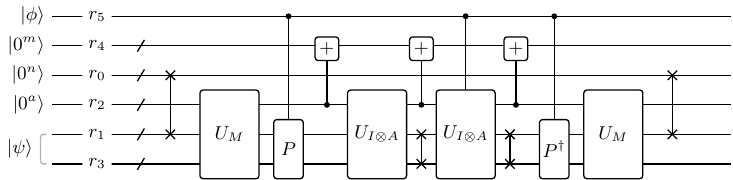}
    \caption{The quantum circuit to apply, controlled on the single-qubit register $r_5$, either $A = A^{\circ 1}$ or $(A\circ A) = A^{\circ 2}$ to the register $r_1 r_3$ (assuming successful post-selection of $r_0 r_2 r_4$ in the all-zeros state). Note that if the permutations were \emph{not controlled} that this would apply either $(A \circ I)$ or $(A \circ A)$ for $r_5 = |0\rangle, |1\rangle$ respectively, which is equivalent iff $A$ is diagonal. In the controlled version of our recursive construction these controlled permutations, at each level, would become \emph{multi-controlled permutations} applied only when \emph{not all} of the inputs $A, B$ are the identity matrix.}
    \label{fig:ewt_controlled_form}
\end{figure}

\begin{lemma}[Gate cost for controlled matrix product gadgets; Lem.~8 from \cite{cks_linear_17}] \label{lem:gate_cost_standard}
    Let $U = \sum_{k = 0}^{N} |k\rangle\langle k|\otimes Y^k$, where $Y$ is a unitary with gate complexity $G$. Let the gate complexity of $Y^{2^j}$ be $G_j \leq 2^j G$. Then the gate complexity of $U$ is $\mathcal{O}(NG)$.

    \begin{proof}
        Let $n = \lfloor \log{N} \rfloor$, and note that the unitaries $Y^{2^j}$ for $j \in \{0, \dots, n\}$, which have query complexity $G_j$, also have controllized versions with query complexity $G_j$. The desired unitary $U$ is just the sequential application of controlled-$Y^{2^j}$ to the $j$-th qubit of a register of size $n$ encoding $k \in \{0, \cdots N - 1\}$ in binary. The total gate complexity is thus
            \begin{equation}
                \mathcal{O}\Big(\sum_{j = 0}^{n} G_j\Big)
                \leq \mathcal{O}\Big(\sum_{j = 0}^{n} 2^j G\Big)
                = \mathcal{O}(NG),
            \end{equation}
        which is evidently linear in both the maximum degree and the cost of $Y$.
    \end{proof}
\end{lemma}

The morals of Lem.~\ref{lem:gate_cost_standard} are clear: the gate cost of coherently computing controlled matrix products of a desired unitary is not much larger than the gate cost of computing the highest degree constituent matrix product, and the realizing circuit needs access only to certain (sufficiently cheap) controlled powers of this matrix product. This result, while referred to in the LCU construction of Thm.~S6 of \cite{gychanar_had_prod_25}, is not sufficient to establish a similar construction in the case of \emph{element-wise} powers for the reasons discussed in Rem.~\ref{rem:subtle_nonstandard}.
Here we clarify and remedy these problems by providing a modified version of Lem.~\ref{lem:gate_cost_standard}. The key enabling factor is that our desired matrix function has a uniform circuit which, viewed as a function of (multiple copies of) the block encoding input, doesn't depend on their order. Moreover, given the known identity that the element-wise product is a sub-matrix of the Kronecker product (which is itself a matrix product) we have a cheap way of (coherently) ensuring that matrix-product identities rather than element-wise product identities can be applied, simply by controllizing the permutations used to realize this identity. It is an interesting open question whether other \textsc{select} unitaries for indexed matrix functions can be built in a similar way.

\begin{lemma}[Building \textsc{select} for the element-wise product] \label{lem:ew_select_proof}
    Let $U_A$ a $(1, a, 0)$ block encoding of an $n$-qubit linear operator $A$ and $d \in \mathbb{N}$. Then the \textsc{select} unitary for element-wise powers of $A$ up to degree $d$, i.e.,
    \begin{equation}
        U_\text{\textsc{select}} \equiv \sum_{k = 0}^{d} 
        |k\rangle\langle k |\otimes U_{A^{\circ k}} + \cdots,
    \end{equation}
    where the unwritten terms pad to the smallest power of two greater than $d$ (see Eq.~\ref{eq:lcu_select_def}), and which can be built using $d$ copies of \emph{controlled} $U_A$, $\lceil\log{d}\rceil$ qubits of additional space, and $\mathcal{O}(d\,\lceil\log{d}\rceil)$ single and two qubit gates. Note here that the notation $A^{\circ 0}$ is taken to be the \emph{usual identity matrix} on $n$ qubits.

    \begin{proof}
        The statement follows from the assumption that the circuit for the unitary-valued function we want to coherently apply $g(k, A) = A^{\circ k}$ is uniform, i.e., that it depends only on the number of copies of (block encodings of) $A$ it is provided. If this is the case, we can build our \text{select} from copies of controlled-$U_{A}$ by the same bit-wise control scheme with a few addenda; i.e., assuming for now that $d$ is a power of $2$,\footnote{If not, we can pad our construction to the next largest power of two and simply truncate our LCU, as is common in prior work \cite{cw_lcu_12}.} we fill the first $d/2$ slots of our uniform circuit with copies of controlled-$U_{A}$, where the control qubit is the most significant bit of the additional register ($r_5$) of size $\log{d}$, and the target is the usual registers $(r_1 r_2 r_3)$ for our block encoding. The next $d/4$ slots are then controlled by the second-most significant bit, and so on. This ensures that, coherently, when $|k\rangle$ is encountered in the control register (encoded in binary with consistent endianness), that $k$ total copies of $U_{A}$ are provided to the uniform circuit, yielding the desired $g(k, Y)$. 

        What remains is then to argue that we can also simultaneously enforce the matrix product identity (which is seen by the circuit instead of $A$ when a given $U_{A}$'s control qubit is in the state $|0\rangle$ to act like the element-wise product identity. The simplest of example of this was given in Fig.~\ref{fig:ewt_controlled_form}, e.g., at the lowest level of our recursion the \emph{permutation unitaries} $P, P^\dagger$ are controlled by our register $r_5$ such that, whenever \emph{either of the control qubits} for the two inputs $U_{(I\otimes A)}$ or $U_{(I\otimes B)}$ (as in Fig.~\ref{fig:full_prod}) are in the state $|0\rangle$, then neither of $P, P^\dagger$ is applied. For higher levels of recursion these $P, P^\dagger$ are turned off only when \emph{all control qubits} contributing to \emph{either input block encoding} from lower levels are zero. These are simply multi-qubit controlled permutations, which catalytically use at most an additional $\log{d}$-qubits of additional space an access to a qubit-controlled permutation (e.g., see Sec.~4.3 of the standard \cite{nc_textbook_11}).
        
        The total gate complexity of our construction is thus $\mathcal{O}(d\log{d})$, as there are $d$ total slots in our uniform circuit, and each controlled $U_{A}$ comes paired with a constant depth circuit acting over at most $\mathcal{O}(\log{d})$ qubits.
    \end{proof}
\end{lemma}

The mechanisms of Lem.~\ref{lem:ew_select_proof} can be used to amend the error in prior work raised when trying to build an element-wise \textsc{select} only from an application of Lem.~\ref{lem:gate_cost_standard} (taken from \cite{cks_linear_17}). We give this amendment in the dedicated Prop.~\ref{prop:amended_select}; note that in this statement we are building a bit-wise-controlled element-wise product under the assumption of access to block encodings of $A^{\circ 2^j}$ for varying $j$. For ease of visualization we also provide a figure of the circuit corresponding to our construction in the two bit case in Fig.~\ref{fig:ex_ewt_select}. In our main construction (Lem.~\ref{lem:ew_select_proof}) we assume only access to block encodings of $A = A^{\circ 1}$, and show that it is not strictly necessary to go through an intermediate bit-wise construction like the one below.

\begin{proposition}[Amended construction for bit-wise element-wise \textsc{select}] \label{prop:amended_select}
    Let $U_A$ an $(1, a, 0)$ block encoding of an $n$-qubit operator $A$, and $U_{A^{\circ k}}$ an $(1, a_k, 0)$ block encoding of the $k$-th element-wise product of $A$ with itself. Then given query access to $U_{A^{\circ 2^j}}$ for $j \in \{0, 1, \cdots, m\}$ we can construct the bit-wise \textsc{select} unitary for the element-wise product:
    \begin{equation} \label{eq:b_bit_select}
        \sum_{b_0, b_1, \dots b_m = 0}^{1} |b_0 b_1 \cdots b_m \rangle \langle b_0 b_1 \cdots b_m | \otimes U_{A^{\circ(2^0 b_0 + 2^1 b_1 + \cdots 2^{m} b_m)}},
    \end{equation}
    using a single query to each \emph{controlled} $U_{A^{\circ 2^j}}$, $\mathcal{O}(n m)$ additional qubits of space, and $\mathcal{O}(n m)$ additional single and two-qubit gates. Finally here we use the notation $A^{\circ 0}$ to refer to the \emph{matrix product} identity $I_{2^n}$ and not the element-wise product identity.

    \begin{proof}
        Merely controllizing all of the unitaries in the original element-wise product construction is not sufficient to achieve our desired \textsc{select} unitary, due to the problems brought up in Rem.~\ref{rem:subtle_nonstandard}. Consequently we also need to controllize the permutations used to bring the element-wise product to the top-left block of our block encodings, \emph{as well as} the swap operation used to turn a block encoding of $(I_{r_1}\otimes A_{r_3})$ to one of $(A_{r_1}\otimes I_{r_3})$. Moreover, this controllization needs to be such that it occurs whenever \emph{either} of the input bit-wise controlled block encodings is trivial, which is accomplished by introducing an additional qubit of space and the use of Toffoli gates to achieve an bit-wise AND.\footnote{A bit-wise XNOR using two CNOTs would also suffice here.} For the two bit case this is depicted in Fig.~\ref{fig:ex_ewt_select}, from which the general circuit follows simply. We note that the additional space used\footnote{Here the additional space is logarithmic in the maximum degree $(2^{m+1} - 1)$ of the element-wise product only because we have assumed low-space block encodings for each $2^k$-th element-wise power of $A$.} scales only in the bit-length $m$ (multiplied by the size of the register containing the action of $A$, $n$), yielding the quoted $\mathcal{O}(n m)$ qubit and gate complexities.
    \end{proof}
\end{proposition}

\begin{figure}
    \centering
     \includegraphics[width=0.65\linewidth]{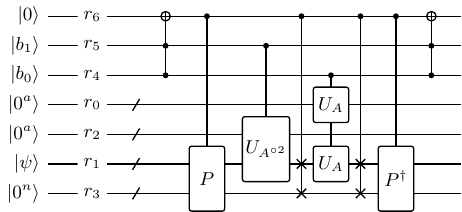}
    \caption{Circuit computing the two-bit \textsc{select} unitary (\ref{eq:b_bit_select}) for the element-wise given controlled access to $(\ast, a, \ast)$-block encodings of $n$-qubit operators $A$ and $A^{\circ 2}$. Here $r_4, r_5$ are the LSB and MSB respectively for the \textsc{select} control, $r_1$ contains the input state, $r_0, r_2$ are the auxiliary $a$-qubit registers for the two block encodings, and $r_3$ and $r_6$ are auxiliary space used to achieve Kronecker products of block encodings and to XOR the bits $b_0$ and $b_1$ respectively. The controlled permutation $P$ is the same as in (Def.~\ref{def:submat_permut}), and places the desired element-wise product into the top-left block (i.e., post-selecting on $|0\rangle_{r_0 r_2 r_3}$). Finally, note that $U_A$ is split into two boxes to clearly show the two registers it acts on, following a visual convention established in \cite{lw_lecture_notes_26}.}
    \label{fig:ex_ewt_select}
\end{figure}

To end this section we also include, for completeness, a separate, simpler technique not mentioned in previous literature for reducing the auxiliary space use for quantum element-wise transforms slightly from $\mathcal{O}(ad + nd + \log{d})$ to $\mathcal{O}(a + n(d - 1) + \log{d})$. We call this a \emph{half-compressed} QEWT (Thm.~\ref{thm:half_qewt}) in comparison to the main result of our work, which uses $\mathcal{O}(a + n\log{d})$ additional space. The benefits of this construction include its simplicity, close-resemblance to the usual compression gadget applied to Kronecker products, and lower depth. Just as in our main construction, the half-compressed QEWT can be slotted in to all known applications of element-wise transforms of block encodings exposited in prior work.

\begin{lemma}[Block encoding Kronecker products] \label{lem:be_kronecker_prod}
    Let $U_A$ an $(\alpha, a, \varepsilon)$ block encoding of an $n$-qubit operator $A$. Then we can construct an $(\alpha^d, a + \log{d}, \delta)$ block encoding of $A^{\otimes d}$ (the $d$-th order Kronecker product of $A$ with itself) using $d$ queries to $U_A$ and total space $nd + a + \log{d}$.

    \begin{proof}
        The construction follows simply from recognition that $A^{\otimes d}$ is the matrix product of $B_{k} \equiv I_{0}\otimes \cdots \otimes A_{k}\otimes  \cdots \otimes I_{d - 1}$, where $A$ appears in the $k$-th register and the identity matrices are on $n$-qubits. Moreover, block encoding the Kronecker product of an operator and the identity matrix is trivial, and requires only allocating additional unused qubits on which the identity matrix `acts.' Once we have reduced to the matrix product setting, the usual compression gadget (Lem.~\ref{lem:comp_gadget}) applies, reducing auxiliary space usage from the naïve $ad$ to $a + \log{d}$ qubits.
    \end{proof}
\end{lemma}

\begin{theorem}[Half-compressed QEWT] \label{thm:half_qewt}
    Let $U_A$ an $(\alpha, a, \varepsilon)$ block encoding of an $n$-qubit operator $A$. Then we can construct an $(\alpha^d, a + n(d-1) + \log{d}, \delta)$ block encoding of $A^{\circ d}$ (the $d$-th order element-wise product of $A$ with itself) using $d$ queries to $U_A$.

    \begin{proof}
        For element-wise powers, the result follows from Lem.~\ref{lem:be_kronecker_prod}, to which we apply the $\mathcal{O}(d)$ pairs of $n$-qubit qubit-wise CNOT gates to locate the copy of $A^{d}$ inside the $d$-th order Kronecker product. Absorbing the unused $(n-1)d$ qubits into the auxiliary space of the final block encoding completes the construction. Proof for the general QEWT then follows by direct application of Thm.~\ref{thm:qewt} to our half-compressed element-wise product gadgets (built into a \textsc{select} unitary by the methods of Lem.~\ref{lem:ew_select_proof}). This immediately improves the space complexity in $d$, albeit not exponentially as in our fully compressed construction. 
    \end{proof}
\end{theorem}

%%%%%%%%%%%%%%%%%%%%%%%%%%%%%%%%%%%%%%%%%%%%%
\section{Error propagation in block encodings} \label{appx:error_prop}

\noindent This appendix has been reserved to offload repetitive proofs in the main paper regarding the handling of errors in input block encodings. We take some time to discuss how handling this error propagation changes with respect to the matrix functions we consider, and rectify some common oversights in prior work. Where appropriate we have referred results quoted in the main theorems to their respective proofs here.

\begin{remark}[Error propagation in various block encoding products] \label{rem:error_prop_prod}
    We take a brief detour to summarize (and in some places rectify) standard arguments used to propagate error from the input block encodings to a quantum algorithm and its output block encoding. 
    
    Let $U_A$ and $U_B$ respectively $(\alpha, a, \varepsilon)$ and $(\beta, b, \delta)$ block encodings of matrices $A$ and $B$ square and of the same dimension. Recall that this means, wlog for $U_A$, that the unitary contains an $\alpha \geq 1$ sub-normalized $\varepsilon$-approximation (in the operator norm\footnote{We discuss elsewhere in the text that, in the context of element-wise products it may sometimes make more sense to use element-wise max norms, or variant Frobenius norms.}) to $A$, i.e.
    \begin{equation}
        \lVert A - \alpha(\langle 0^a|\otimes I) U_A(|0^a\rangle\otimes I)\rVert \leq \varepsilon.
    \end{equation}
    It is well known how to propagate error with respect to the equipped associative algebra (here sums and products)\footnote{We note here that the auxiliary space use for the sum operation involves padding the smaller register among $a, b$, and that for multiplication we can reduce $a + b$ to $\max(a,b) + 1$ using the compression gadget.}
    \begin{alignat}{2}
        (\alpha, a, \varepsilon), (\beta, b, \delta) \;\;&\mapsto\;\; (\alpha + \beta, \max{(a, b)} + 1, \delta + \varepsilon) \quad&&\text{(sum, $A + B$)}\\
        (\alpha, a, \varepsilon), (\beta, b, \delta) \;\;&\mapsto\;\; (\alpha\beta, a + b, \alpha\delta + \beta\varepsilon + \varepsilon\delta) \quad&&\text{(matrix product, $AB$)}
    \end{alignat}
    It turns out that these two rules (and their iterated variants, captured by LCU and the compression gadget respectively) tell us most of what we need to know, as the other common matrix products we consider are mainly sub-matrices, block-versions, or basis-transformed variants of the above manipulations. We recall a subset of the products covered in this work:
    \begin{alignat}{3}
        &\otimes &\quad &\text{\texttt{\textbackslash otimes}} &\quad &\text{(Kronecker/tensor product)},\\
        &\circ &\quad &\text{\texttt{\textbackslash circ}} &\quad &\text{(Element-wise product)},\\
        &\boxtimes &\quad &\text{\texttt{\textbackslash boxtimes}} &\quad &\text{(Tracy--Singh product; Def.~\ref{def:ts_prod})},\\
        &\odot &\quad &\text{\texttt{\textbackslash odot}} &\quad &\text{(Khatri-Rao product; Def.~\ref{def:kr_prod})},\\
        &\ast &\quad &\text{\texttt{\textbackslash ast}} &\quad &\text{(Convolution)}.
    \end{alignat}
    Note that the Kronecker product $(A \otimes B)$ is simply the matrix product of $(A\otimes I)$ and $(I\otimes B)$, which can be built from $(\alpha, a, \varepsilon)$ and $(\beta, b, \delta)$ block encodings of $A$ and $B$ respectively as $\lVert M \otimes N\rVert \leq \lVert M\rVert\,\lVert N\rVert$, and thus propagates error in the same way as the matrix product. Likewise, given that the element-wise product is a sub-matrix of the Kronecker product, its accrued propagation can be no-worse, and thus we can also use the usual matrix product error propagation formula. Moreover, as the TS product is a row- and column-permutation of the Kronecker product, and the KR product is a submatrix of the TS product, these also obey the matrix product error propagation formula. Finally, the 2D cyclic convolution we consider is a basis transformation of the element-wise product, and thus does not change the operator norm.

    To belabor one more point, we explicitly reproduce the error propagation formula in the case of multiplying together $K$ block encoded matrices at once. This treatment has been the subject of a small error since early publications, and spans both of \cite{gslw_19, gychanar_had_prod_25} (separated by at least six years). Here we consider a collection of $K$ block encoded matrices $A_0, A_1, \dots, A_{K - 1}$, assumed conformable to matrix multiplication, and with respective block encoding parameters $(\alpha_k, a_k, \varepsilon_k)$; then we can achieve a block encoding of their product $A_0A_1\cdots A_{K - 1}$ with the following parameters:
    \begin{align}
        (\alpha_k, a_k, \varepsilon_k)_{k \in \{0, \dots, K - 1\}} &\;\;\mapsto\;\; \left(\prod_{k = 0}^{K- 1} \alpha_k, [\ast], \prod_{k = 0}^{K - 1} (\alpha_k + \varepsilon_k) - \prod_{k = 0}^{K - 1} \alpha_k \right),
    \end{align}
    where the auxiliary space usage is denoted $[\ast]$ as it depends on whether a compression scheme is used. A full derivation of this result is contained in Prop.~9.12 of the developing, public lecture notes \cite{lw_lecture_notes_26}. Assuming that $\varepsilon_k \ll \alpha_j$ for all $j, k$, the leading order terms of the approximation error of this expression is what is supplied by a large fraction of extant literature.
\end{remark}

In the interest of a self-contained appendix, we calculate the propagated error in our element-wise product subroutine (Thm.~\ref{thm:iterated_prod}) in the following Prop.~\ref{prop:iter_ewt}, and then the error in our full quantum element-wise transform (Thm.~\ref{thm:qewt}) in the following Prop.~\ref{prop:error_lcu_ewt}. These are different from the error quoted in the non-compressed Thm.~S6 of \cite{gychanar_had_prod_25} for the reasons quoted in Rem.~\ref{rem:error_prop_prod}.

\begin{proposition}[Error in iterated element-wise product of block encodings] \label{prop:iter_ewt}
    Let $A \in \mathbb{C}^{N\times N}$ for $N = 2^n$ and $U_A$ a $(\alpha, a, \varepsilon)$ block encoding of $A$. Then we can construct a $(\alpha^{d}, da + (d - 1)n, \gamma)$ block encoding of $A^{\circ d}$, where $\gamma = (\alpha + \varepsilon)^d - \alpha^d$ (approximately $d\alpha\varepsilon$ in the case of large $\alpha$ and small $\varepsilon$).
    \begin{proof}
        The linear-space result is given in \cite{gychanar_had_prod_25}, and we reproduce their brief argument here. Let $A' = \alpha \langle 0^a | U_A |0^a\rangle$ and $E \equiv A - A'$, which by definition must satisfy $\lVert E \rVert \leq \varepsilon$. Then the error $\delta$ between the ideal element-wise self-product and the achieved one is bounded above according to (in the $d = 2$ case)
            \begin{align}
                \delta
                &=
                \lVert A \circ A - \langle 0^n|P(A'\otimes A')P^\dagger| 0^n\rangle\rVert,\\
                &\leq
                \lVert A \circ A - \langle 0^n|P(A'\otimes A - A'\otimes A + A'\otimes A')P^\dagger| 0^n\rangle\rVert,\\
                &\leq
                \lVert A \circ A - \langle 0^n|P(A'\otimes A)P^\dagger| 0^n\rangle\rVert + \lVert\langle 0^n|P(A'\otimes A - A'\otimes A')P^\dagger| 0^n\rangle\rVert,\\
                &\leq
                \lVert\langle 0^n|P(E \otimes A)P^\dagger| 0^n\rangle\rVert + \lVert\langle 0^n|P(A' \otimes E)P^\dagger| 0^n\rangle\rVert,\\
                &\leq (\alpha + \varepsilon)\varepsilon + \alpha\varepsilon,\\
                &= 2\alpha \varepsilon + \varepsilon^2,
            \end{align}
        where we have just applied the triangle inequality and the definition of $E$. Note carefully\footnote{This is an oversight that appears in the exposition of error propagation for block encoding multiplication in both \cite{gslw_19, gychanar_had_prod_25}, and is carefully amended in the recent (and in-progress) lecture notes of Lin and Wiebe \cite{lw_lecture_notes_26}.} that we have $\lVert A \rVert \leq \alpha + \varepsilon$, and \emph{not} $\lVert A \rVert \leq \alpha$, as our target operator is allowed to violate our block encoding subnormalization constraint by an additive factor of $\varepsilon$. The case for general $d$ follows immediately by an inductive argument, identical to that of Prop.~9.12 in \cite{lw_comp_gadget_19}, or reproduced in Rem.~\ref{rem:error_prop_prod}.
    \end{proof}
\end{proposition}

\begin{proposition}[Error in LCU of element-wise products of block encodings] \label{prop:error_lcu_ewt}
    Let $U_A$ an $(\alpha, a, \varepsilon)$ block encoding of a square matrix $A$ on $n$ qubits. Then the LCU according to a polynomial function $f$ with complex coefficients $c_k$, for $k \in \{1, 2, \dots d\}$ applied to element-wise powers of $A$ yields a block encoding unitary $U_{f^\circ(A)}$ with the following transformed parameters:
    \begin{align}
        &(\alpha, a, \varepsilon)  \;\mapsto\; \Biggl(\gamma, a + n\log{d}, \sum_{k = 1}^{d} \varepsilon_{k} \Biggr),\\
        &\gamma \equiv \sum_{k = 1}^{d} |c_k|, \quad\quad
        \varepsilon_{k} \equiv |c_k| \Big[(\alpha + \varepsilon)^k - \alpha^k\Big].
    \end{align}
    \begin{proof}
        The sub-normalization according to the $\ell_1$-norm $\gamma$ of the coefficients of $f$ is a basic consequence of the LCU construction and the unitarity of $U_{f^{\circ}(A)}$. For each individual element-wise power we have computed $\varepsilon_k$ directly from Prop.~\ref{prop:iter_ewt} and the observations in Rem.~\ref{rem:error_prop_prod}; these errors accrete linearly in the LCU, leading to the resulting total error. The auxiliary space usage follows from our recursive construction in the main Thm.~\ref{thm:qewt}.
    \end{proof}
\end{proposition}

\end{document}